\theoremstyle{plain}
\newtheorem{theorem}{Theorem}[section]
\newtheorem{proposition}[theorem]{Proposition}
\newtheorem{lemma}[theorem]{Lemma}
\newtheorem{corollary}[theorem]{Corollary}
\newtheorem{remark}[theorem]{Remark}
\newtheorem{problem}[theorem]{Problem}
\newtheorem{assumption}[theorem]{Assumption}
\theoremstyle{nonumberplain}
\newenvironment{proof}[1][]
{\ifthenelse{\equal{#1}{}}{\smallskip\noindent\textsl{Proof. }}{\smallskip
\noindent\textsl{Proof #1. }}}{\hfill$\Box$}
\newcommand{\esssup}{\displaystyle{\rm ess}\sup}
\begin{document}
\renewcommand{\thefootnote}{$*$}

\title{Maximization of Non-Concave Utility Functions in Discrete-Time Financial Market Models
\footnotemark}

\footnotetext{{\sl Abbreviated title:} Non-concave utility maximization in discrete-time\\
{\sl AMS 2010 subject classification.} Primary 93E20, 91B70, 91B16 ; secondary
 91G10, 28B20\\
{\sl Key words and phrases.}  Non-concave utility functions, optimal investment, asymptotic
elasticity.\\
}

\author{Laurence Carassus\\ LMR, Universit\'e Reims Champagne-Ardenne \and Mikl\'os R\'asonyi\\ MTA Alfr\'ed R\'enyi Institute of Mathematics, Budapest}

\date{\today}

\maketitle

\begin{abstract}
This paper investigates the problem of maximizing
expected terminal utility in a (generically incomplete) discrete-time financial market
model with finite time horizon. In contrast to the standard setting,
a possibly non-concave utility function $U$ is considered, with domain of definition
$\mathbb{R}$. Simple conditions are presented which guarantee the
existence of an optimal strategy for the problem. In particular, the asymptotic
elasticity of $U$ plays a decisive role: existence can be shown when it is strictly greater at $-\infty$ than at
$+\infty$.
\end{abstract}

\section{Introduction}

The problem of maximizing expected utility is one of the most
significant issues in mathematical finance. To the best of our knowledge, the
first studies can be attributed to \cite{merton} and
\cite{samuelson}. In mathematical terms, $EU(X)$ needs to be
maximized in $X$, where $U$ is a concave increasing function and $X$
runs over values of admissible portfolios.  For general existence
results, we refer to \cite{RS05} in a discrete time setting and to
\cite{KS99} and \cite{S01} in continuous time models, see also
\cite{biagini-frittelli}, \cite{owen-zitkovic} and the references
therein for later developments.

Despite its ongoing success, the expected utility paradigm has been contested (see e.g. \cite{a53} and \cite{kt}). In particular, \cite{tk} suggested,
based on experimental evidence, that
the utility function should not be concave but rather ``$S$-shaped'',
i.e. $U(x)=U_+(x-B)$, $x\geq B $; $U(x)=-U_-(-(x-B))$, $x<B$ where $U_{\pm}:\mathbb{R}_+\to\mathbb{R}_+$ are
concave and increasing functions and $B\in\mathbb{R}$ is some reference point of the investor.

In this article we propose to consider a general, possibly non-concave utility function defined on the real line
(that can be ``$S$-shaped'' but our
results apply to a broader class of utility functions e.g. to piecewise concave ones).
As the objective function is non-concave, the mathematical treatment becomes difficult and only few related results
can be found in the literature.

Some authors have studied the rather specific case of
continuous-time complete markets (see \cite{cp} for piecewise
concave,  and \cite{bkp} for S-shaped utility functions or \cite{jz}
and \cite{cd}, where distortions on the objective probability are
considered) or one-period models (see \cite{bg} and \cite{hz}). See
also the recent paper of \cite{r12} in which utility maximisation is
carried out on the set of claims whose price is below a given constant for a fixed pricing measure.
Note that \cite{bkp}, \cite{cp},
\cite{cd} and \cite{r12} consider utility functions defined on the
positive half-line only, which leads to a considerably simpler
mathematical problem.

In the present article a general, generically incomplete, discrete-time
financial market model with finite horizon is considered together
with a possibly non-concave utility function $U$ defined on the real
line. In our recent paper \cite{cr11}, we study a similar framework
but with distortions on the objective probability. Under conditions
similar to Assumption \ref{ae} of the present paper,  a well-posedness result (i.e.
the objective function is finite) is established but the existence
of optimal strategies requires a particular structure for the information flow: the filtration should either be rich enough or there should exist an external source of randomness for the
strategies. In this setup it turns out, in contrast to the usual maximization of
expected utility problem, that an investor distorting the objective
probability may increase her satisfaction by exploiting randomized
trading strategies. So the existence result of \cite{cr11} is not
pertinent in the present setting without distortions and, to the
best of our knowledge, Theorem \ref{main}, Corollary
\ref{mortonhall}, Propositions \ref{camarche} and \ref{camarche1} below are the first
existence results for optimal portfolios maximizing expected
non-concave utility in an incomplete discrete-time model of a financial market.

The decisive sufficient conditions for existence are formulated below in
terms of the ``asymptotic elasticity'' of the function $U$ at
$\pm\infty$. This concept surged in the concave case, see
\cite{ck}, \cite{karatzas_et_al}, \cite{KS99} and \cite{S01}, which are the
early references. Let's denote by $u(x)$ the value function starting
from an initial wealth $x$. In \cite{KS99} it is showed, in a
general semimartingale model, that if $U$ (i) is strictly concave,
smooth and defined on $(0,+\infty)$, (ii) is such that there exists
$x$ satisfying $u(x)<\infty$ and (iii) has an asymptotic elasticity
at $+\infty$, called $AE_+(U$), strictly less than $1$, then an
optimal portfolio for the utility maximization problem
exists. If $U$ is defined over the whole real axis,
\cite{S01} showed existence assuming\footnote{A condition on the so-called dual optimizer is also
imposed and (ii) is replaced by the existence of some $x$ satisfying
$u(x)<U(\infty)$.}, in addition, that the
asymptotic elasticity of $U$ at $-\infty$, called $AE_-(U)$, is
strictly greater than $1$. This condition being close to
necessary (see section 3 of \cite{S01}), it has been generally
accepted as the standard assumption in continuous-time models, see
e.g. \cite{owen-zitkovic}. Note, however, that in a discrete-time
setting, when $U$ is concave and defined on $\mathbb{R}$, any of the two
assumptions $AE_+(U)<1$ or $AE_-(U)>1$ on its own is sufficient to
guarantee the existence of an optimal strategy (see \cite{RS05}).

In the present study a general continuous, increasing and possibly non-concave function $U$, defined on $\mathbb{R}$,
is considered and
we will assert the existence of an optimal strategy whenever $AE_+(U) <  AE_-(U)$,
where $AE_{\pm}(U)$ is an appropriate extension of the asymptotic elasticity concept
to non-differentiable  (and non-concave) functions. This generalizes results of \cite{RS05}.
Note that some conditions ensuring well-posedness  are also necessary to stipulate. We present easily verifiable
integrability assumptions to this end.

The key idea, as in \cite{RS05}, is to prove that strategies must satisfy
certain a priori bounds in order to be optimal and then one can use compactness arguments. A number of measure-theoretic
issues also need to be dealt with.

The paper is organized as follows: in section \ref{se2} we introduce our setup and state our main result; section
\ref{se3} establishes the existence of an optimal strategy for the one-step case. In section
\ref{dyn} we prove our main result, using dynamic programming, and provide easily verifiable sufficient
conditions for the market model that ensure well-posedness as well as the existence of an optimal strategy. Section \ref{se5} concludes, section \ref{qpp} collects some
useful measure-theoretic facts.

\section{Problem formulation}\label{se2}

Let $(\Omega,\Im, (\mathcal{F}_t)_{0\leq t\leq T},P)$ be a
discrete-time filtered probability space with time horizon
$T\in\mathbb{N}$. We assume that the sigma-algebras occurring in this
paper contain all $P$-zero sets.

Let $\{S_t,\ 0\leq t\leq T\}$ be a $d$-dimensional adapted process
representing the  price of $d$ risky securities in the
financial market in consideration.
There exists also a riskless
asset for which we assume a price constant $1$, for the sake of simplicity. Without this assumption, all the developments
below could be carried out using discounted prices.
The notation $\Delta
S_t:=S_t-S_{t-1}$ will often be used.
If $x,y\in\mathbb{R}^d$ then
the concatenation $xy$ stands for their scalar product. The symbol $|\cdot|$ denotes the Euclidean norm
on $\mathbb{R}^d$ (or on $\mathbb{R})$.

In what follows, $\Xi_t$ will denote the set of
$\mathcal{F}_t$-measurable $d$-dimensional random variables. Trading
strategies are represented by $d$-dimensional predictable
processes $(\phi_t)_{1\leq t\leq T}$, where $\phi_t^i$ denotes the
investor's holdings in asset $i$ at time $t$; predictability means
that $\phi_t\in\Xi_{t-1}$. The family of all predictable trading
strategies is denoted by $\Phi$.

From now on the positive (resp. negative) part of some number or
random variable $X$ is denoted by $X^+$ (resp. $X^-$). We will also write $f^{\pm}(X)$ for $\left(f(X)\right)^{\pm}$ for any random
variable $X$ and (possibly random) function $f$. We will
consider quasi-integrable random variables $X$, i.e. for any
sigma-field $\mathcal{H} \subset \Im$, $E(X|\mathcal{H})$ will be
defined by
$E(X|\mathcal{H})=E(X^+|\mathcal{H})-E(X^-|\mathcal{H})$, in a generalized sense, as soon as either $E(X^-|\mathcal{H})<\infty$ a.s. or
$E(X^+|\mathcal{H})<\infty$ a.s. This
implies that $E(X|\mathcal{H})$ can possibly be infinite. In
particular, $EX$ is defined (but can be infinity) whenever $EX^+$ or
$EX^-$ is finite. See section \ref{qpp} for more details on
generalized conditional expectations.

We assume that trading is self-financing. As the riskless asset's price is constant $1$, the value at time $t$ of a portfolio $\phi$ starting from
initial capital $x\in\mathbb{R}$ is given by
$$
V^{x,\phi}_t=x+\sum_{i=1}^t  \phi_i \Delta S_i.
$$

The following absence of arbitrage condition is standard, it is
equivalent to the existence of a risk-neutral measure in discrete-time markets with finite horizon, see e.g. \cite{dmw}.

\medskip

(NA) {\em If $V^{0,\phi}_T\geq 0$ a.s. for some $\phi \in\Phi$ then
$V^{0,\phi}_T=0$ a.s.}

\medskip

Let $D_t(\omega) \subset \mathbb{R}^d$ be the smallest affine
subspace containing the support of the (regular) conditional
distribution of $\Delta S_t$ with respect to $\mathcal{F}_{t-1}$,
i.e. $P(\Delta S_t \in \cdot\vert\mathcal{F}_{t-1})(\omega)$. Under (NA), it
is a non-empty $\mathcal{F}_{t-1}$-measurable random subspace, see Proposition \ref{randomD} below. If $D_t = \mathbb{R}^d$ then,
intuitively, there are no redundant assets. Otherwise, one may
always replace $\phi_t \in \Xi_{t-1}$ by its orthogonal projection
$\phi'_t$ on $D_t$ without changing the portfolio value since a.s.
$\phi_t \Delta S_t=\phi'_t \Delta S_t$, see Remark
\ref{haromcsillag} below as well as Remark 9.1 of \cite{fs}.

We will need a ``quantitative'' characterization of (NA).
From \cite{RS05} (see also \cite{JS98}), we know that:
\begin{proposition}\label{karakter}
(NA) implies the existence of $\mathcal{F}_t$-measurable random variables
$\delta_t,\kappa_t>0$
such that for all $\xi \in {\Xi}_t$ with  $\xi \in D_{t+1}$ a.s.:
\begin{equation}\label{valaki}
P( \xi \Delta S_{t+1} <-\delta_t |\xi| \vert\mathcal{F}_t)\geq \kappa_t
\end{equation}
holds almost surely; for all $0\leq t\leq T-1$.
\hfill $\Box$
\end{proposition}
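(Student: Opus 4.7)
The plan is to establish the estimate pathwise on a regular conditional distribution $Q(\omega,\cdot)$ of $\Delta S_{t+1}$ given $\mathcal{F}_t$, and then to realize the resulting bounds as $\mathcal{F}_t$-measurable random variables. By positive homogeneity in $\xi$ it suffices to prove the inequality uniformly over unit vectors; set $K(\omega):=\{v\in D_{t+1}(\omega):|v|=1\}$, a random compact set. I would first show the pathwise content of (NA): for $P$-a.e.\ $\omega$, every $v\in K(\omega)$ satisfies $Q(\omega,\{y\in\mathbb{R}^d:vy<0\})>0$. If this fails on an $\mathcal{F}_t$-measurable set $B$ of positive probability, a measurable selection on the random closed set of bad unit vectors (using that $D_{t+1}$ is an $\mathcal{F}_t$-measurable random subspace by Proposition~\ref{randomD}) yields $\eta\in\Xi_t$ with $|\eta|=1$, $\eta\in D_{t+1}$ on $B$, and $Q(\omega,\{y:\eta y<0\})=0$ there, so $\eta\Delta S_{t+1}\ge 0$ a.s.\ on $B$. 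Moreover, $\eta\Delta S_{t+1}=0$ cannot hold a.s.\ on $B$: otherwise $\eta(\omega)$ would be orthogonal to the whole affine hull $D_{t+1}(\omega)$, in particular to itself, contradicting $|\eta|=1$. Hence the predictable strategy $\phi_{t+1}:=\eta\mathbf{1}_B$ padded with zeros at the other time steps violates (NA).

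For such a good $\omega$, set $g_n(v):=Q(\omega,\{y:vy<-1/n\})$. If $v_k\to v$ and $vy<-1/n$ strictly, then $v_ky<-1/n$ eventually, so Fatou yields $\liminf_k g_n(v_k)\ge g_n(v)$, i.e.\ $g_n$ is lower semi-continuous. Since $g_n(v)\uparrow Q(\omega,\{y:vy<0\})>0$ on $K(\omega)$, the open sets $\{g_n>0\}$ cover the compact set $K(\omega)$, and a finite subcover provides an integer $N(\omega)$ with $g_{N(\omega)}>0$ throughout $K(\omega)$; by lower semi-continuity the infimum is attained, so $\kappa(\omega):=\min_{v\in K(\omega)}g_{N(\omega)}(v)>0$. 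Taking $\delta(\omega):=1/N(\omega)$ gives \eqref{valaki} pathwise.

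The main obstacle is producing genuine $\mathcal{F}_t$-measurable versions of $\delta_t$ and $\kappa_t$. The map $(\omega,v)\mapsto g_n(v,\omega)$ is jointly measurable because $Q$ is a kernel, and $K$ is an $\mathcal{F}_t$-measurable random closed set by Proposition~\ref{randomD}. Choosing an $\mathcal{F}_t$-measurable Castaing representation $(v_k)_{k\ge 1}$ of $K$ and exploiting lower semi-continuity of $g_n(\cdot,\omega)$ to replace the infimum over $K(\omega)$ by the countable infimum along this dense family, the function
$$G_n(\omega):=\inf_{k\ge 1}g_n(v_k(\omega),\omega)=\min_{v\in K(\omega)}g_n(v,\omega)$$
is $\mathcal{F}_t$-measurable. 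Setting $n_t(\omega):=\inf\{n\ge 1:G_n(\omega)>0\}$ (a.s.\ finite by the previous step), $\delta_t:=1/n_t$ and $\kappa_t:=G_{n_t}$ are strictly positive and $\mathcal{F}_t$-measurable, and satisfy \eqref{valaki} by construction.
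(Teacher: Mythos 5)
Your overall strategy---establishing the bound pathwise on the regular conditional law $Q(\omega,\cdot)$, restricting by homogeneity to the random unit sphere $K(\omega)\subset D_{t+1}(\omega)$, and exploiting compactness---is essentially the one in the references the paper cites for this result. The (NA)-to-pathwise-positivity step is sound: the set of ``bad'' unit vectors is a measurable random closed set (it is the zero sublevel set of the l.s.c.\ map $v\mapsto Q(\omega,\{y:vy<0\})$), measurable selection produces $\eta$, and the orthogonality argument is valid because under (NA) $D_{t+1}$ is a \emph{linear} subspace (Proposition~\ref{randomD}), so $\eta\perp D_{t+1}$ and $\eta\in D_{t+1}$, $|\eta|=1$, is indeed contradictory. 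The compactness/finite-subcover step using the l.s.c.\ functions $g_n$ is also correct.

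There is, however, a genuine gap in the measurability step. You invoke \emph{lower} semi-continuity of $g_n(\cdot,\omega)$ to claim $\inf_{k}g_n(v_k(\omega),\omega)=\min_{v\in K(\omega)}g_n(v,\omega)$ along a Castaing representation $(v_k)$. That identity requires \emph{upper} semi-continuity: for a dense sequence $v_{k_j}\to v$, u.s.c.\ gives $\inf_k g_n(v_k)\le\limsup_j g_n(v_{k_j})\le g_n(v)$, whence $\inf_k g_n(v_k)\le\inf_K g_n$; lower semi-continuity gives only the reverse inequality $\liminf_j g_n(v_{k_j})\ge g_n(v)$, which combined with the trivial $\inf_k g_n(v_k)\ge\inf_K g_n$ does not yield equality. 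A l.s.c.\ function equal to $1$ on $K$ except for a single dip to $0$ at a point missed by the countable dense family shows the countable infimum can strictly exceed the true minimum. Since $g_n(v)=Q(\omega,\{y:vy<-1/n\})$ is the mass of an \emph{open} half-space, it is only l.s.c.\ in $v$ when $Q$ charges hyperplanes, so your $G_n$ may strictly exceed $\min_K g_n$ and then $\kappa_t:=G_{n_t}$ is not a valid lower bound for $P(\xi\Delta S_{t+1}<-\delta_t\mid\mathcal{F}_t)$. Two standard repairs: (i) drop the Castaing argument and use the projection theorem (the filtration is $P$-complete), which makes $\omega\mapsto\min_{v\in K(\omega)}g_n(v,\omega)$ $\mathcal{F}_t$-measurable with no regularity in $v$ required; or (ii) introduce the u.s.c.\ companion $h_n(v):=Q(\omega,\{y:vy\le -1/n\})$, for which the Castaing identity $\inf_k h_n(v_k)=\inf_K h_n$ does hold, observe $g_N\le h_N\le g_{N+1}$, and set $\delta_t:=1/(N+1)$, $\kappa_t:=\inf_K h_N$.
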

\begin{remark}
The characterization of (NA) given by \eqref{valaki}
works only for $\xi \in D_{t+1}$ a.s. This is the reason why we will have to project
the strategy $\phi_{t+1} \in \Xi_{t}$  onto $D_{t+1}$ in our proofs. We refer again to Remark \ref{haromcsillag} below.
\end{remark}
We now present the conditions on $U$ which allow to assert the existence of an optimal strategy.
The main point here is that we do not assume concavity of $U$.

\begin{assumption}\label{ae}
The utility function $U:\mathbb{R}\to\mathbb{R}$ is non-decreasing, continuous and
$U(0)=0$. There
exist
$\underline{x}>0$, $\overline{x}>0$, $c\geq 0$, $\overline{\gamma}>0$ and $\underline{\gamma}>0$ such that
$\boxed{\overline{\gamma}<\underline{\gamma}}$ and for any
$\lambda\geq 1$,
\begin{eqnarray}
\label{ae+}
U(\lambda x) & \leq & \lambda^{\overline{\gamma}}U(x)+c \mbox{ for } x\geq \overline{x},\\
\label{ae-}
U(\lambda x) & \leq & \lambda^{\underline{\gamma}}U(x) \mbox{ for } x\leq -\underline{x},\\
\label{minus}
U(-\underline{x}) & < & 0. 
\end{eqnarray}
\end{assumption}

\begin{remark}
A typical example is given by $U(x)=\tilde{U}(x)-\tilde{U}(0)$, where
\begin{eqnarray*}
\tilde{U}(x) &=& \left\{
         \begin{array}{ll}
         U_+(x-B), & \;\;\; x \geq B \\
         -U_-(-x +B), & \;\;\; x < B,
         \end{array}
         \right.
\end{eqnarray*}
and $U_+(x)=a_+x^{\overline{\gamma}}$, $U_-(x)=a_-x^{\underline{\gamma}}$ with $a_{\pm}>0$, $B \in \mathbb{R}$ and
$0<\overline{\gamma}<\underline{\gamma}$.
\end{remark}
We could accommodate, at the price of more technical assumptions and complications, a random utility function.
This means that we could treat a random reference (benchmark) point $B$
as well and consider the problem of maximising $EU(V^{x,\phi}_T-B)$, but we refrain from doing so.
\begin{remark}
\label{remAE} In this remark, we comment on various items of
Assumption \ref{ae}. Fixing $U(0)=0$ is mere convenience. If $U$ is
strictly increasing then \eqref{minus} clearly follows from $U(0)=0$
and $\underline{x}>0$.

When $U$ is concave and differentiable, the ``asymptotic
elasticity'' of $U$ at $\pm \infty$ is defined as
\begin{eqnarray}\label{gonzalo1}
AE_+(U) & =& \limsup_{x\to\infty} \frac{U'(x)x}{U(x)},\\
AE_-(U)& = & \liminf_{x\to -\infty}\frac{U'(x)x}{U(x)},
\label{gonzalo2}
\end{eqnarray}
see \cite{KS99}, \cite{S01} and the references therein.

Assume for a moment that $c=0$.
It is shown in Lemma 6.3 of
\cite{KS99} that $AE_+(U)\leq \overline{\gamma}$ is equivalent to
\eqref{ae+}. Similarly, $AE_-(U)\geq\underline{\gamma}$ is
equivalent to \eqref{ae-}.
Note that the proof of Lemma 6.3 of
\cite{KS99} does not use the concavity of $U$. So if $U$ is
differentiable then \eqref{gonzalo1}, \eqref{gonzalo2} make sense and conditions \eqref{ae+} and \eqref{ae-} are equivalent
to $AE_+(U) \leq \overline{\gamma}$ and $\underline{\gamma} \leq
AE_-(U)$, respectively. It seems reasonable to extend the
definitions of $AE_+(U)$ (resp. $AE_-(U)$) to possibly
non-differentiable $U$ as the infimum (resp. supremum) of
$\overline{\gamma}$ (resp. $\underline{\gamma}$) such that
\eqref{ae+}  (resp. \eqref{ae-}) holds. Doing so we may see (looking
at Assumption \ref{ae}) that our paper asserts the existence of an
optimal strategy whenever there exist
$\overline{\gamma},\underline{\gamma}$ such that
\begin{eqnarray}
\label{otite}
AE_+(U) \leq \overline{\gamma}<\underline{\gamma} \leq AE_-(U).
\end{eqnarray}
The case $c>0$ is there only to handle bounded from above utility functions.
In the case of
a concave function $U$, it is easy to see that $U(\infty)<\infty$
implies that $AE_+(U)=0$ but this is not necessarily so for non-concave $U$.

Clearly, \eqref{otite} resembles the key condition in
\cite{S01}, namely $AE_+(U)<1<AE_-(U)$. Note that \cite{KS99}
requires only the condition $AE_+(U)<1$ since they are dealing with functions $U$ defined on
$(0,\infty)$. The condition of \cite{RS05}, in a discrete-time
setting like ours, is either $AE_+(U)<1$ or $1<AE_-(U)$.
When $U$ is concave, \eqref{ae+} and \eqref{ae-} always hold with $\overline{\gamma}=\underline{\gamma}=1$, i.e. $AE_+(U) \leq 1  \leq AE_-(U)$ (see Lemma 6.2 in \cite{KS99} and
the discussion after Definition 1.4 in \cite{S01}) so our paper generalizes \cite{RS05} to $U$ that is not necessarily concave.

We finish this remark with a comment on the condition $\overline{\gamma}<\underline{\gamma}$.  It is, in some sense, minimal as one can see from
the following example. Assume that
\begin{eqnarray*}
{U}(x) &=& \left\{
         \begin{array}{ll}
        x^{\alpha}, & \;\;\; x \geq 0 \\
         -(-x)^{\beta}, & \;\;\; x < 0,
         \end{array}
         \right.
\end{eqnarray*}
with
$\alpha \geq \beta$. Here one has $\overline{\gamma}=\alpha$ and $\underline{\gamma}=\beta$. Assume that
$S_0=0$, $\Delta S_1=\pm 1$ with
probabilities $p,1-p$ for some $0<p<1$. Then one gets
$$E(U(0+n\Delta S_1))=p n^{\alpha} -(1-p) n^{\beta}.$$
If $\alpha> \beta$, choose $p=1/2$ and $E(U(n\Delta S_1))$ goes to $\infty$ as $n\to\infty$.
If $\alpha= \beta$, choose $p>1/2$ and $E(U(n\Delta S_1))=n^{\alpha}(2p -1)$ goes to $\infty$ again as $n\to\infty$.
So in the case $\overline{\gamma}\geq \underline{\gamma}$ the given problem immediately becomes ill-posed,
even in this very simple example.
\end{remark}

\begin{remark} As it becomes clear from the proof,
one could weaken \eqref{ae+} and \eqref{ae-} in Assumption \ref{ae} above to \eqref{egyedik} and \eqref{kettedik} below. These latter conditions, however,
seem to be only marginally weaker than \eqref{ae+}, \eqref{ae-} and they
lack a natural mathematical or economical interpretation while \eqref{ae+} and \eqref{ae-} show a nice consistency
with the well-studied concave case, as pointed out in the previous Remark.
\end{remark}

\begin{problem}\label{leprobleme}
In this paper, we are dealing with maximizing the expected terminal utility
$EU(V^{x,\phi}_T)$ from initial endowment $x$. Namely, we consider
\begin{eqnarray*}
u(x)=\sup_{\phi\in\Phi(U,x)}EU(V^{x,\phi}_T),
\end{eqnarray*}
where $\Phi(U,x)$ is the set of strategies $\phi\in\Phi$
for which
$E[U(V^{x,\phi}_T)]$ exists and is finite.
\end{problem}

\begin{remark}
In \cite{S01} the existence of optimal strategies is investigated on some
enlargement of the class of strategies with $V^{x,\phi}_t$ bounded from below.
In a discrete time setup such constraints are not suitable. For example, if $T=1$ and $\Delta S_1$ follows the standard Gaussian law then only the strategy $\phi=0$
leads to $V^{x,\phi}_1$ bounded from below.
So here we choose to work on a much larger class,
where we only require that $E[U(V^{x,\phi}_T)]$ exists and  is finite. We will see that the price to pay is in terms of integrability:
without further assumptions our candidate for optimal solution $\phi^*$ will not necessarily stay in this class,
see the formulation of Theorem \ref{main} below.
\end{remark}

We will use a dynamic programming procedure
and, to this end, we have to prove that the associated random functions are well defined  and a.s. finite
under appropriate integrability conditions. Namely we prove in Proposition \ref{propre1} that if $U:\mathbb{R}\to\mathbb{R}$ is non-decreasing and left-continuous and if we
assume that for all $1\leq t\leq T$, $x \in \mathbb{R}$ and $y\in\mathbb{R}^d$
\begin{eqnarray*}
E(U^-(x+y\Delta S_{t})|\mathcal{F}_{t-1}) <  +\infty
\end{eqnarray*}
holds true a.s., then the following random functions are well-defined recursively, for all $x\in\mathbb{R}$ (we omit dependence on
$\omega\in\Omega$ in the notation):
\begin{eqnarray}\label{tuskes}
U_T(x) &:= & U(x),\\
U_{t-1}(x) & := & \esssup_{\xi\in \Xi_{t-1}}E(U_{t}(x+ \xi\Delta S_{t})\vert\mathcal{F}_{t-1})
\mbox{ a.s.,  for } 1\leq t \leq T,\label{vanek}
\end{eqnarray}
and one can choose $(-\infty,+\infty]$-valued versions which are a.s. non-decreasing and left-continuous (in $x$).

In order to have a
well-posed problem, we impose Assumption \ref{bellmann} below.
\begin{assumption}\label{bellmann}
For all $1\leq t\leq T$, $x \in \mathbb{R}$ and $y\in\mathbb{R}^d$
we assume that
\begin{eqnarray}
\label{negaU0}
E(U^-(x+y\Delta S_{t})|\mathcal{F}_{t-1})& < & +\infty\mbox{ a.s.}\\
\label{lasagna}
EU_0(x) & < & + \infty.
\end{eqnarray}
\end{assumption}
Note that by Proposition \ref{propre1},
one can state \eqref{lasagna}: $U_0$ is well defined under \eqref{negaU0} assuming only that $U$ is non-decreasing and continuous.
\begin{remark}
\label{rembellman}
\rm{In Assumption \ref{bellmann}, condition \eqref{lasagna} is not easy to verify.
We propose in Proposition \ref{camarche} a fairly general setup where it is satisfied,
see also Corollary \ref{mortonhall} and Proposition \ref{camarche1}.
In contrast, \eqref{negaU0} is a straightforward integrability condition on $S$. For instance, if
$U(x) \geq -m(1+|x|^p)$ for some $p,m>0$ and  $E|\Delta S_{t}|^p<\infty$ for all $t\geq 1$ then \eqref{negaU0} holds.}
\end{remark}

We are now able to state our main result.
\begin{theorem}\label{main}
Let $U$ satisfy Assumption \ref{ae}
and $S$ satisfy the (NA) condition. Let Assumption \ref{bellmann} hold.
Then one can choose non-decreasing, continuous in $x\in\mathbb{R}$ and $\mathcal{F}_t$-measurable
in $\omega\in\Omega$ versions of the random functions $U_t$ defined in \eqref{tuskes} and \eqref{vanek}.
Furthermore, there exists a ``one-step optimal'' strategy $\tilde{\xi}_t(x)\in\Phi$ satisfying, for all $t=1,\ldots,T$, and for each $x\in\mathbb{R}$,
$$
E(U_{t}(x+\tilde{\xi}_t(x) \Delta S_{t})\vert\mathcal{F}_{t-1})=U_{t-1}(x)\mbox{ a.s.}
$$
Using these $\tilde{\xi}_{\cdot}(\cdot)$, we define recursively:
$$
\phi^*_1:=\tilde{\xi}_1(x),\quad \phi^*_{t}:=
\tilde{\xi}_{t}\left(x+\sum_{j=1}^{t-1} \phi^*_j \Delta S_j\right),\
1\leq t\leq T.
$$
If, furthermore, $EU(V^{x,\phi^*}_T)$ exists then $\phi^*\in\Phi(U,x)$ and
$\phi^*$ is a solution of Problem \ref{leprobleme}.
\end{theorem}

We present the proof of Theorem \ref{main}
in section \ref{dyn}. To demonstrate its applicability, we state a simple corollary below.
Later we will also provide a quite general setup where Theorem \ref{main} applies and where
$EU(V^{x,\phi^*}_T)$ can be shown to exist (see Propositions \ref{camarche} and \ref{camarche1} in section \ref{dyn}).

\begin{corollary}\label{mortonhall}
Assume that (NA) holds and
the utility function $U:\mathbb{R}\to\mathbb{R}$ is strictly increasing, continuous, bounded from above with
$U(0)=0$ and satisfies \eqref{negaU0}. Assume also that there
exist
$\underline{x}>0$ and $\underline{\gamma}>0$ such that for any $\lambda\geq 1$,
$
U(\lambda x)  \leq  \lambda^{\underline{\gamma}}U(x) \mbox{ for } x\leq -\underline{x}.
$
Then defining $\phi^*$ as in Theorem \ref{main}, we get that $\phi^*\in\Phi(U,x)$ and
$\phi^*$ is a solution of Problem \ref{leprobleme}.
\end{corollary}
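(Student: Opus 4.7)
The plan is to reduce the corollary to Theorem \ref{main} by verifying Assumptions \ref{ae} and \ref{bellmann} and then handling the existence of $EU(V^{x,\phi^*}_T)$. First I would check Assumption \ref{ae}. The conditions that $U$ is non-decreasing, continuous, and satisfies $U(0)=0$ are immediate, and \eqref{ae-} is part of the hypothesis. Strict monotonicity together with $U(0)=0$ and $\underline{x}>0$ gives $U(-\underline{x})<U(0)=0$, which is \eqref{minus}. For \eqref{ae+} I would use the observation already flagged in Remark \ref{remAE}: setting $M:=\sup_{x\in\mathbb{R}}U(x)<\infty$, picking any $\overline{\gamma}\in(0,\underline{\gamma})$, any $\overline{x}>0$, and $c:=M$, one has $U(x)\geq 0$ for $x\geq\overline{x}$ (since $U$ is strictly increasing with $U(0)=0$), hence $\lambda^{\overline{\gamma}}U(x)+c\geq M\geq U(\lambda x)$ for all $\lambda\geq 1$. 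In particular the key inequality $\overline{\gamma}<\underline{\gamma}$ can be arranged.

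Next I would check Assumption \ref{bellmann}. Condition \eqref{negaU0} is directly among the hypotheses, so only \eqref{lasagna} requires work. I claim $U_t(x)\leq M$ a.s.\ for every $0\leq t\leq T$ and every $x\in\mathbb{R}$, proved by backward induction. The base case $U_T=U\leq M$ is clear. Given the bound at $t+1$, Lemma \ref{propre} ensures that $E(U_{t+1}(x+\xi\Delta S_{t+1})\vert\mathcal{F}_t)$ is well-defined for every $\xi\in\Xi_t$; since $U_{t+1}^+\leq M$ and $E(U_{t+1}^-\vert\mathcal{F}_t)\geq 0$, this conditional expectation is at most $M$, and the essential supremum inherits the bound, giving $U_t(x)\leq M$. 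Taking $t=0$ yields $EU_0(x)\leq M<\infty$, which is \eqref{lasagna}.

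Having verified both assumptions, Theorem \ref{main} yields a one-step optimal $\phi^*\in\Phi$. To close the argument I must show that $EU(V^{x,\phi^*}_T)$ exists so that the second conclusion of Theorem \ref{main} applies. This is immediate: $U(V^{x,\phi^*}_T)^+\leq M$, so $EU^+(V^{x,\phi^*}_T)<\infty$ and $EU(V^{x,\phi^*}_T)$ is well-defined in the generalized sense (possibly equal to $-\infty$ a priori). Invoking the ``if furthermore'' clause of Theorem \ref{main} then gives $\phi^*\in\Phi(U,x)$ (which by definition of $\Phi(U,x)$ automatically rules out $-\infty$) and the desired optimality identity. There is no serious obstacle here; the proof is essentially bookkeeping, and the only mildly delicate point is making the backward induction $U_t\leq M$ rigorous within the generalized-conditional-expectation framework of Lemma \ref{propre}.
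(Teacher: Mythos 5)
Your proposal is correct and takes the same route as the paper: verify Assumption \ref{ae} (using boundedness from above to get \eqref{ae+} with any $\overline{\gamma}<\underline{\gamma}$) and Assumption \ref{bellmann} (where \eqref{lasagna} follows since $U_t\leq\sup U$ propagates through the backward recursion), and then note $EU(V^{x,\phi^*}_T)$ exists because $U$ is bounded above, so Theorem \ref{main} applies. You have merely spelled out the induction that the paper dismisses as "trivial," which is harmless and arguably clearer.
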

\begin{proof}
As $U$ is bounded from above, \eqref{lasagna} and thus Assumption \ref{bellmann} trivially holds. So do \eqref{minus} and
\eqref{ae+} (with, say, $\overline{\gamma}:=\underline{\gamma}/2$, $\overline{x}:=1$ and $c$ any positive upper
bound for $U(\infty)$).
Hence Assumption
\ref{ae} is true. Since $U$ is bounded from above, $E[U(V^{x,\phi^*}_T)]$ exists
automatically. Now Corollary \ref{mortonhall} follows from Theorem \ref{main}.
\end{proof}

\begin{remark}
In the absence of a concavity assumption on $U$ we cannot expect to have a \emph{unique}
optimal strategy.
\end{remark}

\section{Existence of an optimal strategy for the one-step case}\label{se3}
\label{onestep}
First we prove the existence of an optimal strategy in the case of a one-step model. To this aim we introduce (i)
a random function $V$, (ii) two $\sigma$-algebras $\mathcal{H}\subset\mathcal{F}$ containing $P$-zero sets,
(iii) a $d$-dimensional $\mathcal{F}$-measurable random variable  $Y$.

Let $\Xi$ denote the family of $\mathcal{H}$-measurable $d$-dimensional random variables.
The aim of this section is to study $\mathrm{ess.}\sup_{\xi\in \Xi}E(V(x+\xi Y)\vert\mathcal{H})$. For each $x$, let us fix an
arbitrary version $v(x)=v(\omega,x)$ of this essential supremum.

We prove in Proposition \ref{candy} that, under suitable assumptions, there is an optimiser $\tilde{\xi}(x)$
which attains the essential supremum in the definition of $v(x)$, i.e.
\begin{eqnarray}\label{infune}
v(x)= E(V(x+\tilde{\xi}(x) Y)\vert\mathcal{H}).
\end{eqnarray}

In Proposition \ref{candy}, we even prove that the same optimal solution $\tilde{\xi}(H)$ applies if we replace $x$ by any scalar
$\mathcal{H}$-measurable random variable $H$ in \eqref{infune}.

This setting will be applied in section \ref{dyn} with the choice
$\mathcal{H}=\mathcal{F}_{t-1}, \mathcal{F}=\mathcal{F}_{t},  Y=\Delta S_t$; $V(x)$
will be the maximal conditional expected
utility from capital $x$ if trading begins at time $t$, i.e. $V=U_t$. In this case, the function $v(x)$ will represent the maximal expected utility from capital $x$ if trading begins at time
$t-1$.

We start with a useful Lemma.
\begin{lemma}\label{determ}
Let $V(\omega,x)$ be a function from
$\Omega \times \mathbb{R}$ to $[-\infty,\infty]$ such that for almost
all $\omega$, $V(\omega,\cdot)$ is a nondecreasing
function.
The following conditions are equivalent~:
\begin{enumerate}
\item $E(V^+(x+ yY)\vert\mathcal{H})
 <  +\infty$ a.s., for all $x\in\mathbb{R}$, $y\in\mathbb{R}^d$.
\item $E(V^+(x+|y| |Y|)\vert\mathcal{H})
 <  +\infty$ a.s., for all $x,y\in\mathbb{R}$.
 \item $E(V^+(H+\xi Y)\vert\mathcal{H})
 <  +\infty$ a.s., for all $H, \xi$ $\mathcal{H}$-measurable random variables ($H$ is one-dimensional and $\xi$ is $d$-dimensional).
\end{enumerate}
The following conditions are equivalent~:
\begin{enumerate}
\item $E(V^-(x+ yY)\vert\mathcal{H})
 <  +\infty$ a.s., for all $x\in\mathbb{R}$, $y\in\mathbb{R}^d$.
\item $E(V^-(x-|y| |Y|)\vert\mathcal{H})  <  +\infty$ a.s., for all $x,y\in\mathbb{R}$.
 \item $E(V^-(H+\xi Y)\vert\mathcal{H})
 <  +\infty$ a.s., for all $H, \xi$ $\mathcal{H}$-measurable random variables ($H$ is one-dimensional and $\xi$ is $d$-dimensional).
\end{enumerate}
\end{lemma}
\begin{proof}
We only prove the equivalences for $V^+$ since the ones for $V^-$ are similar.
We start with 1. implies 2.
Introduce the following vectors for each function $i\in W:=\{-1,+1\}^d$:
\begin{eqnarray}
\label{signe}
\theta_i:=(i(1)\sqrt{d},\ldots,i(d)\sqrt{d}).
\end{eqnarray}
Let $x,y\in\mathbb{R}$. We can conclude since
$$
V^+(x +|y| |Y|)\leq \max_{i\in W} V^+(x+ |y| \theta_i Y)\leq \sum_{i\in W} V^+(x+ |y| \theta_i Y),
$$
by $|Y|\leq\sqrt{d}(|Y^1|+\ldots+|Y^d|)$.
Next we prove that 2. implies 3. Let $H, \xi$ be $\mathcal{H}$-measurable random variables,
define $A_m:=\{|H|<m,|\xi|<m\}$
for $m\geq 1$ and $Z:=E(V^+(H+\xi Y)\vert\mathcal{H})$. Then
$E(Z1_{A_m}|\mathcal{H}) \leq 1_{A_m}E(V^+(m+m|Y|)|\mathcal{H})$ and the latter exists and it is finite by 2. Hence we can conclude by Corollary \ref{nagyy}.
Now 3. trivially implies 1.
\end{proof}

A first step consists in showing that, under weak assumptions, one can choose a $(-\infty,+\infty]$-valued version  of $v(x)$
which is a.s. non-decreasing and left-continuous (in $x$). This will allow us later to prove Proposition
\ref{propre1}, i.e. that one can choose $(-\infty,+\infty]$-valued versions of the random functions $U_t$ which are a.s. non-decreasing and left-continuous (in $x$).

\begin{lemma}
\label{propre}
Let $V(\omega,x)$ be a function from
$\Omega \times \mathbb{R}$ to $(-\infty,\infty]$ such that for almost
all $\omega$, $V(\omega,\cdot)$ is a nondecreasing, left-continuous
function and
$V(\cdot,x)$ is $\mathcal{F}$-measurable for each fixed $x$.
Assume that, for all $1\leq t\leq T$, $x \in \mathbb{R}$ and $y\in\mathbb{R}^d$,
\begin{eqnarray}
\label{touxV}
E(V^-(x+yY)|\mathcal{H}) <  +\infty
\end{eqnarray}
holds true a.s. Then one can choose for all $x\in\mathbb{R}$
a $(-\infty,+\infty]$-valued version of $v(x)$ which is a.s. non-decreasing and left-continuous (in $x$). In particular, this version of $v$
is $\mathcal{H}\otimes\mathcal{B}(\mathbb{R})$-measurable.
\end{lemma}
\begin{proof} First, by Lemma \ref{determ}, \eqref{touxV} implies
$
E(V^-(x+\xi Y)|\mathcal{H}) <  +\infty
$
a.s. for $\xi\in\Xi$ as well.
For $x\in\mathbb{R}$, let $\mathfrak{v}(x)$ be an
arbitrary version of $\esssup_{\xi\in \Xi} E(V(x+\xi Y)|\mathcal{H})$.
Fix any pairs of real numbers
$x_1\leq x_2$. As for almost
all $\omega$, $V(\omega,\cdot)$ is a nondecreasing, we get on full measure set that for all $\xi \in \Xi$, $V(x_1+\xi Y) \leq V(x_2+\xi Y)$.
By monotonicity of the conditional expectations and the essential supremum, we obtain that
$\mathfrak{v}(x_1)\leq \mathfrak{v}(x_2)$ almost surely. Hence there is a
negligible set $N\subset\Omega$ outside which $\mathfrak{v}(\omega,\cdot)$ is non-decreasing over $\mathbb{Q}$. Note that here $N \in \mathcal{H}$ since $\mathcal{H}$
contains $P$-zero sets by assumption.

For $\omega\in\Omega\setminus N$, let us define  the following left-continuous function on $\mathbb{R}$ (possibly taking the value $\infty$):
for each $x\in\mathbb{R}$ let
$\mathfrak{A}(\omega, x):=\sup_{r<x,r\in\mathbb{Q}}\mathfrak{v}(\omega, r)$. For $\omega\in N$, define $\mathfrak{A}(\omega,x)= 0$ for all $x\in\mathbb{R}$.
Let $r_i$, $i\in\mathbb{N}$ be an enumeration of $\mathbb{Q}$. Then
$\mathfrak{A}(\omega,x)=\sup_{n\in\mathbb{N}} [\mathfrak{v}(\omega,r_n)1_{\{r_n<x\}}+(-\infty)1_{\{r_n\geq x\}}]$ for all $x$ and for all $\omega\in\Omega\setminus N$, hence $\mathfrak{A}$
is clearly an
$\mathcal{H}\otimes\mathcal{B}(\mathbb{R})$-measurable function.

It remains to show that, for each fixed $x\in\mathbb{R}$, $\mathfrak{A}(x)$ is a version of $v(x)$.
It suffices to show that, for each $x$, $\mathfrak{A}(x)$ is equal to $\mathfrak{v}(x)$ almost surely (where the zero-set may depend
on $x$) since, $\mathfrak{v}(x)$ being a version of the essential supremum, so will be $\mathfrak{A}(x)$, too.

Take increasing rationals $r_n \uparrow x$, $r_n<x$, $n\to\infty$. Then $\mathfrak{v}(r_n) \leq \mathfrak{v}(x)$ a.s. and
$\mathfrak{A}(x)=\lim_n \mathfrak{v}(r_n)\leq \mathfrak{v}(x)$
a.s. On the other hand, for each $k\geq 1$, we claim that there is $\xi_k\in\Xi$ such that
$$
\mathfrak{v}(x)-1/k=\esssup_{\xi\in \Xi}E(V(x+ \xi Y)\vert\mathcal{H})-1/k\leq
E(V(x+ \xi_k Y)\vert\mathcal{H})
$$ a.s.
Indeed, as $E(V(x+ \xi Y)\vert\mathcal{H})$, $\xi\in\Xi$ is easily seen to be directed upwards,
there is a sequence $\zeta_n\in\Xi$ such that $E(V(x+ \zeta_n Y)\vert\mathcal{H})$ is a.s. nondecreasing
and converges a.s. to $\mathfrak{v}(x)$. We can define $\xi_k:=\zeta_{l(k)}$ where
$l(k)(\omega):=\inf\{l:E(V(x+ \zeta_l Y)\vert\mathcal{H})(\omega)\geq \mathfrak{v}(\omega,x)-1/k\}$.

By definition, $\mathfrak{v}(r_n)\geq E(V(r_n+ \xi_k Y)\vert\mathcal{H})$ a.s. for all $n$.
We argue over the sets $A_m(k):=\{\omega: m-1\leq |\xi_k(\omega)|<m\}$, $m\geq 1$ separately and fix $m$.
Provided that we can apply Fatou's lemma, we get
$$
\mathfrak{A}(x)=\lim_n \mathfrak{v}(r_n)=\liminf_n \mathfrak{v}(r_n)\geq E(V(x+ \xi_k Y)\vert\mathcal{H})\mbox{ a.s. on }A_m(k),
$$
using left-continuity of $V$. It follows that $\mathfrak{A}(x)\geq \mathfrak{v}(x)-1/k$ a.s. for all $k$, hence
$\mathfrak{A}(x)\geq \mathfrak{v}(x)$ a.s. So necessarily $\mathfrak{A}(x)=\mathfrak{v}(x)$ a.s. and $\mathfrak{A}$ is a suitable
version, as claimed. This also implies that $\mathfrak{A}$ is a.s. decreasing as $\mathfrak{v}$ is.

Fatou's lemma works because of \eqref{touxV} and the estimate
$$
V^-(x+\xi_k Y)\leq \max_{i\in W} V^-(x- m\theta_i Y)\leq \sum_{i\in W} V^-(x- m\theta_i Y)\mbox{ a.s.},
$$
which holds on $A_m(k)$, for each $m,k$ (see \eqref{signe} for the definition of $\theta_i$).
\end{proof}

Now we introduce the random set $D$ such that
for all $\omega \in \Omega$, $D(\omega)$ is
the smallest affine subspace  containing the support of the conditional
distribution of $Y$ with respect to $\mathcal{H}$, i.e. $P(Y \in \cdot\vert\mathcal{H})(\omega)$.

In order to prove  \eqref{infune}, we impose the following conditions on $D$, $Y$, $V$ and ${\cal H}$:
\begin{assumption}
\label{hypD}
We have $D \in {\cal B}(\mathbb{R}^d) \otimes {\cal H}$ and for almost all $\omega$,
$D(\omega)$ is a non-empty vector subspace of $\mathbb{R}^d$.
\end{assumption}

\begin{remark}\label{haromcsillag}
Let $\xi\in\Xi$ and let $\xi'\in\Xi$  be the orthogonal
projection of $\xi$ on $D$ (this is $\mathcal{H}$-measurable by Proposition 4.6 of \cite{RS05}). Then $\xi-\xi'\perp D$ a.s. hence
$\{Y\in D\}\subset\{(\xi-\xi')Y=0\}$. It follows that
$$
P(\xi Y=\xi' Y\vert\mathcal{H})=P((\xi-\xi')Y=0\vert\mathcal{H})\geq
P(Y\in D\vert\mathcal{H})=1
$$
a.s., by the definition of $D$. Hence $P(\xi Y=\xi' Y)=E(P(\xi Y=\xi' Y\vert\mathcal{H}))=1$.
\end{remark}

\begin{assumption}
\label{AOAone}
There exist $\mathcal{H}$-measurable random variables with $0<\alpha, \beta\leq 1$ a.s.
such that for all $\xi\in \Xi$
with $\xi \in D$ a.s.:
\begin{eqnarray}
\label{lim}
P( \xi Y \leq -\alpha |\xi|\vert\mathcal{H}) \geq \beta.
\end{eqnarray}
\end{assumption}

\begin{assumption}
\label{hypV1}
$V(\omega,x)$ is a function from
$\Omega \times \mathbb{R}$ to $\mathbb{R}$ such that for almost
all $\omega$, $V(\omega,\cdot)$ is a nondecreasing, finite-valued, continuous
function and
$V(\cdot,x)$ is $\mathcal{F}$-measurable for each fixed $x$.
\end{assumption}

We also need the following integrability conditions:
\begin{assumption}
\label{hypVmoinsinfty}
For all $x,y\in\mathbb{R}$,
\begin{eqnarray}\label{veges}
E(V^-(x-|y| |Y|)\vert\mathcal{H}) & < & +\infty\quad\mathrm{a.s.} \\
\label{nemvegtelen}
E(V^+(x+|y| |Y|)\vert\mathcal{H})
 & < & +\infty\quad\mathrm{a.s.}.
\end{eqnarray}
\end{assumption}

\begin{remark}\label{dunoon}
Let $H,\xi$ be arbitrary $\mathcal{H}$-measurable random variables. Then, from Lemma \ref{determ}, under Assumption
\ref{hypVmoinsinfty} above, $E(V(H+\xi Y)\vert \mathcal{H})$ exists and it is a.s. finite.
\end{remark}

We finally assume the following growth conditions on $V$.
\begin{assumption}
\label{hypV2}
There exists some constants $C\geq 0$, $\underline{\gamma}>\overline{\gamma}>0$
such that, outside a fixed negligible set,
\begin{eqnarray}\label{bajka}
V(\lambda x) & \leq & \lambda^{\overline{\gamma}} V(x)+C\lambda^{\overline{\gamma}},\\
\label{bajka-}
V(\lambda x) & \leq & \lambda^{\underline{\gamma}} V(x)+C\lambda^{\underline{\gamma}}
\end{eqnarray}
hold for all $x\in\mathbb{R}$ and $\lambda\geq 1$.
\end{assumption}
\begin{assumption}
\label{hypV3}
There exists a non-negative, $\mathcal{H}$-measurable, a.s. finite valued random variable $N$ such that
\begin{equation}\label{limproba}
P\left(V(-N)<-\frac{2C}{\beta}-1 \vert\mathcal{H}\right) \geq 1-\beta/2 \quad\mathrm{a.s.}
\end{equation}
for $\beta$ is defined in Assumption \ref{AOAone} and $C$ in Assumption \ref{hypV2}.
\end{assumption}

We
briefly sketch the strategy for proving the existence of an optimiser $\tilde{\xi}(x)$
which attains the essential supremum in the definition of $v(x)$ (see \eqref{infune}).
First, we prove that strategies, in order to be optimal, have to be
bounded  by  some random variable $\tilde{K}$ (Lemmata \ref{Kbound} and \ref{bound}).
Then we establish that $E(V(x+yY) \vert {\cal H})$ has a version $G(\omega,x,y)$
which is jointly continuous in $(x,y)$ with probability $1$ (Lemma \ref{candycontinue}).

\noindent Let $A^{\tilde{K}}(\omega,x)=\sup_{y \in \mathbb{Q}^d,|y|\leq \tilde{K}(x)}G(\omega,x,y)$.
We prove that $A^{\tilde{K}}$ is continuous in $x$ and  that $A=A^{\tilde{K}}$ outside a negligible set, where
$A(\omega,x)=\sup_{y \in \mathbb{Q}^d}G(\omega,x,y)$ (Lemma \ref{candymaxiessup}).
Furthermore, we show for each $x$ that $v(x)=A(x)$ a.s. hence $A(\cdot)$ is an almost surely
continuous version of the essential supremum $v(\cdot)$. Based on the preceding steps, we
can  construct a
sequence $\xi_n(\omega,x)$ taking values in $D$ along which the supremum
in the definition of the function $A$ is attained and $\xi_n$ is
also jointly measurable (Lemma \ref{candymaxi}).
The bound $\tilde{K}$ and a compactness argument provide a limit $\tilde{\xi}$ of $\xi_n$
(Proposition \ref{candy}), which turns out to be the optimiser in \eqref{infune}.

\begin{lemma}\label{Kbound}
Let Assumptions \ref{hypD}, \ref{AOAone}, \ref{hypV1}, \ref{hypVmoinsinfty}, \ref{hypV2} and \ref{hypV3} hold.
Let $\eta$ such that $0<\eta<1$ and
$\overline{\gamma}<\eta \underline{\gamma}$ (recall that $\overline{\gamma}<\underline{\gamma}$).
Let $x, y \in \mathbb{R}$ with $x<y$. Define
\begin{eqnarray}
\label{duracell}
L & = & E(V^+(1 +|Y|)\vert {\cal H}),\\
\label{K1}
K_1 (x) & =  & \max \left(1,x^+, \left(\frac{x^+ +N}{\alpha}\right)^{\frac1{1-\eta}}, \frac{x^+ +N}{\alpha},\left(\frac{6L}{\beta}\right)^{\frac1{\eta \underline{\gamma}-\overline{\gamma}}},
\left(\frac{6C}{\beta}\right)^{\frac1{\eta \underline{\gamma}-\overline{\gamma}}}\right),\\
\label{K2}
K_2 (x) & = & \left(\frac{6[E(V(-x^-)|\mathcal{H})]^-}{\beta}\right)^{\frac1{\eta \underline{\gamma}}},\\
\label{K}
{K}(x,y) & = & \max(K_1 (y),K_2 (x)),\\
\label{Ktilde}
\tilde{K}(x) & = & K(\lfloor x\rfloor,\lfloor x\rfloor+1),
\end{eqnarray}
where $\lfloor x\rfloor$ denote the largest integer $n$ with $n\leq x$.
Then all these random variables are $\mathcal{H}$-measurable and a.s. finite-valued.
$K_1(\omega,x)$ (resp. $K_2(\omega,x)$) is non-decreasing (resp. non-increasing) in $x$.
The random function $\tilde{K}(\cdot)$
is  $\mathcal{H}\otimes\mathcal{B}(\mathbb{R})$-measurable and a.s. constant on intervals of the form $[n,n+1)$, $n\in\mathbb{Z}$.

\noindent For $\xi\in \Xi$
with $\xi \in D$ a.s. and $|\xi|\geq \tilde{K}(x)$, we have almost surely:
\begin{eqnarray}
\label{petitjoli}
E(V(x+ \xi Y)\vert {\cal H}) \leq E(V(x)\vert {\cal H}).
\end{eqnarray}
Assume that there exist numbers $m,p>0$ such that $V(x)\geq -m(1+|x|^p)$ a.s. for
all $x \leq 0$. Then there exists a  non-negative, a.s. finite-valued
$\mathcal{H}$-measurable random variable $M$  and some number
$\theta>0$ such that, for a.e. $\omega$,
\begin{eqnarray}
\label{stratborne}
\tilde{K}(x) & \leq &  M (|x|^{\theta}+1),\mbox{ for all $x$},
\end{eqnarray}
and $M$ is a polynomial function of
$N,1/\alpha,1/\beta$ and $L$.
\end{lemma}
It follows directly from \eqref{petitjoli} that $E(V(x+ \xi 1_{|\xi|> \tilde{K}(x)} Y)\vert {\cal H}) \leq E(V(x)\vert {\cal H})$ a.s. for all $\xi\in \Xi$, so we get that
\begin{eqnarray}
\label{petitjoliutile}
E(V(x+ \xi 1_{|\xi|\leq \tilde{K}(x)} Y)\vert {\cal H}) \geq E(V(x +\xi Y)\vert {\cal H}) \mbox{ a.s.}
\end{eqnarray}

\begin{proof}[of Lemma \ref{Kbound}.]
Fix some $ x \in \mathbb{R}$ and take $\xi\in\Xi$ such that $\xi \in D$ a.s. and $\vert\xi\vert\geq \max(1,x^+)$.
By \eqref{bajka}, we have
the following estimation:
\begin{eqnarray*}
V(x+ \xi Y) & = & V(x+\xi Y)1_{\{V(x+\xi Y) \geq 0\}}+V(x+\xi Y)1_{\{V(x+\xi Y)< 0\}} \\
 & \leq &  1_{\{V(x+\xi Y)\geq 0\}}\left(\vert\xi\vert^{\overline{\gamma}}V\left(\frac{x^+}{\vert\xi\vert}+\frac{\xi}
{\vert\xi\vert}Y\right)
+C\vert\xi\vert^{\overline{\gamma}}\right) +V(x+\xi Y)1_{\{V(x+\xi Y)< 0\}} \mbox{ a.s.}
\end{eqnarray*}
We start with the estimation using the positive part of $V$.
The random variable $L$ (recall \eqref{duracell}) is finite by \eqref{nemvegtelen}.
Thus, as $V$ is nondecreasing (see Assumption \ref{hypV1}), we obtain that a.s.
\begin{eqnarray*}
E\left(1_{\{V(x+\xi Y) \geq 0\}}V\left(\frac{x^+}{\vert\xi\vert}+\frac{\xi}
{\vert\xi\vert}Y\right)\vert {\cal H}\right)\leq E\left(V^+\left(1+\frac{\xi}
{\vert\xi\vert}Y\right)\vert {\cal H}\right) \leq L.
\end{eqnarray*}
For the estimation of the negative part, we introduce the event
\begin{eqnarray}
\label{Bset}
B:=\left\{ V(x+\xi Y) <0,\ \frac{\xi}{\vert\xi\vert} Y < -\alpha,\ V(-N)<-\frac{2C}{\beta}-1\right\}.
\end{eqnarray}
Then,  using \eqref{bajka-}, we obtain that a.s.
\begin{eqnarray*}
 -V(x+\xi Y)1_{\{V(x+\xi Y)< 0\}} & \geq &  -V(x+\xi Y)1_{B}\\
  & \geq &  -1_{B} \left(\vert\xi\vert^{\eta \underline{\gamma}} V\left(\frac{x^+}{\vert\xi\vert^{\eta}}+ \frac{\xi}
{\vert\xi\vert} Y\vert\xi\vert^{1-\eta}\right)+C\vert\xi\vert^{\eta \underline{\gamma}}\right).
\end{eqnarray*}
Now, from Assumption \ref{hypV3}, for all $\xi \in {\Xi}$ such that $\xi \in D$ a.s., we have (recalling Assumption \ref{AOAone}), a.s.:
\begin{eqnarray}
\nonumber
P\left(\left\{\frac{\xi}{\vert\xi\vert} Y < -\alpha,\ V(-N)<-\frac{2C}{\beta}-1\right\} \vert\mathcal{H}\right) &
\geq & P\left(V(-N)<-\frac{2C}{\beta}-1 \vert\mathcal{H}\right)
\\
\nonumber
&+& P(V(\xi Y< -\alpha \vert\xi\vert \vert\mathcal{H}) -1 \\
\nonumber
& \geq & 1-\beta/2+\beta -1 \\
\label{bantu}
 & \geq & \beta/2.
\end{eqnarray}
It is clear that $B$ contains
$$
\left\{ x^+-\alpha \vert\xi\vert<-N,\ \frac{\xi}{\vert\xi\vert} Y < -\alpha,\ V(-N)<-\frac{2C}{\beta}-1\right\}.
$$
Thus if we assume that $x^+-\alpha \vert\xi\vert\leq-N$, we get that $P(B|\mathcal{H})\geq \beta/2$ a.s.
Now assume that both  $x^+-\alpha \vert\xi\vert\leq-N$ and $\frac{x^+}{\vert\xi\vert^{\eta}}-\vert\xi\vert^{1-\eta}
\alpha\leq -N$ hold.
This is true if
$|\xi|\geq K_0 (x):= \max(1,x^+, \left(\frac{x^+ +N}{\alpha}\right)^{\frac1{1-\eta}}, \frac{x^+ +N}{\alpha})$
(recall that $0<\eta<1$ and we have assumed $\vert\xi\vert\geq \max(1,x^+)$).
Then
we have that a.s.,
\begin{eqnarray*}
E(V(x+\xi Y )1_{\{V(x+\xi Y)< 0\}}\vert\mathcal{H}) & \leq &
\vert\xi\vert^{\eta \underline{\gamma}}E(1_B V(-N)\vert\mathcal{H})+  C\vert\xi\vert^{\eta \underline{\gamma}}E(1_B\vert \mathcal{H})\\
 & \leq & -(\beta/2) \vert\xi\vert^{\eta \underline{\gamma}}.
\end{eqnarray*}
Putting together our estimations, for $\vert\xi\vert\geq K_0 (x)$ we have a.s.
\begin{eqnarray*}
E(V(x+\xi Y)\vert\mathcal{H}) & \leq &
\vert\xi\vert^{\overline{\gamma}}L +
C\vert\xi\vert^{\overline{\gamma}} -\frac{\beta}2 \vert\xi\vert^{\eta \underline{\gamma}}.
\end{eqnarray*}
In order to get \eqref{petitjoli}, it is enough to have, a.s.,
\begin{eqnarray}
\nonumber \vert\xi\vert^{\overline{\gamma}}L-\frac{\beta}6  \vert\xi\vert^{\eta \underline{\gamma}} & < & 0\\
\nonumber C\vert\xi\vert^{\overline{\gamma}} -\frac{\beta}6 \vert\xi\vert^{\eta \underline{\gamma}} & < & 0\\
\label{glasgow} -\frac{\beta}6 \vert\xi\vert^{\eta \underline{\gamma}} -E(V(-x^-)|\mathcal{H})& < & 0.
\end{eqnarray}
Since $\overline{\gamma}<\eta \underline{\gamma}<\underline{\gamma}$, the first two inequalities will be satisfied as
soon as $|\xi|\geq K_1 (x)$ (recall \eqref{K1}) and the last one as soon as
$|\xi|\geq K_2 (x)$ (recall \eqref{K2}).
From Assumption \ref{AOAone}, $\alpha$ and $\beta$  are $\mathcal{H}$-measurable random variables such that $\alpha>0$ and $\beta>0$ a.s. so $1/\alpha$ and $1/\beta$ are a.s. finite-valued. As $N$ and $L$ are also an $\mathcal{H}$-measurable and finite random variables,  so is  $K_1 (x)$.
It is also clear that $K_1(\omega,x)$ is non-decreasing in $x$.
Moreover, from Assumption \ref{hypV1}, $K_2(\omega,x)$ is non-increasing in $x$ and
from Assumption \ref{AOAone},
$K_2(\cdot,x)$ is clearly $\mathcal{H}$-measurable.
As $[E(V(-x^-)|\mathcal{H})]^- \leq E(V^-(-x^-)|\mathcal{H})$, by \eqref{veges} $K_2(x)$ is a.s. finite valued.

Let $\hat{K} (x) =  \max(K_1 (x),K_2 (x))$. Then
\eqref{petitjoli}
is satisfied if $\vert\xi\vert\geq \hat{K} (x)$. From the monotonicity property of $K_1(\omega,\cdot)$ and $K_2(\omega,\cdot)$, we get that
$\tilde{K}(x)\geq \hat{K}(x)$. Thus \eqref{petitjoli}
is also satisfied as soon as $\vert\xi\vert\geq \tilde{K} (x)$.

The random function $\tilde{K}(\cdot)$
is trivially $\mathcal{H}\otimes\mathcal{B}(\mathbb{R})$-measurable (and a.s. constant on intervals of the form $[n,n+1)$, $n\in\mathbb{Z}$).

By \eqref{duracell}-\eqref{Ktilde}, $\tilde{K}(x)$ is dominated by a polynomial function of
$(\lfloor x\rfloor+1)^+,N,1/\alpha,1/\beta,L$ and $[E(V(-\lfloor x\rfloor^-)\vert\mathcal{H})]^-$.
When $V(x)\geq -m(1+|x|^p)$,
$[E(V(-\lfloor x\rfloor^-)\vert\mathcal{H})]^- \leq  m(|\lfloor x\rfloor|^p+1)$ a.s.
So $\tilde{K}(x)$ is a.s. dominated by
a polynomial function in $|\lfloor x\rfloor|$, i.e. $\tilde{K}(x)\leq M'  (|\lfloor x\rfloor|^{\theta}+1)$ a.s.
for some
$\theta>0$ and for some random variable $M'$ which is a polynomial function of
$N,1/\alpha,1/\beta$ and $L$.  Thus $M'$ is a non-negative,  a.s. finite valued and
$\mathcal{H}$-measurable random variable.

As $\mathbb{R}=\cup_{n \in \mathbb{Z}}[n,n+1)$ and for all
$x \in [n,n+1)$, $\tilde{K}(x)\leq M'  (|n|^{\theta}+1)$ a.s.  one can find a common full measure set
on which $\tilde{K}(x)\leq M'  (|\lfloor x\rfloor |^{\theta}+1) \leq  M  (|x|^{\theta}+1)$ where
$M =(2^{\theta} +1) M'$ from the
simple estimation
$|\lfloor x\rfloor |^{\theta}\leq ||x| + 1|^{\theta}\leq 2^{\theta}(|x|^{\theta} + 1)$.
\end{proof}

\begin{remark}  A predecessor of Lemma \ref{Kbound} above is Lemma 4.8 of \cite{RS05} whose arguments, however,
are considerably simpler since $V$ is assumed concave in \cite{RS05}. We indicate here a correction
to that Lemma: in the estimates one needs to change the term $2C|\xi|^{\gamma}$ (appearing twice) to
$C[|\xi|^{\gamma}+|\xi|^{\gamma(1+\gamma)/2}]$.
\end{remark}

\begin{lemma}\label{bound}
Let Assumptions \ref{hypD}, \ref{AOAone}, \ref{hypV1}, \ref{hypVmoinsinfty}, \ref{hypV2} and \ref{hypV3} hold. Fix $x_0, x_1 \in \mathbb{R}$ with $x_0<x_1$. Then the $\mathcal{H}$-measurable, a.s. finite valued random variable $K=K(\omega, x_0,x_1)> 0$ (recall \eqref{K}) is
such that for all $x_0 \leq x\leq x_1$ we have:
\begin{equation}
\label{supborne}
-\infty<v(x) = \mathrm{ess.}\sup_{\xi\in \Xi, |\xi|\leq K}E(V(x+\xi Y)\vert\mathcal{H})<\infty\mbox{ a.s.}
\end{equation}
For any $\mathcal{H}$-measurable, positive, a.s. finite valued random variable $I$
there exists an $\mathcal{H}$-measurable, a.s. finite valued random variable $N'>0$ such that
$v(-N')\leq -I$ a.s. More precisely ${N}'$  is a polynomial function of $\frac{1}{\beta}$, $N$, $I$ and
$E(V^+(\bar{K} |Y| )\vert\mathcal{H})$, where
\begin{equation}\label{BARK}
\bar{K}:=\max \left(1,\frac{N}{\alpha},\left(\frac{N}{\alpha}\right)^{\frac1{1-\eta}},
\left(\frac{8L}{\beta}\right)^{\frac1{\eta \underline{\gamma}-\overline{\gamma}}},
\left(\frac{8C}{\beta}\right)^{\frac1{\eta \underline{\gamma}-\overline{\gamma}}}\right).
\end{equation}
\end{lemma}

\begin{proof}
Fix some $x_0 \leq x\leq x_1$.
First note that,
\begin{eqnarray*}
v(\omega,x)=\mathrm{ess.}\sup_{\xi\in \Xi, \xi \in D}E(V(x+\xi Y)\vert\mathcal{H})(\omega)\mbox{ a.s.}
\end{eqnarray*}
by Remark \ref{haromcsillag}. So from now on we assume that $\xi\in D$. We may as well assume
$D\neq \{0\}$ a.s. since the statement of this Lemma is clear on the event $\{D=\{0\}\}$.

Then the equality in \eqref{supborne} follows immediately from  \eqref{petitjoliutile}.
We now show that $v$ is finite. Let $\xi\in \Xi$, $|\xi|\leq K$,
\begin{eqnarray*}
-E(V^-(-|x|-K |Y|)\vert\mathcal{H})\leq E(V(x+\xi Y)\vert\mathcal{H})\leq E(V^+(|x|+K |Y|)\vert\mathcal{H}) \;\mbox{a.s.}
\end{eqnarray*}
and we conclude by Assumption \ref{hypVmoinsinfty}.

Looking carefully at the estimations of Lemma \ref{Kbound}, if $x <0$ and
$|\xi|\geq \max(1,\left(\frac{ N}{\alpha}\right)^{\frac1{1-\eta}},\frac{ N}{\alpha})$, we have that
\begin{eqnarray}
\label{eqneg}
E(V(x+\xi Y)1_{\{V(x+\xi Y)\geq 0\}}\vert\mathcal{H})+\frac12E(V(x+\xi Y)1_{\{V(x+\xi Y)< 0\}}\vert\mathcal{H})\leq 0 \; \mbox{ a.s.}
\end{eqnarray}
provided that
$\vert\xi\vert^{\overline{\gamma}}L
+C\vert\xi\vert^{\overline{\gamma}} -\frac{\beta}4 \vert\xi\vert^{\eta \underline{\gamma}}
\leq 0$.
So \eqref{eqneg} holds true provided that $\vert\xi\vert^{\overline{\gamma}}L-\frac{\beta}8 \vert\xi\vert^{\eta \underline{\gamma}}
\leq 0$,
and $C\vert\xi\vert^{\overline{\gamma}}
-\frac{\beta}8 \vert\xi\vert^{\eta \underline{\gamma}}\leq 0$, i.e.
\begin{eqnarray*}
|\xi|\geq  \max \left(1,\frac{N}{\alpha},\left(\frac{N}{\alpha}\right)^{\frac1{1-\eta}},
\left(\frac{8L}{\beta}\right)^{\frac1{\eta \underline{\gamma}-\overline{\gamma}}},
\left(\frac{8C}{\beta}\right)^{\frac1{\eta \underline{\gamma}-\overline{\gamma}}}\right)=\bar{K}.
\end{eqnarray*}

Let $I$ be an $\mathcal{H}$-measurable positive a.s. finite valued random variable, it remains to show that there exists a positive, a.s. finite valued  and
$\mathcal{H}$-measurable random variable $N'$ satisfying $v(-N')\leq -I$ a.s.
From now on we work on the event $\{x\leq -N\}$. Then a.s.,
\begin{eqnarray*}
-E(V(x+\xi Y )1_{\{V(x+\xi Y)< 0\}}\vert\mathcal{H}) & \geq & -E\left(1_{\left\{\frac{\xi}{\vert\xi\vert} Y < -\alpha,\ V(-N)<-\frac{2C}{\beta}-1\right\} } V(x-\alpha \vert\xi\vert )\vert\mathcal{H}\right)\\
    & \geq & -E\left(1_{\left\{\frac{\xi}{\vert\xi\vert} Y < -\alpha,\ V(-N)<-\frac{2C}{\beta}-1\right\} }V(x)\vert\mathcal{H}\right)\\
  & \geq & \left(\left(1+\frac{2C}{\beta}\right)\left(\frac{x}{-N}\right)^{\underline{\gamma}}- C\left(\frac{x}{-N}\right)^{\underline{\gamma}}\right)
  \frac{\beta}2\\ &\geq& \frac{\beta}2\left(\frac{x}{-N}\right)^{\underline{\gamma}},
\end{eqnarray*}
where we have used Assumption \ref{hypV2} (see \eqref{bajka-}), \eqref{bantu} and the fact that $\beta\leq 1$.
Thus, if $|\xi| \leq \bar{K}$, we obtain that
\begin{eqnarray}
\label{lundi}
E(V(x+\xi Y )\vert\mathcal{H})\leq E(V^+(\bar{K}|Y| )\vert\mathcal{H}) -
\frac{\beta}2\left(\frac{x}{-N}\right)^{\underline{\gamma}} \mbox{ a.s.}
\end{eqnarray}
Recall the definition of $\bar{K}$ and \eqref{eqneg}:
if $|\xi| \geq \bar{K}$ then we get that
\begin{eqnarray}
\label{mardi}E(V(x+\xi Y )\vert\mathcal{H})\leq \frac12E(1_{V(x+\xi Y)< 0}V(x+\xi Y )\vert\mathcal{H}) \leq  -
\frac{\beta}4\left(\frac{x}{-N}\right)^{\underline{\gamma}}\mbox{ a.s.}
\end{eqnarray}
The right-hand sides of both \eqref{lundi} and \eqref{mardi} are smaller than $ -I$ if
\begin{eqnarray}
\label{mercredi}
 \left(\frac{x}{-N}\right)^{\underline{\gamma}} & \geq & \frac{4}{\beta}\left(I+
 E(V^+(\bar{K}|Y| )\vert\mathcal{H})\right)\mbox{ a.s.}
\end{eqnarray}
We may and will assume that $I\geq 1/4$ which implies $4I/\beta\geq 1$. So there exists an $\mathcal{H}$-measurable random variable
\begin{eqnarray}
\label{jeudi}
N':=N
\left(\frac{4}{\beta}\left(I+
 E(V^+(\bar{K} |Y| )\vert\mathcal{H})\right)\right)^{\frac1{\underline{\gamma}}}\geq N \mbox{ a.s.},
\end{eqnarray}
such that, as soon as $x \leq -N'$,
$E(V(x+\xi Y )\vert\mathcal{H})\leq -I$ a.s. and, taking the supremum over all $\xi$, $v(x)\leq -I$ a.s. holds.
From \eqref{jeudi},  one can see that  ${N}'$  is a polynomial function of $\frac{1}{\beta}$, $N$, $I$ and
$E(V^+(\bar{K}|Y| )\vert\mathcal{H})$. ${N}'$ is also a.s. finite valued since $I$, $N$ and $1/\beta$  are (recall Assumption
\ref{AOAone}) and
(\ref{nemvegtelen}) holds true.
\end{proof}

\begin{lemma}
\label{candycontinue}
Let Assumptions \ref{hypV1} and \ref{hypVmoinsinfty} hold. There exists a version
$G(\omega,x,y)$ of $E(V(x+yY) \vert {\cal H})(\omega)$ for $(\omega,x,y) \in \Omega \times
\mathbb{R} \times  \mathbb{R}^d$ such that

(i) for almost all $\omega \in \Omega$,
$(x,y) \in \mathbb{R}\times  \mathbb{R}^d \rightarrow G(\omega,x,y) \in \mathbb{R}$ is continuous and nondecreasing in $x$;

(ii) for all $(x,y) \in \mathbb{R}\times  \mathbb{R}^d$, the function
$ \omega\in \Omega \rightarrow G(\omega,x,y) \in \mathbb{R}$ is ${\cal H}$-measurable;

(iii) for each $x  \in \mathbb{R}$ and for each $\mathcal{H}$-measurable $\xi$, we have that $E(V(x+\xi Y)\vert\mathcal{H})$ exists,
it is finite and
\begin{eqnarray}\label{cakk}
G(\cdot,x,\xi)=E(V(x+\xi Y)\vert\mathcal{H}), \mbox{ a.s.}
\end{eqnarray}
\end{lemma}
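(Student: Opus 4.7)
My plan is to construct $G$ on the countable dense set $\mathbb{Q}^{d+1}$ by fixing $\mathcal{H}$-measurable versions of the conditional expectation, extend by continuity to $\mathbb{R}\times\mathbb{R}^d$, and verify (iii) by simple-function approximation. For each $(x,y) \in \mathbb{Q}^{d+1}$, $E(V(x+yY)\mid\mathcal{H})$ exists and is a.s.\ finite by Remark \ref{dunoon}; fix a specific $\mathcal{H}$-measurable version $G_0(\omega,x,y)$. A countable intersection of full-measure sets (indexed by rational pairs $x \leq x'$ for each rational $y$) gives a full-measure $\Omega_1$ on which $G_0(\omega,\cdot,y)$ is nondecreasing on $\mathbb{Q}$ for every rational $y$.

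The key input for continuity is a simultaneous domination: for each positive integer $M$, Assumptions \ref{hypVmoinsinfty} and \ref{hypVinfty} imply that $E(V^+(M+M|Y|)\mid\mathcal{H})$ and $E(V^-(-M-M|Y|)\mid\mathcal{H})$ are a.s.\ finite, and intersecting over $M \in \mathbb{N}$ yields a full-measure $\Omega_2$ on which this holds uniformly. By monotonicity of $V$, $|V(\omega,x+yY(\omega))|$ is then bounded by $V^+(\omega,M+M|Y(\omega)|)+V^-(\omega,-M-M|Y(\omega)|)$ for all $|x|,|y| \leq M$. For each fixed rational sequence $(x_n,y_n) \to (x,y)$, continuity of $V(\omega,\cdot)$ (Assumption \ref{hypV1}) and conditional dominated convergence give $G_0(\omega,x_n,y_n) \to G_0(\omega,x,y)$ a.s. To reconcile the sequence-dependent null sets into a single one, I would invoke a regular conditional distribution $K(\omega,dy') := P(Y \in dy'\mid\mathcal{H})(\omega)$ (which exists since $\mathbb{R}^d$ is Polish) and set $G(\omega,x,y) := \int V(\omega,x+yy')\,K(\omega,dy')$, a pathwise representation of the conditional expectation: the same envelope is $K(\omega,\cdot)$-integrable for $\omega \in \Omega_2$, so classical dominated convergence on the measure $K(\omega,\cdot)$ delivers joint continuity of $G(\omega,\cdot,\cdot)$. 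Monotonicity in $x$ is inherited from $V$, yielding (i), and $\mathcal{H}$-measurability (ii) follows via a monotone class argument from the $\mathcal{H}$-measurability of $\omega \mapsto K(\omega,B)$.

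For (iii), approximate an $\mathcal{H}$-measurable $\xi$ by a sequence of $\mathcal{H}$-simple functions $\xi_n$ taking finitely many values in $\mathbb{Q}^d$, with $\xi_n \to \xi$ a.s. On each constancy set of $\xi_n$ the identity $G(\omega,x,\xi_n(\omega)) = E(V(\cdot,x+\xi_n Y)\mid\mathcal{H})(\omega)$ is a direct pasting of the rational-argument definitions; pass to the limit using joint continuity of $G$ on the left and conditional dominated convergence on the right, with the envelope above providing the dominating function. \textbf{The main obstacle} is the joint continuity step: for each fixed approximating sequence, conditional DCT only supplies sequence-dependent a.s.\ convergence, and the regular conditional distribution $K$ is precisely the tool that converts the a.s.\ identity into a pointwise one, thereby enabling classical dominated convergence in $(x,y)$ pathwise in $\omega$.
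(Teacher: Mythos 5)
Your approach diverges from the paper's in one crucial step, and that step is where it breaks. You correctly identify the obstacle: conditional dominated convergence along a fixed rational sequence yields an a.s.\ statement whose exceptional null set depends on the sequence, so one must somehow obtain a single full-measure set on which continuity holds simultaneously for all approximating sequences. Your fix is to define a pathwise representative
$$
G(\omega,x,y):=\int_{\mathbb{R}^d} V(\omega,x+yy')\,K(\omega,dy'),\qquad K(\omega,\cdot):=P(Y\in\cdot\,\vert\,\mathcal{H})(\omega),
$$
and then apply \emph{ordinary} dominated convergence for each fixed $\omega$. But this formula is not, in general, a version of $E(V(x+yY)\vert\mathcal{H})$, because $V$ is only assumed $\mathcal{F}$-measurable (Assumption \ref{hypV1}), not $\mathcal{H}$-measurable. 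Integrating $V(\omega,\cdot)$ against the conditional law of $Y$ alone treats the random function $V$ as if it were known given $\mathcal{H}$, which it is not. A toy counterexample: take $\mathcal{H}$ trivial, $Y=\pm 1$ with probability $1/2$ each, and $V(\omega,x)=Y(\omega)\,x$. Then $E(V(yY))=yE(Y^2)=y$, whereas $\int V(\omega,yy')\,K(dy')=Y(\omega)\int yy'\,K(dy')=0$. In the paper's application, $V=U_t$ is $\mathcal{F}_t$-measurable but typically not $\mathcal{F}_{t-1}$-measurable, so this is not an artificial worry. (A genuine pathwise representation would need a regular conditional \emph{probability} on $(\Omega,\mathcal{F})$ given $\mathcal{H}$, whose existence is not assumed here and would be a separate hypothesis; the regular conditional law of $Y$ alone is insufficient.)

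The paper circumvents exactly this issue by never introducing a pathwise kernel. Instead, in Step~1 of the proof of Lemma \ref{candycontinue}, it proves a.s.\ \emph{uniform} continuity of $(q,r)\mapsto G(\omega,q,r)$ on the rational skeleton of each box $[-N,N]^{d+1}$. This is done quantitatively: pathwise uniform continuity of $V(\cdot+\cdot Y)$ on $[-N,N]^{d+1}$ gives a random modulus $\zeta_\ell$; the integrability Assumptions \ref{hypVmoinsinfty}--\ref{hypVinfty} give an $\mathcal{H}$-measurable smallness parameter $\eta_\ell$ and a $\psi_\ell$ with $P(1/\psi_\ell\geq\zeta_\ell\vert\mathcal{H})\leq\eta_\ell$; combining these, $|G(q_1,r_1)-G(q_2,r_2)|\leq 1/\ell$ on the event $|q_1-q_2|+|r_1-r_2|\leq 1/\psi_\ell$, a.s.\ simultaneously over rational arguments. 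This produces one exceptional null set for the whole compact rational grid, after which extension by continuity is legitimate; identification of the extension with the conditional expectation at irrational points then only needs the conditional Lebesgue theorem sequence by sequence. Your Steps for monotonicity, the envelope estimate, measurability, and the simple-function argument for (iii) are all fine and parallel the paper; it is the joint-continuity step that needs to be replaced by a quantitative uniform-continuity argument of the kind the paper uses, rather than a kernel representation.
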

\begin{remark}
\label{castva} Note that, in particular, $G$ is $\mathcal{H}\otimes\mathcal{B}(\mathbb{R})\otimes\mathcal{B}(\mathbb{R}^d)$-measurable,
by p. 70 of \cite{CV77}.
\end{remark}
\begin{proof}[of Lemma \ref{candycontinue}.]
For part (i) of Lemma \ref{candycontinue}, we proceed in three steps.
First, we define a version of $(q,r)\to E(V(q+rY) \vert {\cal H})(\omega)$ which
is uniformly continuous on any precompact set $\mathbb{Q}^{d+1}\cap[-N,N]^{d+1}$, outside a $P$-zero set.
Then, in the second step, we  extend this version by continuity to $\mathbb{R}^{d+1}$ and in the third step
we show that this extension is, in fact,
a version of $(x,y)\to E(V(x+yY) \vert {\cal H})$, for all $x,y$.

\textit{Step 1:} Let us fix a version $G(\omega,q,r)$ of $E(V(q+rY)
\vert {\cal H})$ for all $(q,r)\in\mathbb{Q}^{d+1}$. Fix
$N>0$.

For each $r\in [-N,N]^{d}\cap \mathbb{Q}^{d}$ and
$q_1,q_2\in\mathbb{Q}\cap [-N,N]$ with $q_1\leq q_2$ we have $G(\omega,q_1,r)\leq G(\omega,q_2,r)$ a.s. by Assumption \ref{hypV1},
hence we can fix a set $\Omega'\subset\Omega$ of full measure such that $G(\omega,\cdot,r)$ is
nondecreasing over $\mathbb{Q}\cap [-N,N]$ for all $r\in [-N,N]^{d}\cap \mathbb{Q}^{d}$ and for all $\omega\in\Omega'$.

We claim that, for almost every $\omega$, the function
$(q,r)\to G(\omega,q,r)$ is uniformly continuous on
$[-N,N]^{d+1}\cap \mathbb{Q}^{d+1}$, i.e.,
\begin{eqnarray}
\label{thyro}
P\left(\cap_{\ell\in\mathbb{N}} M_{\ell} \right)=1,
\end{eqnarray}
where
\[
M_{\ell}:=\bigcup_{k\in\mathbb{N}}\bigcap_{(q_1,r_1),(q_2,r_2)\in [-N,N]^{d+1}\cap \mathbb{Q}^{d+1}, |q_1-q_2|+|r_1-r_2|\leq 1/k} \left\{|G(q_1,r_1)-G(q_2,r_2)|
\leq \frac1{\ell}\right\}.
\]
Fix $\ell \in\mathbb{N}$. By Assumption \ref{hypV1}, there exists a
full measure set $\Omega''$ such that $(x,y)\to
V(x+yY)$ is continuous and hence  uniformly continuous on
$[-N,N]^{d+1}$ for $\omega\in\Omega''$.  Define the events
$$
A_m(\ell)=
\bigcap_{(x,y),(z,w)\in [-N,N]^{d+1}\cap \mathbb{Q}^{d+1},\\ |x-z|+|y-w|<1/m}
\left\{\omega\in{\Omega}: |V(x+yY)(\omega)-V(z+wY)(\omega)|<\frac{1}{2\ell})\right\}.
$$
Uniform continuity implies that $\cup_m A_m(\ell)\supset{\Omega}''$. Define the disjoint sets
$$
B_1(\ell)=A_1(\ell),\ B_{m+1}(\ell)=A_{m+1}(\ell)\setminus\cup_{j=1}^m A_j(\ell)
$$
and set
$$
\zeta_{\ell}:=\sum_{m=1}^{\infty} \frac{1}{m} 1_{B_m(\ell)}.
$$
By construction, $\zeta_{\ell}$ is a random variable such
that on ${\Omega}''$,
\begin{eqnarray}
\label{lyro}
|V(x+yY)(\omega)-V(z+wY)(\omega)|\leq \frac{1}{2\ell}
\end{eqnarray}
whenever $(x,y),(z,w)\in
[-N,N]^{d+1}\cap\mathbb{Q}^{d+1}$ and $|x-z|+|y-w|\leq \zeta_{\ell}(\omega)$.

Now define
$$
\chi:=\sup_{(q,r)\in \mathbb{Q}^{d+1}\cap [-N,N]^{d+1}}|V(q+rY)|.
$$

As from Assumption \ref{hypV1}, $|V(q+rY)|= V^-(q+rY) + V^+(q+rY) \leq -V^-(-N-N|Y|) + V^+(N+N|Y|)$,  from Assumption  \ref{hypVmoinsinfty}, we get that:
\begin{equation}\label{firenze}
E\left(\chi\vert\mathcal{H}\right)<\infty
\end{equation}
holds almost surely. Hence, by Lemma \ref{lebbencs} (the conditional Lebesgue theorem), $E(\chi 1_{\{\chi\geq m\}}\vert\mathcal{H})\to 0$ as $m\to\infty$.
Fix versions $X_m$ of $E(\chi 1_{\{\chi\geq m\}}\vert\mathcal{H})$ and let $\Omega'''$ be the (full measure) set
where the above convergence holds.
The events
$$
C_m(\ell):=\left\{\omega\in\Omega''': X_m\leq \frac1{8\ell}\right\},\ m\in\mathbb{N},
$$
cover $\Omega'''$, satisfy $C_m(\ell) \in \mathcal{H}$ and we may define
$$
D_1(\ell)=C_1(\ell),\ D_{m+1}(\ell)=C_{m+1}(\ell)\setminus\cup_{j=1}^m C_j(\ell).
$$
Now set
$$
\eta_{\ell}:=\sum_{m=1}^{\infty} \frac{1}{8\ell m}1_{D_m(\ell)}.
$$
Note that, by construction,
\begin{equation}\label{bobbo}
E\left(\chi 1_{\{\chi  \geq \frac{1}{8\ell \eta_l}\}}\vert\mathcal{H}\right)\leq   \frac{1}{8\ell}\mbox{ a.s.}
\end{equation}

By a similar argument, we can choose an
$\mathcal{H}$-measurable $\mathbb{N}\setminus \{0\}$-valued random variable
$\psi_{\ell}$ such that
\begin{equation}\label{bobo}
P(1/\psi_{\ell}\geq
\zeta_{\ell}\vert\mathcal{H})\leq \eta_{\ell}\mbox{ a.s.}
\end{equation}
Define $A:=\{1/\psi_{\ell}\geq
\zeta_{\ell}\}$.
$\eta_{\ell}$ is clearly $\mathcal{H}$-measurable and
one has, almost surely,
\begin{eqnarray}\nonumber
E\left( 1_A\sup_{(q,r)\in \mathbb{Q}^{d+1}\cap [-N,N]^{d+1}}|V(q+rY)|\vert\mathcal{H}\right) & = &
E\left(\chi 1_{A \cap \{\chi \geq \frac{1}{8\ell \eta_{\ell}}\}}\vert\mathcal{H}\right)+
E\left(\chi 1_{A \cap \{\chi < \frac{1}{8\ell \eta_{\ell}}\}}\vert\mathcal{H}\right)\\
& \leq  & \frac{1}{8\ell} + \frac{1}{8\ell  \eta_{\ell}}P(A\vert\mathcal{H})
\leq \frac{1}{4\ell}.\label{bobbbo}
\end{eqnarray}

Let $\bar{\Omega}$ denote a full measure set where \eqref{bobbo}, \eqref{bobo}, \eqref{bobbbo} all hold. Define the sets $B=B(q_1,q_2,r_1,r_2,\ell):=\{\omega:|q_1-q_2|+|r_1-r_2|\leq 1/\psi_{\ell}(\omega)\}$.
By \eqref{lyro}, the definitions of $\eta_{\ell}$, $\psi_{\ell}$ and the above a.s. inequalities, we
have on a set $\Omega_{q_1,q_2,r_1,r_2}\subset \bar{\Omega}$ of full measure that
\begin{eqnarray*}\nonumber
1_B|G(\omega,q_1,r_1)-G(\omega,q_2,r_2)| & \leq &  E(1_B|V(q_1+r_1Y)-V(q_2+r_2Y)|\vert\mathcal{H})\\
 & \leq  &  1_B E\left(\frac{1}{2\ell}1_{
 \{
 \frac{1}{\psi_{\ell}}\leq \zeta_{\ell}
 \}
 }\vert\mathcal{H}\right)  +  \nonumber\\
& &
2\times 1_BE\left(1_{
\{\frac{1}{\psi_{\ell}}>\zeta_{\ell}
\}
} \sup_{(q,r)\in \mathbb{Q}^{d+1}
\cap [-N,N]^{d+1}}|V(q+rY)| \vert\mathcal{H}\right)\nonumber \\
& \leq& 1_B \left(\frac{1}{2\ell}+2\frac{1}{4\ell}\right)=1_B\frac{1}{\ell}.
\end{eqnarray*}
This shows that $B(q_1,q_2,r_1,r_2,\ell)\cap\Omega_{q_1,q_2,r_1,r_2}\subset \{|G(q_1,r_1)-G(q_2,r_2)|
\leq \frac1{\ell}\}$. Hence
\begin{equation}\label{bakk}
\left(\bigcup_{k\in\mathbb{N}}\bigcap_{(q_1,r_1),(q_2,r_2)\in [-N,N]^{d+1}\cap \mathbb{Q}^{d+1}, |q_1-q_2|+|r_1-r_2|\leq 1/k}
(B(q_1,q_2,r_1,r_2,\ell)\cap\Omega_{q_1,q_2,r_1,r_2})\right)
\end{equation}
is a subset of $M_{\ell}$.
Let $\omega \in {\Omega}$ arbitrary. Then for $k:=\psi_{\ell}(\omega)$,
$\omega \in B(q_1,q_2,r_1,r_2,\ell)$ for all $q_1,q_2,r_1,r_2$ such that
$|q_1-q_2|+|r_1-r_2|\leq 1/k$. In other words,
$$\Omega=\bigcup_{k\in\mathbb{N}}\bigcap_{|q_1-q_2|+|r_1-r_2|\leq 1/k} B(q_1,q_2,r_1,r_2,\ell)$$
and hence $M_{\ell}$  has full measure by \eqref{bakk} and \eqref{thyro} is proved.
Let $\tilde{\Omega}:=\Omega'\cap \left(\bigcap_{\ell} M_{\ell}\right)$. One gets that
for all $\omega \in \tilde{\Omega}$, the function
$(q,r)\to G(\omega,q,r)$ is uniformly continuous on
$[-N,N]^{d+1}\cap \mathbb{Q}^{d+1}$ and has the claimed monotonicity property as well. Note that $\tilde{\Omega}$ is a set of probability $1$.
 This concludes step 1.

\textit{Step 2:} Clearly, on $\tilde{\Omega}$, there is a unique extension by continuity of $G(\omega,x,y)$ over $[-N,N]^{d+1}$. Thus $G(\omega,x,y)$ can be
defined for
all $(x,y)\in\mathbb{R}^{d+1}$ in a continuous way on some $\tilde{\Omega}$ of full measure.
Note that, on $\tilde{\Omega}$, for all $q_1,q_2\in\mathbb{Q}$, $y\in\mathbb{Q}^d$, we have that
$$
G(\omega,q_1,y)\leq G(\omega,q_2,y)
$$
and this extends to $q_1,q_2\in\mathbb{R}$, $y\in\mathbb{R}^d$
by continuity.

\textit{Step 3:} It remains to show that, for all $(x,y)\in\mathbb{R}^{d+1}$,
$G(\omega,x,y)$ is a version of $E(V(x+yY) \vert {\cal H})(\omega)$.
To see this, let
$(q_n,r_n) \in \mathbb{Q}^{d+1}$ and $(x,y) \in \mathbb{R}^{d+1}$ be such that
$(q_n,r_n)$ tends to $(x,y)$. By continuity of $G$ on $\tilde{\Omega}$, $G(\omega,q_n,r_n)$ tends to $G(\omega,x,y)$ a.s.
By Assumption \ref{hypV1}, $V$ is almost surely continuous. So on a full measure set,
$V(q_n+r_nY)$ goes to $V(x+y Y)$.
Moreover, there exists some $n_0$ such that
for $n \geq n_0$, $x-1\leq q_n \leq x+1$ and $|r_n|\leq |y|+1$. As by Assumption \ref{hypV1}, $V$
is a.s. non-decreasing, we get that, on another full measure set,
\begin{eqnarray*}\label{bababa}
- V^-(x-1 - (|y|+1) |Y|)  & \leq V(q_n+r_nY) \leq &  V^+(x+1 + (|y|+1) |Y|).
\end{eqnarray*}
By Assumption \ref{hypVmoinsinfty}, we can apply Lemma \ref{lebbencs} (the conditional
Lebesgue theorem) and conclude that
$G(\omega,q_n,r_n)$ tends a.s. to $E(V(x+yY) \vert {\cal H})$: $G(\cdot,x,y)$ is a version of $E(V(x+yY) \vert {\cal H})$ and  \eqref{cakk} is proved for constants.

\textit{Step 4:} Assertion (ii) is straightforward, by
the definition of conditional expectations.

\textit{Step 5:} As for Assertion (iii), \eqref{cakk} is clear for
constants $\xi=x$ by step 3 above. We prove \eqref{cakk} for $\mathcal{H}$-measurable step functions
$\varsigma=\sum_ny_n1_{\varsigma=y_n}$ next.
It is clear that $1_{\varsigma=y_n}G(\omega,x,\varsigma)=
1_{\varsigma=y_n}G(\omega,x,y_n)=E(1_{\varsigma=y_n}V(x+y_n Y)\vert\mathcal{H})=
E(1_{\varsigma=y_n}V(x+\varsigma Y)\vert\mathcal{H})$ a.s.
So if we can apply Corollary \ref{nagyy} to $W=G(\omega,x,\varsigma)$, $Z=V(x+\varsigma Y)$ and
$A_n=\{\varsigma=y_n\}$, we can conclude that $G(\omega,x,\varsigma)=
E(V(x+\varsigma Y)\vert\mathcal{H})$ a.s.. This Corollary does apply since $E(1_{A_n}V(x+y_n Y)|\mathcal{H})$
exists a.s. and it is a.s. finite by Assumption \ref{hypVmoinsinfty}.

Now every $\mathcal{H}$-measurable random
variable $\xi$ can be approximated by a sequence of $\mathcal{H}$-measurable step functions
$(\varsigma_n)_n$ and we can conclude using (i) and Lemma \ref{lebbencs} as before.
\end{proof}

\begin{remark} An alternative way for constructing a suitable $G$ is using the theory of
 conditional expectations for normal integrands, see e.g. \cite{t} or \cite{chs}.
\end{remark}

\begin{lemma}
\label{candymaxiessup}
Let  Assumptions  \ref{hypD}, \ref{AOAone}, \ref{hypV1},  \ref{hypVmoinsinfty}, \ref{hypV2} and \ref{hypV3} hold. \\
Define $A(\omega,x)=\sup_{y \in \mathbb{Q}^d}G(\omega,x,y)$
for $(\omega,x) \in {\Omega} \times \mathbb{R}$. Let $A^{\tilde{K}}(\omega,x):=\sup_{y \in \mathbb{Q}^d, |y| \leq \tilde{K}(\omega,x) }G(\omega,x,y)$, where $\tilde{K}(\omega,x)$
is defined in \eqref{Ktilde}. Then we get that, on a set of full measure,

(i) the function $x \rightarrow  A^{\tilde{K}}(\omega,x)$, $x\in \mathbb{R}$ is non-decreasing and continuous,

(ii) $A^{\tilde{K}}(\omega,x)=A(\omega,x)$  for all $x\in \mathbb{R}$.

\noindent Finally, for each $x \in \mathbb{R}$,
\begin{equation}\label{marm}
v(x)=A(x)\mbox{ a.s.}
\end{equation}


\end{lemma}

\begin{remark}\label{vienna}
By \eqref{marm}, for each $x$, $A(x)$ is a version of $v(x)$ and hence, from this point on we may choose this version replacing $v(\cdot)$ by  $A(\cdot)$: by (i) and (ii), we will work with  a non-decreasing and continuous
 version of $v$.
\end{remark}

\begin{proof}[of Lemma \ref{candymaxiessup}.]
Fix some $\ell \in \mathbb{Z}$.
For $\ell \leq x < \ell+1$ and $\omega \in {\Omega}$, let $K_{\ell}=K(\omega,\ell,\ell+1)$ where $K(\omega,\ell,\ell+1)$
is defined in \eqref{K}. Let
$A^{K_{\ell}}(\omega,x):=\sup_{y \in \mathbb{Q}^d, |y| \leq K_{\ell} }G(\omega,x,y)$. We will first prove that, on a set of full measure,

(a) the function $x \rightarrow  A^{K_{\ell}}(\omega,x)$, $x\in [\ell,\ell+1)$ is non-decreasing and continuous,

(b) $A^{K_{\ell}}(\omega,x)=A(\omega,x)$  for all $x \in [\ell,\ell+1)$.

\noindent We prove (a) in two steps. First, we show that $x\to A^{K_{\ell}}(\omega,x)$  is continuous. Then we prove that
$q\to A^{K_{\ell}}(\omega,q)$ is non-decreasing on $\mathbb{Q}\cap [\ell,\ell+1)$.
By step 1, the monotonicity argument extends by continuity to $[\ell,\ell+1)$ and (a) is proved.
Note that we will work on the full-measure set $\tilde{\Omega}$ where all the conclusions  of Lemma \ref{candycontinue} {$(i)$} hold. Then we will prove \eqref{marm} and (b).

Now as $\mathbb{R}=\cup_{\ell \in \mathbb{Z}}[\ell,\ell+1)$ and by Lemma \ref{Kbound} $\tilde{K}|_{[\ell,\ell+1)}=K_{\ell}$, we see that
$A^{\tilde{K}}|_{[\ell,\ell+1)}=A^{K_{\ell}}.$ Thus it is still possible to find a full measure set such that (a) and (b) hold true on $\mathbb{R}$, (i) and (ii) and thus the lemma are proved.

Before all else we remark that $A^{K_{\ell}}(\omega,x),A(\omega,x)$ are $\mathcal{H}\otimes\mathcal{B}(\mathbb{R})$-measurable.
Indeed,  $A$ is ${\cal H}\otimes {\cal B}(\mathbb{R})$-measurable since $A$ is defined as a
countable supremum and by Remark \ref{castva} $G$ is ${\cal H}\otimes {\cal B}(\mathbb{R})\otimes {\cal B}(\mathbb{R}^d)$-measurable. Now one has
$$
A^{K_{\ell}}(\omega,x)=\sup_{y\in\mathbb{Q}^d} [1_R G(\omega,x,y)+1_{R^C} G(\omega,x,0)],
$$
where
$$
R:=\{(\omega,y): |y|\leq K_{\ell}\}.
$$
Hence it suffices to show $R\in \mathcal{H}\otimes\mathcal{B}(\mathbb{R}^d)$. As $\infty>K_{\ell}\geq 0$ a.s. and $K_{\ell}$
is  ${\cal H}$-measurable (see Lemma \ref{Kbound}),  there exists
a non-increasing sequence of step functions $K^n_{\ell}$ converging to
$K_{\ell}$. Let $K^n_{\ell}=\sum_{j=1}^{\infty} c_j 1_{A_j}$ where $A_j \in {\cal H}$. Clearly,
$$
R_n:=\{(\omega,y): |y|\leq K^n_{\ell}\}=\left(\cup_{j=1}^{\infty} (A_j\times \{y:|y|\leq c_j\})\right) \cup \left(\cap_{j=1}^{\infty} A^c_j\times \{0\}\right) \in \mathcal{H}\otimes\mathcal{B}(\mathbb{R}^d),
$$
and $R=\cap_n R_n$, showing what was claimed.

\textit{Step 1:} Fix some $x \in \mathbb{R}$ such that $\ell \leq x < \ell + 1$ and $\omega \in \tilde{\Omega}$.
Let $x_n\in [\ell, \ell+1)$ be a sequence of real numbers converging to $x$.
By definition of $A^{K_{\ell}}$, for all $k$, there exists some
$y_k(\omega,x) \in \mathbb{Q}^d$, $|y_k(\omega,x)| \leq K_{\ell}(\omega)$ and $G(\omega,x,y_k(\omega,x))
\geq A^{K_{\ell}}(\omega,x)-1/k$. Moreover,
one has that $A^{K_{\ell}}(\omega,x_n) \geq G(\omega,x_n,y_k(\omega,x))$ for all $n$, and by Lemma \ref{candycontinue} (i),
$$\liminf_n A^{K_{\ell}}(\omega,x_n) \geq G(\omega,x,y_k(\omega,x)) \geq A^{K_{\ell}}(\omega,x)-1/k,$$
and letting $k$ go to infinity,
\begin{eqnarray}
\label{lulu}
\liminf_n A^{K_{\ell}}(\omega,x_n) \geq A^{K_{\ell}}(\omega,x).
\end{eqnarray}
Note that $A^{K_{\ell}}(\omega,x_n)$ is defined as a supremum over a precompact set.  Thus there exists
$y_n^* (\omega) \in \mathbb{R}^d$, $|y_n^*(\omega)| \leq K_{\ell}(\omega)$ and $A^{K_{\ell}}(\omega,x_n)=G(\omega,x_n,y_n^*(\omega))$.
By compactness, there exists some $y^*(\omega)$ such that
some subsequence $y_{n_k}^*(\omega)$ of $y_{n}^*(\omega)$ goes to $y^*(\omega)$, $k\to\infty$, and $\limsup_n A^{K_{\ell}}(\omega,x_n)=\lim_{k}A^{K_{\ell}}(\omega,x_{n_k})$.
By Lemma \ref{candycontinue} (i), one gets
$$\limsup_n A^{K_{\ell}}(\omega,x_n)=G(\omega,x,y^*(\omega)) \leq A^{K_{\ell}}(\omega,x).$$
Recalling \eqref{lulu}, this concludes the proof of continuity for $A^{K_{\ell}}$.

\textit{Step 2:} We argue $\omega$-wise again. Let $q_1\leq q_2$ with $q_1,q_2\in [\ell, \ell+1)$.
By definition of $A^{K_{\ell}}$,  there exists some
$y_n^1(\omega) \in \mathbb{Q}^d$ satisfying
$G(\omega,q_1,y_n^1(\omega))
\geq A^{K_{\ell}}(\omega,q_1)-1/n$.
Moreover,
one has that $A^{K_{\ell}}(\omega,q_2) \geq G(\omega,q_2,y_n^1(\omega))$.
So, as by Lemma \ref{candycontinue} (i), $G(\omega,q_2,y_n^1(\omega)) \geq G(\omega,q_1,y_n^1(\omega))$,
we get that $A^{K_{\ell}}(\omega,q_2) \geq A^{K_{\ell}}(\omega,q_1)-1/n$. We conclude, by
letting $n$ go to zero, that the inequality $A^{K_{\ell}}(\omega,q_1)\leq A^{K_{\ell}}(\omega,q_2)$ holds on $\tilde{\Omega}$
for any pairs $q_1\leq q_2$
of rational numbers. By continuity of $A^{K_{\ell}}$, we obtain that
the inequality $A^{K_{\ell}}(\omega,x)\leq A^{K_{\ell}}(\omega,y)$ holds on $\tilde{\Omega}$
for any pairs $x\leq y$
of real numbers between $\ell$ and $\ell+1$. This concludes the proof of (a).

\textit{Step 3:} We now turn to the second part of Lemma \ref{candymaxiessup}.
Applying Lemma \ref{gast} to $F(\omega,y)=G(\omega,\hat{x},y)$ (see  Lemma \ref{candycontinue} (i) and (ii)) and  $K= K_{\ell}$
for some $\ell \leq \hat{x} <\ell+1$
(recall that $K_{\ell}$ is ${\cal H} $-measurable), we obtain that, almost surely,
\begin{eqnarray*}
\sup_{y \in \mathbb{Q}^d, |y| \leq K_{\ell}(\omega) } G(\omega,\hat{x},y)=
\mathrm{ess.}\sup_{\xi\in \Xi, |\xi|\leq  K_{\ell}} G(\omega,\hat{x},\xi(\omega))
\end{eqnarray*}
Now applying the same Lemma \ref{gast} to $F(\omega,y)=G(\omega,\hat{x},y)$
for some $ \hat{x} \in \mathbb{R}$  and
$K=\infty$, we obtain that, almost surely,
\begin{eqnarray*}
\sup_{y \in \mathbb{Q}^d } G(\omega,\hat{x},y)=
\mathrm{ess.}\sup_{\xi\in \Xi} G(\omega,\hat{x},\xi(\omega)).
\end{eqnarray*}
Now from the definition of $v$, $A$ and \eqref{cakk} we obtain for each $\hat{x} \in \mathbb{R}$,
$$v(\hat{x})=\mathrm{ess.}\sup_{\xi\in \Xi}E(V(\hat{x}+\xi Y)\vert\mathcal{H})=
\mathrm{ess.}\sup_{\xi\in \Xi} G(\cdot,\hat{x},\xi)=A(\hat{x}) \;\mbox{ a.s.}$$
and \eqref{marm} is proved for all  $ \hat{x} \in \mathbb{R}$.
Using also Lemma \ref{bound}, \eqref{cakk} and the definition of $A^{K_{\ell}}$, we obtain for each $\ell \leq \hat{x} <\ell+1$,
$$v(\hat{x})=\mathrm{ess.}\sup_{\xi\in \Xi, |\xi|\leq K_{\ell}}E(V(\hat{x}+\xi Y)\vert\mathcal{H})=
\mathrm{ess.}\sup_{\xi\in \Xi, |\xi|\leq K_{\ell}} G(\cdot,\hat{x},\xi)=A^{K_{\ell}}(\hat{x})\;\mbox{ a.s.}$$

\textit{Step 4:}
Our considerations so far imply that
the set $\{A(\cdot,q)=A^{K_{\ell}}(\cdot,q)\mbox{
for all }q\in\mathbb{Q}\cap[\ell, \ell+1)\}$ has probability one. Fix some $\omega_0$ in the intersection of this set with
the one where  $A^{K_{\ell}}$ is non-decreasing and continuous (namely $\tilde{\Omega}$), this intersection is again a set of full measure.
For any $x\in[\ell, \ell+1)$, there exist some sequences $(q_n)_n, (r_n)_n \subset \mathbb{Q}$ such that
$q_n\nearrow x$ and $r_n \searrow x$.
As $A(\omega_0,\cdot)$ is non-decreasing {on $\mathbb{Q}$}
(by definition of $\omega_0$):
$$\lim_{q_n\nearrow x} A(\omega_0,q_n)=A(\omega_0,x-) \mbox{ and } \lim_{r_n \searrow x}
A(\omega_0,r_n)=A(\omega_0,x+).$$
As $A^{K_{\ell}}$ is continuous on $[\ell,\ell+1)$,
$$\lim_{q_n\nearrow x} A^{K_{\ell}}(\omega_0,q_n)= \lim_{r_n \searrow x}
A^{K_{\ell}}(\omega_0,q_n)=A^{K_{\ell}}(\omega_0,x).$$
So by choice of $\omega_0$, $A(\omega_0,x-)=A^{K_{\ell}}(\omega_0,x)=A(\omega_0,x+)$ hence
$\omega_0 \in \{A(\cdot,x)=A^{K_{\ell}}(\cdot,x)\mbox{
for all }x\in[\ell, \ell+1)\}$. Thus
$P(A(\cdot,x)=A^{K_{\ell}}(\cdot,x)\mbox{ for all }x\in[\ell, \ell+1))=1$ and (b) is proved.
\end{proof}

\begin{lemma}
\label{candymaxi}
Let Assumptions \ref{hypD}, \ref{AOAone}, \ref{hypV1},  \ref{hypVmoinsinfty}, \ref{hypV2} and \ref{hypV3} hold.
There is a set of full measure $\hat{\Omega}$ and an ${\cal H}\otimes {\cal B}(\mathbb{R})$-measurable sequence $\xi_n(\omega,x) $ such that
for all $\omega \in \hat{\Omega}$ and $x\in\mathbb{R}$,
\begin{eqnarray*}
\xi_n(\omega,x) & \in & D(\omega)\\
\vert\xi_n(\omega,x)\vert & \leq & \tilde{K}(\omega,x)\\
G(\omega,x,\xi_n(\omega,x)) & \to & A(\omega,x),
\end{eqnarray*}
see \eqref{Ktilde} for the definition of $\tilde{K}(\cdot)$.
Moreover, for
$(\omega,x) \in \hat{\Omega} \times \mathbb{R}$ define
\begin{eqnarray}
\label{servicepublique}
E_n(\omega,x):= |G(\omega,x,\xi_n(\omega,x)) - A(\omega,x)|.
\end{eqnarray}
Then $E_n$ is $\mathcal{H}\otimes\mathcal{B}(\mathbb{R})$-measurable.
For all $N>0$ and for all $\omega \in \hat{\Omega}$, $\sup_{|x| \leq N}E_n(\omega,x)\to 0$, $n\to\infty$.
\end{lemma}
\begin{proof} Choose $\tilde{\Omega}$ such that all the conclusions of Lemmata \ref{candycontinue} (i) and \ref{candymaxiessup}
hold on this set.

\textit{Step 1: construction of the sequence $(\xi_n)_n$}.\\
Let $q_1, \ldots, q_k, \ldots$ be an enumeration of $\mathbb{Q}^d$.
Define $\mathbb{D}_n:=\{l/2^n:l\in\mathbb{Z}\}$.

Recall from Assumption \ref{hypD} that, for almost all $\omega$, $D(\omega)$
is a non-empty vector subspace of $\mathbb{R}^d$ (and is thus closed).
For all $k$, consider the projection $Q_k(\omega)$ of $q_k$ on $D(\omega)$.
Then $Q_k \in D$ and, as in Proposition 4.6 of \cite{RS05}, the measurable selection theorem
(see for example Proposition III.44 in \cite{dm}) implies that the projection of any
${\cal H}$-measurable random variable on $D$ (a fortiori the projection of any constant) is ${\cal H}$-measurable.
Moreover from Remark \ref{haromcsillag}, $q_k Y=Q_k Y$ a.s. for all $k$. So we denote by $\hat{\Omega}$ the intersection of $\tilde{\Omega}$ with
$\cap_{k \in \mathbb{N}} \{q_k Y=Q_k Y\}$: it is again a set of full measure.

Let $C_1^n  =
\{(\omega,x) \in \hat{\Omega} \times \mathbb{D}_n \ : \ |q_1| \leq \tilde{K}(\omega,x) \mbox{ and } |G(\omega,x,q_1)- A(\omega,x)| <1/n \}$ and for
all $k \geq 2$, define $C_k^n$ recursively by
\begin{eqnarray*}
C_k^n & = & \{(\omega,x) \in \hat{\Omega} \times \mathbb{D}_n \ : \ |q_k| \leq \tilde{K}(\omega,x) \mbox{ and } |G(\omega,x,q_k)- A(\omega,x)| <1/n \}
\setminus \cup_{l=1,\ldots,k-1} C_l^n.
\end{eqnarray*}
As from Lemma \ref{Kbound} $\tilde{K}$ is ${\cal H}\otimes {\cal B}(\mathbb{R})$-measurable, $C_k^n$ is in
${\cal H}\otimes {\cal B}(\mathbb{R})$ (recall also Remark \ref{castva}). As from Lemma \ref{candymaxiessup},  $A(\omega,x)=A^{\tilde{K}}(\omega,x)=\sup_{q_k, |q_k| \leq \tilde{K}(\omega,x) }G(\omega,x,q_k)$,
one has $\cup_k C_k^n =\hat{\Omega} \times \mathbb{D}_n$. Define for $(\omega,x) \in \hat{\Omega} \times \mathbb{R}$
\begin{eqnarray}
\label{defxi}
\xi_n(\omega,x) & = & \sum_{k=1}^{\infty}\sum_{l=-\infty}^{\infty}Q_k (\omega)
1_{\{(\omega,l/2^n)\in C_k^n\}}(\omega)1_{\{l/2^n\leq x<(l+1)/2^n\}}(x).
\end{eqnarray}
Then $\xi_n$ is ${\cal H}\otimes {\cal B}(\mathbb{R})$-measurable. Fix some $n,l$ and $x\in [l/2^n,(l+1)/2^n)$. 
Then one has on $\{\omega \in \hat{\Omega}:(\omega,l/2^n)\in C^n_k\}$ (recall that $Q_k(\omega)$ is  the orthogonal projection of $q_k$ on $D(\omega)$),
$$|\xi_n(\omega,x)| =|Q_k(\omega)| \leq |q_k|\leq \tilde{K}(\omega,x).$$
Moreover, again on $\{\omega \in \hat{\Omega}:(\omega,l/2^n)\in C^n_k\}$, we get that by definition of $\hat{\Omega}$
\begin{eqnarray*}
G(\omega,x,\xi_n(\omega,x))& = & G(\omega,x,Q_k(\omega)) = E(V(x+Q_k(\omega)Y) \vert {\cal H})\\
   & = & E(V(x+q_k Y) \vert {\cal H})= G(\omega,x,q_k).
\end{eqnarray*}
As $\mathbb{D}_n$ is a countable set and the $C_k^n$ form a partition of
$\hat{\Omega}\times \mathbb{D}_n$, we thus have for all $n$ and $(\omega,x)\in\hat{\Omega} \times\mathbb{D}_n$
\begin{eqnarray*}
|\xi_n(\omega,x)| & \leq & \tilde{K}(\omega,x)\\
|G(\omega,x,\xi_n(\omega,x))- A(\omega,x)|  & < & 1/n.
\end{eqnarray*}

\textit{Step 2: proof of convergence.}\\
Fix any integer $N>0$, we will prove that for all $\omega\in\hat{\Omega}$, $\sup_{|x| \leq N}E_n(\omega,x)$ goes to
zero.
We argue for each fixed $\omega\in\hat{\Omega}$. As $A(\omega,x)$ is continuous
from Lemma \ref{candymaxiessup},
it is uniformly continuous on $[-N,N]$. The same argument applies to $G(\omega,x,y)$ on $[-N,N]\times
[-K(-N,N+1),K(-N,N+1)]^{d}$
(see Lemma \ref{candycontinue} (i) and the definition of $K(\cdot,\cdot)$ in
\eqref{K}). Hence for each $\epsilon>0$ there is $\eta(\omega)>0$
such that $|A(\omega,x)-A(\omega,x_0)|<\epsilon/3$ and
$|G(\omega,x,y)-G(\omega,x_0,y_0)|<\epsilon/3$
if $|x-x_0|+|y-y_0|<\eta(\omega)$. Now let $d_n(x)$
denote the element of $\mathbb{D}_n$ such that $d_n(x)\leq x<d_n(x)+(1/2^n)$. Then $\xi_n(\omega,d_n(x))=\xi_n(\omega,x)$.
Since $|\xi_n(\cdot,x)|\leq \tilde{K}(x)\leq K(-N,N+1)$ for all $x\in [-N,N]$,
we have
\begin{eqnarray*}
|G(\omega,x,\xi_n(\omega,x))- A(\omega,x)| & \leq  & |G(\omega,x,\xi_n(\omega,x))- G(\omega,d_n(x),
\xi_n(\omega,d_n(x))| + \\
& &  |G(\omega,d_n(x),
\xi_n(\omega,d_n(x))- A(\omega,d_n(x))|+\\
& & |A(\omega,d_n(x))- A(\omega,x)| \\
&\leq &
\epsilon/3 + 1/n + \epsilon/3 \leq \epsilon,
\end{eqnarray*}
if $n$ is chosen so large that both $1/2^n<\eta(\omega)$ and
$1/n<\epsilon/3$. To complete the proof it remains to show that $E_n$ is $\mathcal{H}\otimes\mathcal{B}(\mathbb{R})$-measurable.
Recalling Lemma \ref{candycontinue}, for almost all $\omega \in \Omega$,
$(x,y) \in \mathbb{R}\times  \mathbb{R}^d \rightarrow G(\omega,x,y) $ is continuous
and from Remark \ref{castva} $G$ is $\mathcal{H}\otimes\mathcal{B}(\mathbb{R})\otimes\mathcal{B}(\mathbb{R}^d)$-measurable.  As $\xi_n$ is ${\cal H}\otimes {\cal B}(\mathbb{R})$-measurable, $(\omega,x) \in \Omega \times \mathbb{R}\rightarrow G(\omega,x,\xi_n(\omega,x)) $ is ${\cal H}\otimes {\cal B}(\mathbb{R})$-measurable.
By definition ($A$ is a countable supremum of $\mathcal{H}\otimes\mathcal{B}(\mathbb{R})$-measurable functions), $A$ is also $\mathcal{H}\otimes\mathcal{B}(\mathbb{R})$-measurable,
and so is $E_n$.
\end{proof}

These preparations allow us to prove the existence of an optimal strategy:
\begin{proposition}
\label{candy}
Let Assumptions \ref{hypD}, \ref{AOAone}, \ref{hypV1},  \ref{hypVmoinsinfty}, \ref{hypV2} and \ref{hypV3} hold. Then there exists an
$\mathcal{H}\otimes\mathcal{B}(\mathbb{R})$-measurable
$\tilde{\xi}(\omega,x) \in D$ such that for each $x$,
\begin{eqnarray}
\label{supxi}
v(\omega,x) & = & E(V(x+\tilde{\xi}(\omega,x)Y) \vert\mathcal{H})\mbox{ a.s.}
\end{eqnarray}
Recall the definition of $\tilde{K}(x)$ from \eqref{Ktilde}. We have
\begin{eqnarray}
\label{macha}
|\tilde{\xi}(\omega,x)| & \leq & \tilde{K}(\omega,x)\mbox{ for all }x\in\mathbb{R}\mbox{ and }\omega\in\Omega.
\end{eqnarray}
The $\tilde{\xi}$ we have constructed satisfies
\begin{eqnarray}
\label{gateau}
A(\omega,H)=E(V(H+\tilde{\xi}(H)Y)\vert\mathcal{H})=
\mathrm{ess.}\sup_{\xi\in
\Xi}E(V(H+{\xi}Y)\vert\mathcal{H})\ \mathrm{a.s.},
\end{eqnarray}
for each $\mathcal{H}$-measurable $\mathbb{R}$-valued random variable $H$.
\end{proposition}
\begin{proof}
From Lemma \ref{candymaxi}, there exists a sequence $\xi_n(\omega,x) \in D$ such that
$G(\omega,x,\xi_n(\omega,x))$ converges to $A(\omega,x)$ for all $\omega \in \hat{\Omega}$ for some $\hat{\Omega}$ of full
measure and for all $x\in\mathbb{R}$.
Note that $|\xi_n(x)|$ is bounded by $\tilde{K}(x)$ for all $x \in \mathbb{R}$ and $\omega \in \hat{\Omega}$.

From Lemma A.2 of \cite{RS05} (see also Lemma 2 in \cite{KS01}), we find a random subsequence
$\tilde{\xi}_k(\omega,x)$
of $\xi_n(\omega,x)$ converging to some $\tilde{\xi}(\omega,x)$ for all $x$
and $\omega\in\Omega'$ for a set of  full measure  $\Omega'$ as $k\to\infty$. On the set $\Omega\setminus\Omega'$ we define
$\tilde{\xi}(\omega,x):=0$ for all $x$. Note that this ensures $|\tilde{\xi}(\omega,x)|\leq \tilde{K}(x)$ for all $x \in \mathbb{R}$ and $\omega \in {\Omega}$ and \eqref{macha} is proved.

Here $\tilde{\xi}_k(\omega,x)=\xi_{n_k}(\omega,x)=
\sum_{l \geq k}\xi_{l}(\omega,x)1_{\tilde{B}(l,k)}$,
with $\tilde{B}(l,k)=\{(\omega,x):\, n_k(\omega,x)=l \} \in {\cal H}\otimes\mathcal{B}(\mathbb{R})$ and
$\cup_{l \geq k}\tilde{B}(l,k)={\Omega}'\times\mathbb{R}$. Fix $x\in\mathbb{R}$ now.
Define $B(l,k):=\{\omega:(\omega,x)\in \tilde{B}(l,k)\}\in\mathcal{H}$.
Then we have that a.s.
\begin{eqnarray}
\label{myro}
E(V(x+\tilde{\xi}_k(x) Y)\vert\mathcal{H}) & = & \sum_{l \geq k} 1_{B(l,k)}E(V(x+\xi_{l}(x) Y)\vert\mathcal{H}) \\
\nonumber
 & \geq & \sum_{l \geq k} 1_{B(l,k)} (A(\omega,x) -E_l(\omega,x))\\
 \nonumber
  & \geq & \sum_{l \geq k} 1_{B(l,k)} (A(\omega,x) -\sup_{m\geq k}E_m(\omega,x))=A(\omega,x) -\sup_{m\geq k}E_m(\omega,x).
\end{eqnarray}
Here \eqref{myro} will be verified shortly, using Corollary \ref{nagyy}. The first inequality follows
from \eqref{cakk} and Lemma  \ref{candymaxi} (see \eqref{servicepublique}).

In \eqref{myro} we applied Corollary \ref{nagyy} with $W=\sum_{l \geq k} 1_{B(l,k)}
E(V(x+\xi_{l}(x) Y)\vert\mathcal{H})$, $A_l=B(l,k)$, $l\geq k$ and $Z=V(x+\tilde{\xi}_k(x) Y)$.
By Remark \ref{dunoon}, $E(Z1_{A_l}\vert\mathcal{H})$ exists and is a.s. finite.
Since $W1_{A_l}=E(Z1_{A_l}\vert\mathcal{H})$ a.s. holds true trivially, \eqref{myro} is satisfied.

Note that $E_m(\omega,x)\to 0$ a.s.,
$m\to\infty$  (see Lemma \ref{candymaxi}) also implies $\sup_{m\geq k}E_m(\omega,x)\to 0$ a.s., $k\to\infty$.
As $E(V(x+\tilde{\xi}_k(x) Y)\vert\mathcal{H}) \leq E(V^+(x+ \tilde{K}(x) |Y|) \vert\mathcal{H}) < \infty$ by \eqref{nemvegtelen}, the
(limsup) Fatou Lemma applies and we obtain, using Assumption \ref{hypV1}, that a.s.
\begin{eqnarray*}\nonumber
E(V(x+\tilde{\xi}(x) Y)\vert\mathcal{H})  & \geq  & \limsup_k E(V(x+\tilde{\xi}_k(x) Y)\vert\mathcal{H}) \\
 & \geq & \limsup_k (A(\omega,x) -\sup_{m \geq k} E_m(\omega,x))=A(\omega,x).\label{mancika}
\end{eqnarray*}
Recalling \eqref{marm}, \eqref{supxi} is proved for each $x$ since $v(x)\geq E(V(x+\tilde{\xi}(x) Y)\vert\mathcal{H})$ a.s. is trivial.

To see \eqref{gateau}, we will prove that the following inequalities hold true:
\begin{eqnarray}
\label{banine}
A(\omega,H)\leq E(V(H+\tilde{\xi}(H)Y)\vert\mathcal{H})\mbox{ a.s.}
\end{eqnarray}
and  for any fixed $\xi$
\begin{eqnarray}\label{montdor}
E(V(H+\xi Y)\vert\mathcal{H})\leq A(\omega,H)\ \mathrm{a.s.}
\end{eqnarray}
Then from \eqref{banine} and \eqref{montdor} applied to $\tilde{\xi}(H)$, we get that
$A(\omega,H)=E(V(H+\tilde{\xi}(H)Y)\vert\mathcal{H})$ a.s. Finally $A(\omega,H)=E(V(H+\tilde{\xi}(H)Y)\vert\mathcal{H})\leq
\mathrm{ess.}\sup_{\xi\in\Xi}E(V(H+\xi Y)\vert\mathcal{H}) \leq A(\omega,H)$ a.s.
(where the last inequality comes from \eqref{montdor} again) and \eqref{gateau} is proved.

\textit{Step 1: it is enough to prove \eqref{banine} for bounded $H$.}\\
As $H=\sum_{p=-\infty}^{\infty}H1_{p \leq H < p+1}$,  we want to apply Corollary \ref{nagyy} to
$W=A(\cdot,H)$, $A_p=\{p \leq H < p+1\}$ and $Z=V(H+\tilde{\xi}(H)Y)$ to conclude that
if \eqref{banine} is proved for each $H_p=H1_{p \leq H \leq p+1}$ then it is proved for $H$. We
only need to verify that $E(V(H+\tilde{\xi}(H)Y)1_{A_p}\vert\mathcal{H})$ exists and it is finite
a.s., but this is clear from Remark \ref{dunoon}.

\textit{Step 2: proof of \eqref{banine} for bounded $H$.}\\
First let us fix $p\in\mathbb{Z}$ such that $p\leq |H| < p+1$. Let us also fix $n$.
We will establish
that
\begin{equation}\label{momm}
A(\omega,H)-\tilde{E}_{n,p}(\omega)\leq E(V(H+{\xi}_n(H)Y)\vert\mathcal{H}) \mbox{ a.s.}
\end{equation}
where $\tilde{E}_{n,p}:=\sup_{p\leq x<p+1} E_n(\omega,x)$.
Recall that $E_n$ is defined in
\eqref{servicepublique} above and  is $\mathcal{H}\otimes\mathcal{B}(\mathbb{R})$-measurable. As the supremum may be taken over the
rationals, $\tilde{E}_{n,p}$ is $\mathcal{H}$-measurable.

As $H=\sum_{l=-\infty}^{\infty}H1_{\{H\in [l/2^n,(l+1)/2^n)\}}$, applying Corollary \ref{nagyy} again,
it is enough to prove \eqref{momm}
for $J^l=H1_{\{H\in [l/2^n,(l+1)/2^n)\}}$ for each $l=p2^n,\ldots,(p+1)2^n-1$.

Fix $l \in\{p2^n,\ldots,(p+1)2^n-1\}$. Fix some step functions $J^l_k=\sum_{m\geq 1}j_m^{k,l} 1_{J^{l}_k=j_m^{k,l}}$  converging to $J^l$, $k\to\infty$,
such that $j_m^{k,l}\in [l/2^n,(l+1)/2^n)$. Then, a.s.
\begin{eqnarray*}
E(V(j_m^{k,l}+\xi_{n}(j_m^{k,l}) Y)\vert\mathcal{H}) \geq  A(\omega,j_m^{k,l}) -\tilde{E}_{n,p}(\omega),
\end{eqnarray*}
from the construction of $\xi_n$ in Lemma \ref{candymaxi} (see \eqref{servicepublique}). So
 \eqref{momm}  holds for each $H=j_m^{k,l}$ and, applying Corollary \ref{nagyy},
\eqref{momm} holds also
for $H=J^l_k$.

From  \eqref{Ktilde} $\tilde{K}(x)=K(p,p+1)$ for $x\in [p,
p+1)$. By the construction of $\xi_n$ (see \eqref{defxi}), we have that ${\xi}_n(x)$ is constant for $x\in [l/2^n,
(l+1)/2^n)$ and thus ${\xi}_n(J^l_k)={\xi}_n(J^l)$. So
using the continuity of $A$ on the left-hand side, the continuity of $V$  and Fatou's lemma for the right-hand side, we get
that \eqref{momm}  holds for each $J^l$ and the statement \eqref{momm} is proved. Here we can use the limsup Fatou Lemma because
$V(J^l_k+{\xi}_n (J^l) Y) \leq   V^+(p+1+K(p,p+1) |Y|)$ and
the latter is $< \infty$ a.s. due to Assumption \eqref{nemvegtelen}.

Now we pass to the limit in \eqref{momm} along the random subsequence $n_k$
defined in the beginning of the proof  (again, \eqref{momm}  holds for $n_k$ by Corollary \ref{nagyy}).
From Lemma \ref{candymaxi}, $\tilde{E}_{n_k,p}\to 0$ a.s.
Recalling that, $\xi_{n_k}(\omega,x)$ converges to $\tilde{\xi}(\omega,x)$ for all $p \leq x < p+1$ on some
$\Omega'$ of full measure,
${\xi}_{n_k}(\omega,H(\omega))$ converges to $\tilde{\xi}(x, H(\omega))$ and
using the same Fatou-lemma argument, we get that
\eqref{banine} holds true with $H$ bounded.

\textit{Step 3: proof of \eqref{montdor}.}\\
Similarly as in step 1, it is enough to prove \eqref{montdor} for bounded $H$ and $\xi$.
We denote by $N$ the bound for $|\xi|$ and by $M$ the bound for $|H|$.
By construction of $A$ and \eqref{cakk}, \eqref{montdor} holds true for constant $H$, so by Corollary
\ref{nagyy} it holds true for step functions $H$.
Again, taking a sequence of step-function approximations $H_l\to H$
with $H_l$ uniformly bounded, using the continuity of
$A$ for the right-hand side and Fatou Lemma for the left-hand side (here it is liminf Fatou Lemma and we use that
$V(H_l+ {\xi}  Y) \geq   -V^-(-M- N |Y|)$ and
$E(V^-(-M -N |Y|)\vert\mathcal{H}) < \infty$ due to Assumptions \ref{hypV1} and \ref{hypVmoinsinfty}), we get
that \eqref{montdor}  holds for all bounded $H$, $\xi$ and hence for all $H$, $\xi$. The statement is proved.
\end{proof}

\begin{remark}
 For the proof of Theorem \ref{main} it would suffice to construct, for all $\mathcal{H}$-measurable $H$,
 some ${\xi}_H\in\Xi$ satisfying $E(V(H+{\xi}_H Y)|\mathcal{H})=A(H)$. We have obtained a much sharper result:
 there is $\tilde{\xi}:\Omega\times\mathbb{R}\to\mathbb{R}$ such that one can choose $\xi_H:=\tilde{\xi}(H)$ and this is what we use in Proposition \ref{camarche}.

 An alternative way for constructing $\xi_H$ is through the technology of normal integrands and measurable
 selection, as presented e.g. in Chapter 14 of \cite{rw}.
\end{remark}

\section{Dynamic programming}
\label{dyn}

We first prove that the  random functions associated to the dynamic programming procedure are well defined and finite
under appropriate integrability conditions.
\begin{proposition}\label{propre1}
Let $U:\mathbb{R}\to\mathbb{R}$ be non-decreasing and left-continuous.
Assume that  \eqref{negaU0} holds true.
Then the random functions $U_t$ (see \eqref{tuskes} and \eqref{vanek}) are well-defined recursively, for all $x\in\mathbb{R}$. Indeed, one can
choose $(-\infty,+\infty]$-valued versions which are a.s. non-decreasing and left-continuous (in $x$). In particular, each $U_t$
is $\mathcal{F}_{t}\otimes\mathcal{B}(\mathbb{R})$-measurable.
Moreover, for all $0\leq t \leq T$, almost surely for all $x \in \mathbb{R}$, we have:
\begin{eqnarray}
\label{utetu}
U_t(x)  \geq  U (x) >-\infty.
\end{eqnarray}
For all $1\leq t \leq T$, $x \in \mathbb{R}$, $\xi \in \Xi_{t-1}$, we obtain that a.s.
\begin{eqnarray}
\label{ut-x}
E(U_t^-(x+\xi\Delta S_{t})|\mathcal{F}_{t-1}) <  +\infty.
\end{eqnarray}
If we assume also that \eqref{lasagna} holds true then for all $1\leq t \leq T$ and $\xi \in \Xi_{t-1}$ we have for all $x$,
\begin{eqnarray}
\label{utx}
E(U_t(x+\xi\Delta S_{t})|\mathcal{F}_{t-1}) \leq U_{t-1}(x)<  +\infty\mbox{ a.s.}\\
\label{ut+x}
E(U_t^+(x+\xi\Delta S_{t})|\mathcal{F}_{t-1}) <  +\infty\mbox{ a.s.}
\end{eqnarray}
\end{proposition}
\begin{proof}
We prove the first part of the proposition under \eqref{negaU0} only.
At $t=T$, $U_T(x)  \geq  U (x)$ is by definition and \eqref{ut-x} holds true by \eqref{negaU0} and Lemma \ref{determ} applied with $V=U$, $Y=\Delta S_{t}$, $\mathcal{H}=\mathcal{F}_{t-1}$ and $H=x$.

Assume now that one can
choose an $(-\infty,+\infty]$-valued version of $U_{t+1}$ which is a.s. non-decreasing and left-continuous (in $x$). Assume also that the statements \eqref{utetu}, \eqref{ut-x} hold true at $t+1$.
Then Lemma \ref{propre}, applied with $V$ equal to this version of $U_{t+1}$, $Y=\Delta S_{t+1}$, $\mathcal{H}=\mathcal{F}_{t}$, provides an
increasing, left-continuous random function (namely $\mathfrak{A}(x)$ defined in Lemma \ref{propre}) which is a version of $U_t$. From
now on we work with this version of $U_t$.
Choosing $\xi=0$, we get that, for all $x \in \mathbb{R}$,
$$U_t(x)\geq
E(U_{t+1}(x)|\mathcal{F}_{t}) \geq U(x)>-\infty\mbox{ a.s.}$$
where the second inequality holds by the induction hypothesis \eqref{utetu}.
As both $U_t,U$ are left-continuous, $U_t(x)\geq U(x)$ holds for all $x$ simultaneously, outside
a fixed negligible set (see Lemma \ref{indis}).
This implies also that
$$E(U_t^-(x+\xi\Delta S_{t})|\mathcal{F}_{t-1}) \leq E(U^-(x+\xi\Delta S_{t})|\mathcal{F}_{t-1})<  +\infty,$$
by \eqref{negaU0} again. So $E(U_t(x+\xi\Delta S_{t})|\mathcal{F}_{t-1})$ is well-defined and statements
\eqref{utetu}, \eqref{ut-x} are proved for $U_t$.

Now we prove the second part of the proposition. For $x \in \mathbb{R}$ and for $0\leq j \leq T$, as
$U_j^-(x) \leq U^-(x) <\infty$ by \eqref{utetu} we get $E(U_j^- (x))<\infty$. Thus $E(U_j (x))$ is
well-defined and, by Lemma \ref{ujdonatuj},
$E(U_j (x)\vert\mathcal{F}_{j-1})$ is well-defined a.s., too, and
$$E(U_j (x))=E(E(U_j (x)\vert\mathcal{F}_{j-1}))$$
holds. Let
$\xi \in \Xi_{t-1}$, $1\leq t\leq T$. Choosing the strategy equal to zero at the dates $1,\ldots,t-1$, we get
\begin{eqnarray*}
\nonumber
E(U_0(x))\geq E(E(U_{1}(x)\vert\mathcal{F}_0))= E(U_1(x))\geq \ldots & \geq & E(E(U_{t-1}(x)\vert\mathcal{F}_{t-2})) \\
& = & E(U_{t-1}(x))
\geq   E(E(U_{t}
(x+\xi\Delta S_{t})\vert\mathcal{F}_{t-1})).
\end{eqnarray*}
As $E(U_0(x)) <\infty$, we obtain that $E(U_{t-1}(x))<\infty$, thus $U_{t-1}(x)<\infty$ a.s. and
\eqref{utx} as well as \eqref{ut+x} hold true.
\end{proof}

To perform a dynamic programming procedure, we need to establish that some
crucial properties of $U$ are true for $U_t$ as well, i.e. they are preserved by
dynamic programming. In particular the
``asymptotic elasticity''-type conditions \eqref{egyedik} and \eqref{kettedik}, see below.
\begin{proposition}\label{egeszR}
Assume that $U$ satisfies Assumption \ref{ae}. Then there is a
constant $C\geq 0$ such that for all
$x \in \mathbb{R}$ and $\lambda\geq 1$,
\begin{eqnarray}
\label{egyedik} U(\lambda x)& \leq &  \lambda^{\overline{\gamma}} U(x)+C\lambda^{\overline{\gamma}}\\
\label{kettedik}U(\lambda x) & \leq & \lambda^{\underline{\gamma}} U(x)+C\lambda^{\underline{\gamma}}.
\end{eqnarray}
\end{proposition}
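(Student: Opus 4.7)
The plan is to prove the two inequalities by extending the ``asymptotic elasticity''-type bounds of Assumption \ref{ae} from their original domains ($x \geq \overline{x}$ for (\ref{ae+}) and $x \leq -\underline{x}$ for (\ref{ae-})) to all of $\mathbb{R}$. The strategy is a short case analysis according to the sign and magnitude of $x$, together with monotonicity of $U$, the sign properties implied by $U(0)=0$, and the comparison $\lambda^{\overline{\gamma}}\leq \lambda^{\underline{\gamma}}$ for $\lambda\geq 1$. The constant $C$ will be taken as the maximum of $c$, $U(\overline{x})^+$ and $|U(-\underline{x})|$, so that each case fits within a single uniform bound.

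For (\ref{egyedik}) I would distinguish four regimes. If $x\geq\overline{x}$, (\ref{ae+}) gives the inequality immediately, absorbing $c$ into $c\lambda^{\overline{\gamma}}$ since $\lambda\geq 1$. If $0\leq x<\overline{x}$, split on whether $\lambda x<\overline{x}$ or $\lambda x\geq\overline{x}$: in the first case, monotonicity yields $U(\lambda x)\leq U(\overline{x})^+\leq U(\overline{x})^+\lambda^{\overline{\gamma}}$; in the second, write $\lambda x=(\lambda x/\overline{x})\overline{x}$ and apply (\ref{ae+}) to $\overline{x}$ with scaling factor $\lambda x/\overline{x}\geq 1$, noting $(x/\overline{x})^{\overline{\gamma}}\leq 1$ and $U(x)\geq 0$. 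If $-\underline{x}\leq x<0$, then $U(\lambda x)\leq 0$ and $U(x)\leq 0$ with $|U(x)|\leq |U(-\underline{x})|$, so $U(\lambda x)-\lambda^{\overline{\gamma}}U(x)\leq \lambda^{\overline{\gamma}}|U(-\underline{x})|$. Finally, if $x<-\underline{x}$, (\ref{ae-}) gives $U(\lambda x)\leq \lambda^{\underline{\gamma}}U(x)$, and since $U(x)<0$ and $\underline{\gamma}>\overline{\gamma}$ imply $\lambda^{\underline{\gamma}}U(x)\leq \lambda^{\overline{\gamma}}U(x)$, the bound follows.

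For (\ref{kettedik}) I would run a parallel case analysis. The regime $x\leq -\underline{x}$ is immediate from (\ref{ae-}). For $-\underline{x}<x\leq 0$, $U(\lambda x)\leq 0$ and $|U(x)|\leq |U(-\underline{x})|$ again give $U(\lambda x)-\lambda^{\underline{\gamma}}U(x)\leq \lambda^{\underline{\gamma}}|U(-\underline{x})|$. For $x>0$, I would invoke (\ref{egyedik}) just proved, and then use that $U(x)\geq 0$ together with $\lambda^{\overline{\gamma}}\leq \lambda^{\underline{\gamma}}$ to upgrade the exponent from $\overline{\gamma}$ to $\underline{\gamma}$ at no cost.

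No step is technically difficult; the mild subtlety is in the case $x<-\underline{x}$ for (\ref{egyedik}), where one must realize that the ``wrong'' asymptotic elasticity (\ref{ae-}) in fact provides a \emph{stronger} inequality than needed because $U(x)$ is negative there and $\lambda^{\underline{\gamma}}\geq \lambda^{\overline{\gamma}}$ reverses when multiplied by a negative quantity. Once this observation is in place, choosing $C:=\max\bigl(c,\,U(\overline{x})^+,\,|U(-\underline{x})|\bigr)$ covers all cases simultaneously.
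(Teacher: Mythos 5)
Your case-by-case argument mirrors the paper's proof almost exactly; the one point that needs repair is the constant. Taking $C=\max\bigl(c,\,U(\overline{x})^+,\,|U(-\underline{x})|\bigr)$ is not quite large enough in the sub-case $0\le x<\overline{x}$, $\lambda x\ge \overline{x}$ of \eqref{egyedik}: there you arrive at $U(\lambda x)\le \lambda^{\overline{\gamma}}U(\overline{x})+c$, and since you can only use $U(x)\ge 0$ on the right-hand side you need $\lambda^{\overline{\gamma}}U(\overline{x})+c\le C\lambda^{\overline{\gamma}}$; letting $\lambda\downarrow 1$ (which is possible as $x\uparrow\overline{x}$) forces $C\ge U(\overline{x})+c$, and a \emph{maximum} of the three quantities does not guarantee this when both $U(\overline{x})>0$ and $c>0$. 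The paper's choice is $C:=\max\bigl(U(\overline{x}),-U(-\underline{x})\bigr)+c$ --- a sum with $c$, not a max including $c$ --- and that closes the gap; alternatively, doubling your $C$ also works. The rest of your reasoning, including the key observation that on $x\le-\underline{x}$ the bound \eqref{ae-} actually gives a \emph{stronger} inequality than \eqref{egyedik} because $U(x)<0$ reverses $\lambda^{\underline{\gamma}}\ge\lambda^{\overline{\gamma}}$, is exactly the paper's argument. One small simplification the paper makes that you could adopt: on $0\le x\le\overline{x}$ it bounds $U(\lambda x)\le U(\lambda\overline{x})\le\lambda^{\overline{\gamma}}U(\overline{x})+c$ directly by monotonicity, avoiding your extra split on whether $\lambda x$ reaches $\overline{x}$.
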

\begin{proof}
Let $C:=\max(U(\overline{x}),-U(-\underline{x}))+c$.
Obviously, \eqref{egyedik} holds true for $x\geq \overline{x}$ by \eqref{ae+}.
For $0\leq x\leq \overline{x}$, as $U$ is nondecreasing, we get
\begin{eqnarray*}
U(\lambda x)\leq U(\lambda \overline{x})\leq \lambda^{\overline{\gamma}} U(\overline{x})+c,
\end{eqnarray*}
from \eqref{ae+} and \eqref{egyedik} holds true.
Now, for $-\underline{x} <x\leq0$,
$$\lambda^{\overline{\gamma}} U(x)+C\lambda^{\overline{\gamma}}  \geq \lambda^{\overline{\gamma}} U(-\underline{x})+C\lambda^{\overline{\gamma}}$$
and \eqref{egyedik} holds true since $C \geq -U(-\underline{x})$ and
$U(\lambda x) \leq 0$.

If $x \leq -\underline{x}$, $U(x)\leq 0$. By \eqref{ae-} and $\overline{\gamma}<\underline{\gamma}$, one has
\begin{eqnarray*}
U(\lambda x)\leq  \lambda^{\underline{\gamma}} U(x) \leq \lambda^{\overline{\gamma}} U(x) \leq \lambda^{\overline{\gamma}} U(x)
+\lambda^{\overline{\gamma}} C.
\end{eqnarray*}
We now turn to the proof of \eqref{kettedik}.
For $x>0$, using \eqref{egyedik}, $\overline{\gamma}<\underline{\gamma}$ and $U(x)\geq 0$:
\begin{eqnarray*}
U(\lambda x)\leq \lambda^{\overline{\gamma}} U(x)+C\lambda^{\overline{\gamma}} \leq \lambda^{\underline{\gamma}} U(x)
+C\lambda^{\underline{\gamma}}.
\end{eqnarray*}
For $-\underline{x} <x\leq0$
$$\lambda^{\underline{\gamma}} U(x)+C\lambda^{\underline{\gamma}} \geq \lambda^{\underline{\gamma}} U(-\underline{x})+
C\lambda^{\underline{\gamma}}
\geq 0\geq U(\lambda x),
$$
since $C \geq -U(-\underline{x})$.
Finally, \eqref{kettedik} for $x \leq -\underline{x} $ follows directly from
\eqref{ae-}.
\end{proof}

\begin{proposition}
\label{randomD}
Assume that $S$ satisfies the (NA) condition. Then, for all $t=1,\ldots,T$, $D_t$  satisfies Assumption \ref{hypD}.
\end{proposition}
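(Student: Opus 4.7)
The random set $D_t$ is \emph{defined} as the smallest affine subspace containing the support of the regular conditional distribution $P(\Delta S_t\in\cdot\vert\mathcal{F}_{t-1})(\omega)$, so this defining property of Assumption \ref{hypD} is automatic, and non-emptiness follows from the non-emptiness of the support of any probability measure. Two things remain to be established: (a) the joint measurability $D_t\in\mathcal{B}(\mathbb{R}^d)\otimes\mathcal{F}_{t-1}$, and (b) the fact that $D_t(\omega)$ is actually a \emph{linear} subspace, i.e.\ $0\in D_t(\omega)$ a.s. The key probabilistic input is (NA), which enters only in (b).

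For (a), the plan is to produce a Castaing-type representation of $D_t$. The support of the regular conditional law $P(\Delta S_t\in\cdot\vert\mathcal{F}_{t-1})$ is a random closed set in $\mathbb{R}^d$ that admits a countable family $(\eta_n)_{n\geq 1}$ of $\mathcal{F}_{t-1}$-measurable selectors whose values are, for almost every $\omega$, dense in this support. Since in finite dimensions the affine hull of a set and of its closure coincide, $D_t(\omega)$ equals the affine hull of $\{\eta_n(\omega):n\geq 1\}$, which we write as
\[
D_t(\omega)=\eta_1(\omega)+\operatorname{span}\bigl(\eta_n(\omega)-\eta_1(\omega):n\geq 2\bigr).
\]
Because at most $d$ of the $\eta_n-\eta_1$ are needed to span, one can stratify $\Omega$ according to the rank and, on each stratum, select an $\mathcal{F}_{t-1}$-measurable basis; membership of $x$ in the corresponding span is then the vanishing of finitely many determinants, which is jointly Borel in $(x,\omega)$. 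This yields $D_t\in\mathcal{B}(\mathbb{R}^d)\otimes\mathcal{F}_{t-1}$, and in particular the orthogonal projection onto $D_t$ is an $\mathcal{F}_{t-1}$-measurable operation (the same fact that underlies Proposition 4.6 of \cite{RS05} invoked in Remark \ref{haromcsillag}).

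For (b), I argue by contradiction using (NA). Suppose the event $A:=\{0\notin D_t\}$ lies in $\mathcal{F}_{t-1}$ and has $P(A)>0$; such measurability follows from step (a). Let $\pi(\omega)$ be the orthogonal projection of the origin onto $D_t(\omega)$, which is $\mathcal{F}_{t-1}$-measurable by (a) and satisfies $\pi\neq 0$ on $A$. Since $\pi\in D_t$ and $\pi$ is orthogonal to the linear direction of the affine subspace $D_t$, every $y\in D_t(\omega)$ fulfils $\pi(\omega)\cdot y=|\pi(\omega)|^2$. As $\Delta S_t\in D_t$ a.s.\ (the support is contained in $D_t$), we obtain $\pi\cdot\Delta S_t=|\pi|^2$ a.s. Define the predictable strategy $\phi_s:=0$ for $s\neq t$ and $\phi_t:=\pi\mathbf{1}_A$. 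Then
\[
V^{0,\phi}_T=\phi_t\cdot\Delta S_t=|\pi|^2\mathbf{1}_A\geq 0\quad\text{a.s.},
\]
with strict inequality on $A$, contradicting (NA). Hence $P(A)=0$, i.e.\ $D_t(\omega)$ is a linear subspace almost surely.

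The main obstacle is clearly the joint measurability (a): one must handle the random closed set that is the support of a regular conditional distribution, pass from a Castaing representation to measurability of the affine hull, and deal with the fact that the dimension of $D_t$ is itself a random variable. The NA part (b) is then short and transparent, once measurability of $\pi$ is in hand, because the affine-geometric identity $\pi\cdot y=|\pi|^2$ on $D_t$ turns a geometric failure of linearity into an $\mathcal{F}_{t-1}$-measurable arbitrage.
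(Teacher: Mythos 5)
Your proof is correct and follows the same two-part decomposition as the paper: (a) joint measurability and the affine-hull property, and (b) the use of (NA) to force $0\in D_t(\omega)$ a.s. The difference is one of presentation: the paper disposes of both sub-claims by citation (Proposition~A.1 of \cite{RS05} for (a), Theorem~3 of \cite{JS98} for (b)), whereas you rederive them. Your part (b) is a clean, self-contained arbitrage argument: you project the origin onto the affine set $D_t$, use the elementary identity $\pi\cdot y=|\pi|^2$ for all $y\in D_t$, and feed $\phi_t=\pi\mathbf{1}_{\{0\notin D_t\}}$ into (NA) to conclude $P(0\notin D_t)=0$; this is exactly the content of the cited Theorem~3 of \cite{JS98} made explicit, and it usefully shows where (NA) actually enters. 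Part (a), by contrast, is a plan rather than a proof: you correctly identify the nontrivial steps — that the support of the regular conditional law is an $\mathcal{F}_{t-1}$-measurable random closed set with a Castaing representation, and that stratification by rank yields $\mathcal{B}(\mathbb{R}^d)\otimes\mathcal{F}_{t-1}$-measurability of the affine hull — but these are precisely the points established in Proposition~A.1 of \cite{RS05}, and they require the most technical work. Net effect: your version is more transparent and self-contained on the (NA) side, while the paper's citation is tighter; to be on equal footing you would need to carry out the measurability argument in (a) to the same level of detail.
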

\begin{proof}
By Proposition A.1 of \cite{RS05} (condition (NA) is not necessary at this point),
$D_t \in {\cal B}(\mathbb{R}^d) \otimes {\cal H}$ and for almost all $\omega$,
$D_t(\omega)$ is an affine subspace of $\mathbb{R}^d$. From g) of Theorem 3 in \cite{JS98}, under
condition (NA),
$D_t(\omega)$ is, in fact, a non-empty vector subspace of $\mathbb{R}^d$, for almost all $\omega$ since it contains $0$.
\end{proof}

\begin{proposition}\label{visszafele}
Assume that $S$ satisfies the (NA) condition and that Assumptions \ref{ae}  and \ref{bellmann} hold true.
One can choose versions of the random functions $U_t,\ 0\leq t\leq T$, which are almost surely
nondecreasing,
continuous, finite and satisfy, outside a fixed negligible set,
\begin{eqnarray}
\label{AE+U} U_t(\lambda x)& \leq &  \lambda^{\overline{\gamma}} U_t(x)+C\lambda^{\overline{\gamma}},\\
\label{AE-U}U_t(\lambda x) & \leq & \lambda^{\underline{\gamma}} U_t(x)+C\lambda^{\underline{\gamma}},
\end{eqnarray}
for all $\lambda \geq 1$ and $x \in \mathbb{R}$.
Moreover,
there exist $\mathcal{F}_{t-1}$-measurable, finite valued random variables $N_{t-1} >0$
such that:
\begin{eqnarray}
\label{limU}
P\left(U_t(-N_{t-1}) <-\frac{2C}{\kappa_{t-1}}-1\vert \mathcal{F}_{t-1}\right) \geq 1-\kappa_{t-1}/2,
\end{eqnarray}
here $C$ is the same constant as in \eqref{AE+U} and \eqref{AE-U} above and $\kappa_{t-1}$ is as in \eqref{valaki}.
Finally, there exist $\mathcal{F}_{t}\otimes\mathcal{B}(\mathbb{R})$-measurable
functions $\tilde{\xi}_{t+1}$, taking values in $D_{t+1}$, $0 \leq t\leq T-1$ such that, almost surely,
\begin{equation}\label{pompom}
\forall x\in\mathbb{R}
\quad U_t(\omega,x)=E(U_{t+1}(x+
\tilde{\xi}_{t+1}(x) \Delta S_{t+1})\vert\mathcal{F}_{t}).
\end{equation}
\end{proposition}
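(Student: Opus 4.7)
The plan is to argue by backward induction on $t$ from $T$ down to $0$.

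\emph{Base case $t=T$.} Here $U_T=U$ is continuous, nondecreasing, and finite-valued by Assumption \ref{ae}, and it satisfies \eqref{AE+U}--\eqref{AE-U} thanks to Proposition \ref{egeszR}. For \eqref{limU} with $t=T$, $U$ is deterministic and, combining \eqref{ae-} with \eqref{minus}, $U(-\lambda\underline{x})\leq\lambda^{\underline{\gamma}}U(-\underline{x})\to-\infty$; thus setting $N_{T-1}:=\max(1,\lambda_{T-1})\,\underline{x}$ with $\lambda_{T-1}:=\bigl((2C/\kappa_{T-1}+2)/|U(-\underline{x})|\bigr)^{1/\underline{\gamma}}$ (which is $\mathcal{F}_{T-1}$-measurable through $\kappa_{T-1}$) yields $U(-N_{T-1})<-2C/\kappa_{T-1}-1$ a.s.

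\emph{Inductive step.} Assume the claims hold at $t+1$. I plug into the one-step machinery of Section \ref{se3} via the dictionary $V=U_{t+1}$, $\mathcal{H}=\mathcal{F}_t$, $\mathcal{F}=\mathcal{F}_{t+1}$, $Y=\Delta S_{t+1}$, $D=D_{t+1}$, $\alpha=\delta_t$, $\beta=\kappa_t$. Assumption \ref{hypD} is supplied by Proposition \ref{randomD}, Assumption \ref{AOAone} by Proposition \ref{karakter}, Assumption \ref{hypV1} by the induction hypothesis, Assumptions \ref{hypVmoinsinfty}--\ref{hypVinfty} by Lemma \ref{newlemma}, Assumption \ref{hypV2} by the inductive \eqref{AE+U}--\eqref{AE-U} for $U_{t+1}$, and Assumption \ref{hypV3} by the inductive \eqref{limU} with $N=N_t$. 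Proposition \ref{nyug} then produces $\tilde{\xi}_{t+1}\in\Xi_t$ with values in $D_{t+1}$ satisfying \eqref{pompom}; Lemma \ref{candymaxiessup} together with Remark \ref{vienna} furnish a continuous, nondecreasing version of $U_t=v$, which is finite by Lemma \ref{bound}.

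\emph{Propagation of the growth inequalities.} For $\lambda\geq 1$ the map $\xi\mapsto\xi/\lambda$ is a bijection of $\Xi_t$; using \eqref{AE+U} at $t+1$,
\begin{align*}
U_t(\lambda x)&=\mathrm{ess.}\!\sup_{\xi\in\Xi_t}E\bigl(U_{t+1}\bigl(\lambda(x+(\xi/\lambda)\Delta S_{t+1})\bigr)\,\big|\,\mathcal{F}_t\bigr)\\
&\leq \mathrm{ess.}\!\sup_{\xi'\in\Xi_t}E\bigl(\lambda^{\overline{\gamma}}U_{t+1}(x+\xi'\Delta S_{t+1})+C\lambda^{\overline{\gamma}}\,\big|\,\mathcal{F}_t\bigr)=\lambda^{\overline{\gamma}}U_t(x)+C\lambda^{\overline{\gamma}},
\end{align*}
a.s., and the analogous computation with $\underline{\gamma}$ yields \eqref{AE-U}. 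These hold for each fixed $(x,\lambda)$; intersecting over $(x,\lambda)\in\mathbb{Q}\times(\mathbb{Q}\cap[1,\infty))$ and using the continuity of the version $U_t$ promotes them to a simultaneous-in-$(x,\lambda)$ statement on a set of full measure.

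\emph{Construction of $N_{t-1}$ — the main obstacle.} The last assertion of Lemma \ref{bound}, applied at the current step with $I:=2C/\kappa_{t-1}+2$ (which is $\mathcal{F}_{t-1}$- and hence $\mathcal{F}_t$-measurable), delivers an $\mathcal{F}_t$-measurable $N'>0$ with $U_t(-N')\leq -I$ a.s. The difficulty is that $N_{t-1}$ must be $\mathcal{F}_{t-1}$-measurable, whereas $N'$ is only $\mathcal{F}_t$-measurable; I would resolve this by a conditional-quantile construction,
$$N_{t-1}:=\mathrm{ess.}\!\inf\bigl\{M\geq 0\ :\ P(N'\leq M\,|\,\mathcal{F}_{t-1})\geq 1-\kappa_{t-1}/2\bigr\},$$
which is $\mathcal{F}_{t-1}$-measurable and satisfies $P(N'\leq N_{t-1}\,|\,\mathcal{F}_{t-1})\geq 1-\kappa_{t-1}/2$ by construction. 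Monotonicity of $U_t$ then gives $U_t(-N_{t-1})\leq U_t(-N')\leq -I<-2C/\kappa_{t-1}-1$ on $\{N'\leq N_{t-1}\}$, establishing \eqref{limU} at $t$ and completing the induction.
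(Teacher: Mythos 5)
Your proof is correct and follows essentially the same backward-induction strategy as the paper: feed $V=U_{t+1}$, $\mathcal{H}=\mathcal{F}_t$, $Y=\Delta S_{t+1}$, $D=D_{t+1}$ into the one-step machinery, verify Assumptions \ref{hypD}--\ref{hypV3} from the inductive hypothesis, invoke Lemmata \ref{bound}, \ref{candymaxiessup} and Propositions \ref{candy}, \ref{nyug}, and push the hypotheses one step back. Two details differ only cosmetically from the paper: you propagate \eqref{AE+U}--\eqref{AE-U} by the substitution $\xi\mapsto\xi/\lambda$ inside the essential supremum, while the paper evaluates both sides along the optimizer $\tilde{\xi}_{t+1}$ once it is available; and you construct $N_{t-1}$ explicitly as a conditional quantile of $N'$ (with $I=2C/\kappa_{t-1}+2$ to ensure the strict inequality), while the paper obtains an $\mathcal{F}_{t-1}$-measurable bound non-constructively from $P(N'\leq n\mid\mathcal{F}_{t-1})\to 1$. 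Both variants are valid and lead to the same conclusion, with your rational-intersection-plus-continuity step making explicit what the paper leaves implicit.
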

\begin{proof} Going backwards from $T$ to $0$, we will
apply Lemmata \ref{Kbound}, \ref{bound} and \ref{candymaxiessup} and Proposition
\ref{candy} with the choice
$V:=U_t,\ \mathcal{H}=\mathcal{F}_{t-1}, \mathcal{F}=\mathcal{F}_{t},\ D:=D_t,\
Y:=\Delta S_{t}$.
Then for each $x \in \mathbb{R}$, we will choose  the random function $U_{t-1}(x)$ to be $A(x)$ which is an
almost surely nondecreasing and continuous version of $U_{t-1}(x)$ (see Lemma \ref{candymaxiessup} and Remark \ref{vienna}).
So we need to verify that
Assumptions \ref{hypD}, \ref{AOAone}, \ref{hypV1},  \ref{hypVmoinsinfty}, \ref{hypV2} and \ref{hypV3} hold true.

We start by the ones which can be verified directly for all $t$.
The price process $S$ satisfies the (NA) condition.
So by
Proposition \ref{karakter}, Assumption
\ref{AOAone} holds true with $\alpha=\delta_{t-1}$ and $\beta=\kappa_{t-1}$. Moreover,
by Proposition \ref{randomD}, $D_t$ satisfies Assumption \ref{hypD}. Now
by Proposition \ref{propre1}, \eqref{ut-x} and \eqref{ut+x} hold true thus Lemma \ref{determ} with $V=U_{t}$, $Y=\Delta S_{t}$, $\mathcal{H}=\mathcal{F}_{t-1}$ implies that Assumption \ref{hypVmoinsinfty} holds true.

It remains to prove that Assumptions \ref{hypV1}, \ref{hypV2} and \ref{hypV3}
hold.
We start at time $t=T$.
The non-random function $U_T=U$ is continuous and non-decreasing by Assumption \ref{ae}, so Assumption \ref{hypV1} holds.
Equations \eqref{bajka} and \eqref{bajka-} for $V=U_T$ follow from Proposition \ref{egeszR}, so Assumption \ref{hypV2} (and also \eqref{AE+U} and
\eqref{AE-U} for $t=T$) holds.
Assumption \ref{hypV3} (and also \eqref{limU} for $t=T$) is satisfied because for any $x\geq \underline{x}$,
$$U(-x) \leq  \left(\frac{x}{\underline{x}}\right)^{\underline{\gamma}}U(-\underline{x})$$
by \eqref{ae-} and $U(-\underline{x}) < 0$ by \eqref{minus}, so we may choose $N_{T-1}:=\max\left(\underline{x}, \underline{x}
\left(\frac{-(2C/\kappa_{T-1})-2}{U(-\underline{x})}\right)^{\frac1{\underline{\gamma}}}\right)$.

Now we are able to use Proposition \ref{candy} and there exists a
function $\tilde{\xi}_{T}$ with values in $D_T$ such that \eqref{pompom} holds for $t=T-1$. Moreover, by Lemmata \ref{bound} and  \ref{candymaxiessup},
we can chose for $U_{T-1}(\omega, \cdot)$  an
almost surely nondecreasing (finite-valued) and continuous version  (namely $A(\omega, \cdot)$ see Lemma \ref{candymaxiessup} and Remark \ref{vienna}).
Hence Assumption \ref{hypV1} holds for $U_{T-1}$.
We now prove that Assumption \ref{hypV2} (and also \eqref{AE+U} and
\eqref{AE-U} for $t=T-1$) holds for $V=U_{T-1}$.
For some fixed $x \in \mathbb{R}$ and $\lambda \geq 1$,
almost surely
\begin{eqnarray*}
U_{T-1}(\lambda x) & = & E(U_T(\lambda x +\tilde{\xi}_{T}(\lambda x) \Delta S_T)
\vert\mathcal{F}_{T-1}) \\
& \leq &
\lambda^{\overline{\gamma}}(E(U_T(x+ (\tilde{\xi}_{T}(\lambda x)/
\lambda)  \Delta S_T)\vert\mathcal{F}_{T-1}) + C) \\
& \leq &  \lambda^{\overline{\gamma}} (U_{T-1}(x)+C).
\end{eqnarray*}
where the first inequality follows from \eqref{egyedik} for $U_T$ (or \eqref{AE+U} for $t=T$).
Clearly, there is a common zero-probability set outside which this holds for all rational $x,\lambda$.
Using continuity of $U_{T-1}$ just like in Lemma \ref{indis}, this extends to all $\lambda,x$.
Thus \eqref{AE+U} holds for $t=T-1$. By the same argument,  \eqref{AE-U} also holds for $t=T-1$. Thus Assumption \ref{hypV2} is proved for $V=U_{T-1}$.

It remains to show that Assumption \ref{hypV3} holds for $U_{T-1}$ (and also \eqref{limU} for $t=T-1$).
Choose $I_{T-1}=2C/\kappa_{T-1}+1$ which is a.s. finite-valued and invoke  Lemma \ref{bound} (with $V=U_T$) to get some non-negative, finite valued and
$\mathcal{F}_{T-1}$-measurable random variable $N'$ such that
$U_{T-1}(-N')\leq -I_{T-1}$ a.s.
Let us define the $\mathcal{F}_{T-2}$-measurable events
$$
A_m:=\{\omega: P(N' \leq m\vert
\mathcal{F}_{T-2})(\omega)\geq 1-\kappa_{t-2}(\omega)/2\},\ m\in\mathbb{N}.
$$
As  $P(N' \leq m\vert
\mathcal{F}_{T-2}) $ trivially tends to $1$  when $m \to \infty$, the union of the sets $A_m$  cover a full measure set hence,
after defining recursively the partition
$$
B_1:=A_1,\quad B_{m+1}:=A_{m+1}\setminus \left(\cup_{j=1}^m A_j\right),
$$
we can construct the non-negative, $\mathcal{F}_{T-2}$-measurable random variable
$$
N_{T-2}:=\sum_{m=1}^{\infty} m 1_{B_m}
$$
such that
$P(N' \leq N_{T-2}\vert \mathcal{F}_{T-2}) \geq
1-\kappa_{t-2}/2$ a.s. Then a.s. (recall that for a.e. $\omega$, $U_{T-1}(\omega,.)$ is non-decreasing):
\begin{eqnarray*}
P(U_{T-1}(-N_{T-2}) <-I_{T-1}\vert \mathcal{F}_{T-2})
 \geq  P(\{N' \leq N_{T-2}\} \cap
\{U_{T-1}(-N') <-I_{T-1}\} \vert \mathcal{F}_{T-2})
 \geq 1-\kappa_{T-2}/2.
\end{eqnarray*}
We are now able to use
Proposition \ref{candy} for $U_{T-1}$, \eqref{pompom} holds for $t=T-2$ and we can continue the procedure of dynamic programming
in an analogous way.
\end{proof}

\begin{proof}[of Theorem \ref{main}.]
We use the results of Proposition \ref{visszafele}. Set $\phi^*_1:=\tilde{\xi}_1(x)$ and define inductively:
$$
\phi^*_{t}:=
\tilde{\xi}_{t}\left(x+\sum_{j=1}^{t-1} \phi^*_j \Delta S_j\right) \
1\leq t\leq T.
$$
Joint measurability of $\tilde{\xi}_{t}$ assures that $\phi^*$ is a predictable process
with respect to the given
filtration. Lemma
\ref{candymaxiessup} and Propositions \ref{visszafele} and \ref{candy} (recall that we have chosen for $U_{t-1}$ in Proposition  \ref{visszafele} the good version $A$ of Lemma
\ref{candymaxiessup}) show that for $t=1,\ldots,T$ a.s.:
\begin{eqnarray}
\label{sarko}
E(U_{t}(V_{t}^{x,\phi^*})\vert\mathcal{F}_{t-1})=U_{t-1}(V_{t-1}^{x,\phi^*}).
\end{eqnarray}

We will now show that if $EU(V_T^{x,\phi^*})$ exists then $\phi^*\in\Phi(U,x)$ and for any strategy $\phi\in\Phi(U,x)$,
\begin{equation}\label{ok}
E(U(V^{x,\phi}_T))\leq E(U(V^{x,\phi^*}_T)).
\end{equation}
This will complete  the proof.

Let us consider first the case where $EU^+(V_T^{x,\phi^*})<\infty$.
Then by \eqref{sarko} and the
(conditional) Jensen inequality (see Corollary \ref{jensenjensen} with $g(x)=x^+$),
\begin{eqnarray*}
U_{T-1}^+(V_{T-1}^{x,\phi^*}) & \leq & E(U_T^+(V_T^{x,\phi^*})\vert\mathcal{F}_{T-1}) \mbox{ a.s.}
\end{eqnarray*}
Thus $E[U_{T-1}^+(V_{T-1}^{x,\phi^*})]<\infty$ and  repeating the argument, $E[U_{t}^+(V_{t}^{x,\phi^*})]<\infty$
for all $t$.

Now let us turn to the case where $EU^-(V_T^{x,\phi^*})<\infty$. The same argument as above with negative parts
instead of positive parts shows that $E[U_{t}^-(V_{t}^{x,\phi^*})]<\infty$, for all $t$.

It follows that, for all $t$,
$EU_{t}(V_t^{x,\phi^*})$ exists and so does $E(U_{t}(V_t^{x,\phi^*})\vert\mathcal{F}_{t-1})$ by Lemma \ref{ujdonatuj}.
This Lemma also implies that $E(E(U_{t}(V_t^{x,\phi^*})\vert\mathcal{F}_{t-1}))=EU_{t}(V_t^{x,\phi^*})$.
Hence
\begin{eqnarray}
\nonumber
E(U_T(V_T^{x,\phi^*}))& = & E(E(U_T(V_T^{x,\phi^*})\vert\mathcal{F}_{T-1}))=E(U_{T-1}
(V_{T-1}^{x,\phi^*}))\\
\label{nono}
& = & \ldots=E(U_0(x)).
\end{eqnarray}
By \eqref{lasagna} and \eqref{utetu},
$-\infty < U (x) \leq EU_0(x) < \infty$, hence also
$E(U_T(V_T^{x,\phi^*}))$ is finite and $\phi^*\in\Phi(U,x)$ follows.

Let $\phi\in\Phi(U,x)$, then $E(U(V^{x,\phi}_T))$ exists and is finite by definition of $\Phi(U,x)$.
By Lemma \ref{ujdonatuj}, we have that, for all $t$, $E(U(V^{x,\phi}_T)\vert\mathcal{F}_{t})$ exists
and that $E(E(U(V^{x,\phi}_T)\vert\mathcal{F}_{t}))=E(U(V^{x,\phi}_T))$.

We prove by induction that $E(U(V^{x,\phi}_T)\vert\mathcal{F}_{t}) \leq U_t(V^{x,\phi}_t)$ a.s.
For $t=T$, this is trivial.
Assume that it holds true for $t+1$.

Proposition \ref{propre1} (see \eqref{ut-x} and \eqref{ut+x}) and  Lemma \ref{determ} show that
$E(U^{\pm}_{t+1}(V^{x,\phi}_{t}+ \phi_{t+1}\Delta S_{t+1})\vert\mathcal{F}_{t}) <  +\infty$ and
$E(U_{t+1}(V^{x,\phi}_{t}+ \phi_{t+1}\Delta S_{t+1})\vert\mathcal{F}_{t})$ exists and it is finite.
So, by the induction hypothesis, \eqref{pompom}, Lemma
\ref{candymaxiessup}
and Proposition \ref{candy}, a.s.
$$
E(U(V^{x,\phi}_T)\vert\mathcal{F}_{t}) \leq E(U_{t+1}(V^{x,\phi}_{t}+ \phi_{t+1}\Delta S_{t+1})\vert\mathcal{F}_{t})\leq
E(U_{t+1}(V^{x,\phi}_{t}+ \tilde{\xi}_{t+1}(V_t^{x,\phi})\Delta S_{t+1})\vert\mathcal{F}_{t})=U_t(V_t^{x,\phi}).
$$

Applying the result at $t=0$, we obtain that $E(U(V^{x,\phi}_T)\vert\mathcal{F}_{0}) \leq U_0(x)$.
Using again $-\infty < U (x) \leq EU_0(x) < \infty$ (see \eqref{lasagna} and \eqref{utetu}), we obtain that
\begin{equation}\label{hasznos}
E(U(V^{x,\phi}_T))\leq E(U_0(x)).
\end{equation}
Putting \eqref{nono} and \eqref{hasznos} together, one gets
exactly \eqref{ok}.
\end{proof}

\begin{remark} We rectify here the statement of Theorem 2.7 in \cite{RS05}: just like in Theorem
 \ref{main} above, one has to add the condition that $EU(V^{c,\phi^*}_T)$ exists
 as this was implicitly assumed in its proof.
\end{remark}

We would like to check that Theorem \ref{main}
holds in a concrete, broad class of market models.
Let $\mathcal{M}$ denote the set of $\mathbb{R}$-valued random
variables $Y$ such that $E\vert Y\vert^p <\infty$ for all $p>0$.
This family is clearly closed under addition, multiplication and
taking conditional expectation. With a slight abuse of
notation, for a $d$-dimensional random variable $Y$, we write
$Y\in\mathcal{M}$ when we indeed mean $\vert Y\vert\in\mathcal{M}$.

\begin{proposition}
\label{camarche}
Let Assumption \ref{ae} hold and assume that,
\begin{equation}\label{errol}
U(x)\geq -m(|x|^p+1)\mbox{ for all }x\in\mathbb{R},
\end{equation}
holds for some $m,p>0$.
Furthermore, assume that for all
$0\leq t\leq T$ we have $\Delta S_t\in\mathcal{M}$ and
that (NA) holds with $\delta_t, \kappa_t$ of Proposition \ref{karakter}
satisfying $1/\delta_t,\ 1/\kappa_t\in\mathcal{M}$ for $0\leq t\leq T-1$.\\
Then there exists a solution $\phi^*$ of Problem \ref{leprobleme} with $\phi_t^*\in\mathcal{M}$ for $1\leq t\leq T$.
\end{proposition}
\begin{remark}
In the light of Proposition \ref{karakter}, $1/\delta_t,\ 1/\kappa_t\in\mathcal{M}$ for $0\leq t\leq T-1$
is a certain strong form of no-arbitrage. Note that if either $\kappa_t$ or $\delta_t$ is not constant,
then even a concave utility maximisation problem may be ill posed (see Example 3.3 in \cite{cr}), so
an integrability assumption on $1/\delta_t,\ 1/\kappa_t$ looks reasonable.

When $S$ has independent increments and (NA) holds, then one can choose $\kappa_t=\kappa$ and $\beta_t=\beta$
in Proposition \ref{karakter} with deterministic constants $\kappa,\beta>0$. These trivially
satisfy $1/\delta_t,\ 1/\kappa_t\in\mathcal{M}$ for $0\leq t\leq T-1$. See also section 8 of \cite{cr11} for
other concrete examples where $1/\delta_t,\ 1/\kappa_t\in\mathcal{M}$ is verified.

The assumption that $\Delta S_{t+1}, 1/\delta_t,\ 1/\kappa_t\in\mathcal{M} $ for $0\leq t\leq T-1$ could
be weakened to the existence of the $N$th moment for $N$ large
enough but this would lead to complicated book-keeping with no essential
gain in generality, which we
prefer to avoid.
\end{remark}
\begin{remark}
Assume that $U(x)\geq -m(|x|^p+1)$ holds true only for all $x \leq 0$. For $x \in \mathbb{R}$,
$
U(x)  =  U(x) 1_{x \leq 0}  + U(x)1_{x>0} \geq -m(|x|^p+1)1_{x \leq 0} + U(x) 1_{x>0}.
$
From Assumption \ref{ae}, $U(x) 1_{x>0} \geq U(0)=0$. Thus $U(x)\geq -m(|x|^p+1)$ holds true for all $x \in \mathbb{R}$ assuming only that it holds true for all $x \leq 0$.
\end{remark}

\begin{proof}[of Proposition \ref{camarche}.]
In order to prove Proposition \ref{camarche}, we need to refine the proof of Proposition \ref{visszafele}.
The price process $S$ satisfies the (NA) condition.
So by
Proposition \ref{karakter}, Assumption
\ref{AOAone} holds true with $\alpha=\delta_{t-1}$ and $\beta=\kappa_{t-1}$. Moreover,
by Proposition \ref{randomD}, $D_t$ satisfies Assumption \ref{hypD}.

\noindent\textsl{Claim : one can choose versions of the random function $U_t$ that satisfy Assumptions \ref{hypV1},  \ref{hypVmoinsinfty}, \ref{hypV2}
(with $\underline{\gamma}$ and $\overline{\gamma}$ defined in Assumption \ref{ae} and $C$ in Proposition \ref{egeszR}) and \ref{hypV3} (with
$\beta=\kappa_{t-1}$, $C$ defined in Proposition \ref{egeszR}, $N$ will be called $N_{t-1}$).
Moreover, $N_{t-1} \in {\cal M}$ and there exist non-negative, adapted random variables $C_t$, $J_{t-1}$, $M_{t-1}$ belonging
to ${\cal M}$ (i.e. $C_t$ is ${\cal F}_t$-measurable and $J_{t-1}$ and $M_{t-1}$ are ${\cal F}_{t-1}$-measurable) and numbers
$\lambda_t,\theta_{t-1}>0$ such that, for a.e. $\omega$,
\begin{eqnarray}
\label{ouf}
U_t(x) & \geq &  U(x), \mbox{ for all $x$}\\
\label{Uestborne}
U^+_t(x) & \leq &  C_t(|x|^{\lambda_t}+1),\mbox{ for all $x$},\\
\label{stratbornerec}
\tilde{K}_{t-1}(x) & \leq &  M_{t-1}(|x|^{\theta_{t-1}}+1)\; \mbox{ for all $x$}.
\end{eqnarray}
In addition, for all $x$, $y \in \mathbb{R}$,
\begin{eqnarray}
\label{espborne}
E(U^+_{t}(x+ |y| |\Delta S_{t}|)\vert\mathcal{F}_{t-1}) & \leq & J_{t-1} (|x|^{\lambda_{t}}+|y|^{\lambda_{t}}+1)< \infty, \mbox{ a.s.}
\end{eqnarray}
where the ${\cal F}_{t-1}$-measurable random variable $\tilde{K}_{t-1}(x)$ is just $\tilde{K}(x)$ defined
in \eqref{Ktilde} for the choice $V=U_{t}$, $Y=\Delta S_{t}$ and $\mathcal{H}:=\mathcal{F}_{t-1}$. Finally,
there exist $\mathcal{F}_{t-1}\otimes\mathcal{B}(\mathbb{R})$-measurable
functions $\tilde{\xi}_{t}$, taking values in $D_{t}$, such that, almost surely,
\begin{equation}\label{pompombis}
\forall x\in\mathbb{R}
\quad U_{t-1}(x)=E(U_{t}(x+
\tilde{\xi}_{t}(x) \Delta S_{t})\vert\mathcal{F}_{t-1}).
\end{equation}}

We proceed by backward induction starting at $t=T$.
By Assumption \ref{ae} and Proposition \ref{egeszR},
Assumptions \ref{hypV1} and \ref{hypV2} clearly hold.
Choosing
$$N_{T-1}:=\max \left(\underline{x},\underline{x}
\left(\frac{-(2C/\kappa_{T-1})-2}{U(-\underline{x})}\right)^{\frac1{\underline{\gamma}}}\right),$$ just like in the proof
of Proposition \ref{visszafele} (only \eqref{ae-} and \eqref{minus} from Assumption \ref{ae} were used there), we can see that Assumption \ref{hypV3}  holds true and $N_{T-1}\in\mathcal{M}$.

\eqref{ouf} is trivial and \eqref{veges} in Assumption \ref{hypVmoinsinfty} follows from \eqref{errol}.
We estimate, using Assumption \ref{ae} and the trivial $U(x)\leq U(\overline{x})$, $x\leq \overline{x}$,
\begin{eqnarray}\label{szalma}
U(x) & \leq & \frac{|x|^{\overline{\gamma}}}{\overline{x}^{\overline{\gamma}}}U(\overline{x})+ c +
U(\overline{x}) \leq C_T(|x|^{\overline{\gamma}}+1),
\end{eqnarray}
for all $x$, with $C_T=\max \left(\frac{U(\overline{x})}{\overline{x}^{\overline{\gamma}}}, c +
U(\overline{x}) \right)$. From Assumption \ref{ae}, $C_T$ is a non-negative constant and it is clear that
\eqref{szalma} also holds true for $U^+$ and thus \eqref{Uestborne} holds true  with $\lambda_T:=\overline{\gamma}$ (we are dealing with a deterministic function at this stage).
As $|x+y|^{\overline{\gamma}} \leq 2^{\overline{\gamma}}(|x|^{\overline{\gamma}}+|y|^{\overline{\gamma}})$, we obtain a.s.
\begin{eqnarray}
\nonumber E(U^+(x + |y||\Delta S_{T}|)\vert\mathcal{F}_{T-1})  & \leq & E(C_T\vert\mathcal{F}_{T-1})
(2^{\overline{\gamma}}|x|^{\overline{\gamma}}+1) + 2^{\overline{\gamma}}|y|^{\overline{\gamma}}
E(C_T|\Delta S_{T}|^{\overline{\gamma}}\vert\mathcal{F}_{T-1}) \\
\nonumber & \leq : & J_{T-1} (|x|^{\overline{\gamma}}+|y|^{\overline{\gamma}}+1) <\infty.
\end{eqnarray}
It is clear that $J_{T-1}$ belongs
to ${\cal M}$ (recall $\Delta S_{T} \in {\cal M}$)  and that $J_{T-1}$ is ${\cal F}_{T-1}$-measurable.
Thus \eqref{espborne}  and \eqref{nemvegtelen} hold true and  Assumption \ref{hypVmoinsinfty} is satisfied.
To finish with the step $t=T$, it remains to prove \eqref{stratbornerec}.
As \eqref{errol} holds true, we can use \eqref{stratborne} in Lemma \ref{Kbound} and we just have to prove that
$M=M_{T-1}\in\mathcal{M}$. From Lemma \ref{Kbound}, $M_{T-1}$ is a polynomial function of
${1}/{\delta_{{T-1}}}$, ${1}/{\kappa_{T-1}}$, $N_{T-1}$ and $L_{T}$, $L_t$ will be $L$ from Lemma
\ref{Kbound} corresponding to $V=U_t$. As $L_T=E(U_T^+(1+|\Delta S_T|)|\mathcal{F}_{T-1}) \leq 3 J_{T-1}$ we get that $L_T\in\mathcal{M}$ and $M_{T-1}\in\mathcal{M}$
as well (recall
that we assumed that ${1}/{\delta_{T-1}}$ and ${1}/{\kappa_{T-1}}$ belonged to ${\cal M}$). Now we are able to use Proposition \ref{candy} and there exists a
function $\tilde{\xi}_{T}$ with values in $D_T$ such that \eqref{pompombis} holds for $t=T-1$.


Let us now proceed to the step $t=T-1$. As Assumptions \ref{hypD}, \ref{AOAone},  \ref{hypV1}, \ref{hypVmoinsinfty}, \ref{hypV2} and \ref{hypV3}
hold true for $V=U_T$, we can apply Lemmata \ref{bound} and  \ref{candymaxiessup} for $V=U_{T}$, which shows that one can choose a version of $U_{T-1}$ which satisfies Assumption \ref{hypV1}.
Just like in the proof of Proposition \ref{visszafele},
Assumption \ref{hypV2} also holds true.
For $V=U_T$, we get that by Lemmata \ref{Kbound} and \ref{bound} for all $x$, a.s. (see \eqref{supborne}),
\begin{eqnarray}
\label{fin}
U_{T-1}(x) & \leq & E(U_{T}(|x|+\tilde{K}_{T-1} (x)|\Delta S_{T}|) \vert\mathcal{F}_{T-1}) \\
\nonumber
& \leq & E(U^+_{T}(|x|+\tilde{K}_{T-1} (x)|\Delta S_{T}|) \vert\mathcal{F}_{T-1}) \\
\nonumber
      & \leq &  E(C_{T}(||x|+\tilde{K}_{T-1} (x)|\Delta S_{T}||^{\lambda_T}+1)\vert\mathcal{F}_{T-1})\\
      \nonumber
& \leq: & C_{T-1} (|x|^{\max\{\lambda_T\theta_{T-1},\lambda_T\}}+1)
\end{eqnarray} for some positive ${\cal F}_{T-1}$-measurable $C_{T-1}$. Thus
one also gets that for all $x$, $U^+_{T-1}(x) \leq C_{T-1} (|x|^{\max\{\lambda_T\theta_{T-1},\lambda_T\}}+1)$ a.s.
As both $U^+_{T-1}$ and  $ x \to C_{T-1} (|x|^{\max\{\lambda_T\theta_{T-1},\lambda_T\}}+1)$ are continuous, $U^+_{T-1}(x) \leq C_{T-1} (|x|^{\max\{\lambda_T\theta_{T-1},\lambda_T\}}+1)$ holds for all $x$ simultaneously, outside
a fixed negligible set (see Lemma \ref{indis}) and
\eqref{Uestborne} is satisfied with $\lambda_{T-1}:=\max\{\lambda_T\theta_{T-1},\lambda_T\}$.
As $M_{T-1}$ and $C_{T}$ belong to $\mathcal{M}$ from step $t=T$, $C_{T-1}$ also belongs to ${\cal M}$.
Furthermore, for all $x$, $y$, a.s.
\begin{eqnarray*}
E(U_{T-1}^+(x + |y| |\Delta S_{T-1}|)\vert\mathcal{F}_{T-2})  & \leq & E(C_{T-1}\vert\mathcal{F}_{T-2})
(2^{\lambda_{T-1}}|x|^{\lambda_{T-1}}+1) + \\
&  & 2^{\lambda_{T-1}}|y|^{\lambda_{T-1}}
E(C_{T-1}|\Delta S_{T-1}|^{\lambda_{T-1}}\vert\mathcal{F}_{T-2}) \\
& \leq: & J_{T-2}(|x|^{\lambda_{T-1}}+|y|^{\lambda_{T-1}}+1) <\infty.
\end{eqnarray*}
As $J_{T-2}$ clearly belongs
to ${\cal M}$ and
$J_{T-2}$ is ${\cal F}_{T-2}$-measurable, \eqref{espborne} is proved.
So \eqref{nemvegtelen} in Assumption \ref{hypVmoinsinfty} holds true.

Choosing $\xi=0$ in \eqref{vanek}, we get by \eqref{ouf} for $t=T$ that, for all $x \in \mathbb{R}$,
$$U_{T-1}(x)\geq
E(U_{T}(x)|\mathcal{F}_{T-1}) \geq U(x)>-\infty\mbox{ a.s.}.$$
As both $U_{T-1},U$ are continuous, $U_{T-1}(x)\geq U(x)$ holds for all $x$ simultaneously, outside
a fixed negligible set (see Lemma \ref{indis}) and \eqref{ouf} holds true. \\
Thus, for all $x$, $y$, a.s.,
$U_{T-1}(x -|y| |\Delta S_{T-1}|)  \geq U(x -|y| |\Delta S_{T-1}|)$. This implies that
$$E(U^-_{T-1}(x -|y| |\Delta S_{T-1}|) \vert\mathcal{F}_{T-2}) \leq E(U^-(x -|y| |\Delta S_{T-1}|)\vert\mathcal{F}_{T-2})\leq m E(|x -|y| |\Delta S_{T-1}||^p+1)\vert\mathcal{F}_{T-2})<\infty,$$
by \eqref{errol}. Thus \eqref{veges} holds true and Assumption \ref{hypVmoinsinfty} follows.

We now establish the existence of $N_{T-2}\in\mathcal{M}$ such that Assumption \ref{hypV3} holds true with $N=N_{T-2}$ and
$V=U_{T-1}$. Let us take the random variable ${N}'$ constructed in the proof of Lemma \ref{bound} for $V=U_T$ which
is such that $U_{T-1}(-N')\leq -I_{T-1}$, where
$I_{T-1}:=(2C/\kappa_{T-1})+1$.
By \eqref{jeudi},
$N'$ is a polynomial function of ${1}/{\kappa_{T-1}}$,
$N_{T-1}$ (which belong to ${\cal M}$) and
$E(U^+_T(\bar{K}_{T-1} |\Delta S_{T}|) \vert \mathcal{F}_{T-1})$, where $\bar{K}_{T-1}$ is defined as
$\bar{K}$ (see \eqref{BARK}) when $V=U_T$. As  $\bar{K}_{T-1}$ is a polynomial function of
$N_{T-1}$, $1/\delta_{T-1}$, $1/\kappa_{T-1}$ and $L_T$, we have $\bar{K}_{T-1} \in {\cal M}$ (recall from the end of step $t=T$ that $L_{T} \in {\cal M}$).
As $E(U^+_T(\bar{K}_{T-1} |\Delta S_{T}|) \vert \mathcal{F}_{T-1})$ is bounded by $J_{T-1}(0+ \bar{K}_{T-1}^{\lambda_T}+1)$ by
\eqref{espborne} for $t=T$, we conclude that $N'$ belongs to $\mathcal{M}$. Let us now set
$$
N_{T-2}:=\frac{2E(N'\vert \mathcal{F}_{T-2})}{\kappa_{T-2}}\in\mathcal{M}.
$$
The (conditional) Markov inequality implies that a.s.
$$
P(N' > N_{T-2}\vert\mathcal{F}_{T-2})\leq \frac{E(N'\vert\mathcal{F}_{T-2})}{N_{T-2}}=\frac{\kappa_{T-2}}2.
$$
As in the proof of Proposition \ref{visszafele}, a.s.
\begin{eqnarray*}
P(U_{T-1}(-N_{T-2})\leq -I_{T-1}\vert\mathcal{F}_{T-2})& \geq  &   P(\{N' \leq N_{T-2}\} \cap
\{U_{T-1}(-N') <-I_{T-1}\} \vert \mathcal{F}_{T-2}) \\
& \geq  &   P(\{N' \leq N_{T-2}\}\vert\mathcal{F}_{T-2}) \geq 1-\kappa_{T-2}/2,
\end{eqnarray*}
showing Assumption \ref{hypV3} for $V=U_{T-1}$.

We now turn to \eqref{stratbornerec}.
From \eqref{errol} and \eqref{ouf}, one can apply \eqref{stratborne} in Lemma \ref{Kbound} and \eqref{stratbornerec} is satisfied with some $M_{T-2}$ which is a polynomial function of ${1}/{\delta_{{T-2}}}$, ${1}/{\kappa_{T-2}}$, $N_{T-2}$ and $L_{T-1}$. So we just have to prove that
$M_{T-2}\in\mathcal{M}$.  As $L_{T-1}=E(U_{T-1}^+(1+|\Delta S_{T-1}|)|\mathcal{F}_{T-2}) \leq 3 J_{T-2}$ we get that $L_{T-1}\in\mathcal{M}$ and $M_{T-2}\in\mathcal{M}$
as well.
This concludes the step $t=T-1$. We are able to use Proposition \ref{candy} and there exists a
function $\tilde{\xi}_{T-1}$ with values in $D_{T-1}$ such that \eqref{pompombis} holds for $t=T-2$ and
one can continue this inductive procedure in an analogous way. The claim is proved.

Now, since by \eqref{Uestborne}
$$
EU_0(x)\leq EU^+_0(x)\leq (|x|^{\lambda_0} +1)EC_0<\infty,
$$
\eqref{lasagna} holds true and thus Assumption \ref{bellmann} is satisfied.

Set $\phi^*_1:=\tilde{\xi}_1(x)$ and define inductively:
$$
\phi^*_{t}:=
\tilde{\xi}_{t}\left(x+\sum_{j=1}^{t-1} \phi^*_j \Delta S_j\right) \
1\leq t\leq T.
$$
As in the proof of Theorem \ref{main}, joint measurability of $\tilde{\xi}_{t}$ assures that $\phi^*$ is a predictable process
with respect to the given
filtration.
We set $V_{t}^{x,\phi^*}= x+\sum_{j=1}^{t} \phi^*_j \Delta S_j$.
We show by induction that $\phi_t^*\in\mathcal{M}$ (and thus $\phi^*\in\Phi(U,x)$) and
$V_t^{x,\phi^*}\in\mathcal{M}$ for all $t$.

First, by \eqref{macha} and \eqref{stratbornerec}, on a full measure set, $\forall x\in \mathbb{R}$, $\forall 0 \leq t \leq T$, we get that
\begin{eqnarray}
\label{lancelot}
|\tilde{\xi}_t(x) |\leq \tilde{K}_{t-1}(x) \leq M_{t-1} (1+\vert x\vert^{\theta_{t-1}}),
\end{eqnarray}
where $M_{t-1} \in \mathcal{M}$.\\
For $t=1$, as $\phi^*_1=\tilde{\xi}_1(x)$, \eqref{lancelot} shows  that
$\phi_1^*\in\mathcal{M}$. This implies that $V_{1}^{x,\phi^*}= x+ \phi^*_1 \Delta S_1\in\mathcal{M}$.\\
Assume that for some $t$, $\phi_{t-1}^*\in\mathcal{M}$ and
$V_{t-1}^{x,\phi^*}\in\mathcal{M}$.
By \eqref{lancelot} again,
$$|\phi^*_{t}|=
\left|\tilde{\xi}_{t}\left(V_{t-1}^{x,\phi^*}\right)\right| \leq M_{t-1} (1+\vert V_{t-1}^{x,\phi^*}\vert^{\theta_{t-1}}),$$
and thus $\phi^*_{t} \in\mathcal{M}$.
As $V_{t}^{x,\phi^*}= V_{t-1}^{x,\phi^*}+ \phi^*_t \Delta S_t$, we also get that
$V_{t}^{x,\phi^*}\in\mathcal{M}$ and the argument is complete.

Now by \eqref{errol} and \eqref{ouf}, $U_{t}(V_{t}^{x,\phi^*}) \geq U(V_{t}^{x,\phi^*})\geq -m(|V_{t}^{x,\phi^*}|^p+1)$.
Using  \eqref{Uestborne},
$U_{t}(V_{t}^{x,\phi^*}) \leq   C_t(|V_{t}^{x,\phi^*}|^{\lambda_t}+1)$ and thus $U_{t}(V_{t}^{x,\phi^*}) \in \mathcal{M}$. In particular
$E(U_{t}(V_{t}^{x,\phi^*}))$ and $E(U_0(x))$ are finite.

Recall that from Lemma \ref{candymaxiessup}, Propositions \ref{visszafele} and \ref{candy}, for $t=1,\ldots,T$, one has
$$
E(U_{t}(V_{t}^{x,\phi^*})\vert\mathcal{F}_{t-1})=U_{t-1}(V_{t-1}^{x,\phi^*}) \mbox{ a.s.}
$$
Thus
\begin{eqnarray}
\nonumber
E(U_T(V_T^{x,\phi^*}))& = & E(E(U_T(V_T^{x,\phi^*})\vert\mathcal{F}_{T-1}))=E(U_{T-1}
(V_{T-1}^{x,\phi^*}))\\
\label{nono1}
& = & \ldots=E(U_0(x)).
\end{eqnarray}
As in the proof of Theorem \ref{main}, for any $\phi\in\Phi(U,x)$,  we obtain that $E(U(V^{x,\phi}_T)\vert\mathcal{F}_{0}) \leq U_0(x)$ a.s.
As $EU_0(x)<\infty$, it follows that
$
E(U(V^{x,\phi}_T))\leq E(U_0(x)).$
So from \eqref{nono1}, one gets
$$
E(U(V^{x,\phi}_T))\leq E(U(V^{x,\phi^*}_T)).
$$
for all $\phi\in\Phi(U,x)$.
This completes the proof.
\end{proof}

We provide one more result in the spirit of Proposition \ref{camarche}.
\begin{proposition}
\label{camarche1}
Let Assumption \ref{ae} hold and let $\Delta S_t$,
$0\leq t\leq T$ be a bounded process. Let (NA) hold with $\delta_t, \kappa_t$ of Proposition \ref{karakter}
being constant.
Then there exists a solution $\phi^*\in\Phi(U,x)$ of Problem \ref{leprobleme} which is a bounded process.
\end{proposition}
\begin{proof} In this case we note that
$$
U(x)\geq -U^-(x),\mbox{ for all }x\in\mathbb{R}
$$
holds instead of \eqref{errol} and $U^-$ is a continuous, hence also locally bounded non-negative function.
Thus in Lemmata \ref{Kbound} and \ref{bound}, assuming that $V(x)\geq -U^-(x)$ a.s. for all $x\in\mathbb{R}$, we obtain that
$\tilde{K}(x)$ (see \eqref{Ktilde}) is a locally bounded function of $x,N,1/\alpha,1/\beta,L$ and $U^-(-\lfloor x\rfloor^-)$ and $\bar{K}$ (see \eqref{BARK}) is a polynomial  function of $N,1/\alpha,1/\beta$ and $L$.
So one can imitate the proof of Proposition \ref{camarche} and get that the $\tilde{\xi}_t(\cdot)$
are also locally bounded. Hence the $V^{x,\phi^*}_{t-1}$ and $\phi^*_t$  as well  and we can conclude.
\end{proof}

\section{Conclusions}\label{se5}

One may try to prove a result similar to Theorem \ref{main} in continuous-time models. In the light of
results in \cite{jz}, however, serious limitations are encountered soon. In \cite{jz} the authors consider a setting where investors maximise
a functional possibly involving distorted probabilities. If we look at the particular case of no distortion (which is the setting
of our present paper), Theorem 3.2 of \cite{jz} implies
that taking $U(x)=x^{\alpha}$, $x>0$ and $U(x)=-(-x)^{\beta}$, $x\leq 0$ with $0<\alpha,\beta\leq 1$
the utility maximisation problem becomes ill-posed even in the simplest Black and Scholes model (in the presence of distortions the problem may be
well-posed).

On one hand, this
shows that there is a fairly limited scope for the extension of our results to continuous-time market models unless the set
of strategies is severely restricted (as in \cite{bkp}, \cite{cp} and \cite{cd}). On the other hand, this underlines
the versatility and power of discrete-time modeling. The advantageous properties present in the discrete-time setting do not always carry over to the continuous-time case
which is only an idealization of the real trading mechanism.

\section{Appendix}\label{qpp}

\subsection{Generalized conditional expectation}
Let
$W$ be a non-negative random variable on the probability space $(\Omega,\Im,P)$.
Let $\mathcal{H}\subset\Im$ be a sigma-algebra. Define (as in e.g. \cite{dm}), the generalized
conditional expectation by
\[
E(W\vert\mathcal{H}):=\lim_{n\to\infty}E(W\wedge n\vert\mathcal{H}),
\]
where the limit a.s. exists by monotonicity (but may be $+\infty$). In particular,
$EW$ is defined (finite or infinite). Note that if $EW<+\infty$, then the
generalized and the usual conditional expectations of $W$ coincide.

\begin{lemma}\label{ohh}
For all $A\in\mathcal{H}$ and all non-negative random variables $W$, the following equalities hold a.s.:
\begin{eqnarray}
\label{wax}
E(1_A E(W\vert\mathcal{H})) &=& E(W1_A)\\
\label{waxx}
E(W1_A\vert\mathcal{H}) & = & E(W\vert\mathcal{H})1_A.
\end{eqnarray}

Furthermore, $E(W\vert\mathcal{H})<+\infty$ a.s. if and only if there is a sequence $A_m\in\mathcal{H}$,
$m\in\mathbb{N}$ such that $E(W1_{A_m})<\infty$ for all $m$ and $\cup_m A_m=\Omega$.
In this case,
$E(W\vert\mathcal{H})$ is the Radon-Nykodim derivative of the sigma-finite measure
$\mu(A):=E(W1_A)$, $A\in\mathcal{H}$ with respect to $P$ on $(\Omega,{\mathcal{H}})$.
\end{lemma}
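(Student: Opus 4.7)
The strategy is to reduce everything to the classical bounded case via the monotone approximation $W_n:=W\wedge n\uparrow W$ built into the definition of the generalized conditional expectation, and then to apply monotone convergence on both sides of each identity.

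First I would prove \eqref{wax}. For each fixed $n$, the random variable $W_n$ is bounded and hence integrable, so the classical identity $E(1_A E(W_n\vert\mathcal{H}))=E(W_n 1_A)$ holds. On the left, $E(W_n\vert\mathcal{H})\uparrow E(W\vert\mathcal{H})$ by definition, and $1_A E(W_n\vert\mathcal{H})\uparrow 1_A E(W\vert\mathcal{H})$, so monotone convergence gives $E(1_A E(W_n\vert\mathcal{H}))\uparrow E(1_A E(W\vert\mathcal{H}))$; on the right, $W_n 1_A\uparrow W 1_A$ gives $E(W_n 1_A)\uparrow E(W 1_A)$. Equality in the limit yields \eqref{wax} (both sides possibly $+\infty$). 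For \eqref{waxx} the argument is the same, applied pointwise a.s.: the classical identity $E(W_n 1_A\vert\mathcal{H})=E(W_n\vert\mathcal{H})1_A$ holds, and since both sequences are a.s. monotone nondecreasing in $n$, letting $n\to\infty$ and using the definition on both sides yields \eqref{waxx}.

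Next I would handle the equivalence. For the ``if'' direction, given a sequence $A_m\in\mathcal{H}$ with $\cup_m A_m=\Omega$ and $E(W 1_{A_m})<\infty$, apply \eqref{waxx} to get $E(W\vert\mathcal{H})1_{A_m}=E(W 1_{A_m}\vert\mathcal{H})$ a.s. The right-hand side is the usual conditional expectation of an integrable random variable, hence finite a.s.; so $E(W\vert\mathcal{H})$ is finite on each $A_m$, and since the $A_m$ cover $\Omega$, finite a.s. on all of $\Omega$. For the ``only if'' direction, set $A_m:=\{E(W\vert\mathcal{H})\leq m\}\in\mathcal{H}$; these cover $\Omega$ up to a null set, and by \eqref{wax}
\[
E(W 1_{A_m})=E\bigl(1_{A_m}E(W\vert\mathcal{H})\bigr)\leq m<\infty.
\]

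Finally, for the Radon--Nikodym statement: $\mu(A):=E(W1_A)$ is a measure on $\mathcal{H}$ (countable additivity is monotone convergence applied to $\sum_k W 1_{A_k}$), and sigma-finiteness on $\mathcal{H}$ follows from the sequence $(A_m)$ just produced. The random variable $E(W\vert\mathcal{H})$ is $\mathcal{H}$-measurable by construction, and \eqref{wax} reads exactly $\int_A E(W\vert\mathcal{H})\,dP\vert_{\mathcal{H}}=\mu(A)$ for every $A\in\mathcal{H}$, so by uniqueness of the Radon--Nikodym derivative (on the sigma-finite measure $\mu$) we conclude. The only slightly delicate point, which I would be careful about, is that monotone convergence must be applied to the possibly $+\infty$-valued random variables $E(W_n\vert\mathcal{H})$; this is legitimate since each $E(W_n\vert\mathcal{H})$ is nonnegative and the standard MCT extends to $[0,+\infty]$-valued measurable limits, but it is the one place where the generalized (as opposed to classical) nature of the conditional expectation has to be tracked explicitly.
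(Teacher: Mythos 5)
Your proof is correct and follows essentially the same strategy as the paper's: reduce \eqref{wax} and \eqref{waxx} to the classical bounded case via the monotone approximation $W\wedge n$ and monotone convergence, prove ``only if'' by setting $A_m=\{E(W\vert\mathcal{H})\leq m\}$ and applying \eqref{wax}, and obtain the Radon--Nikodym characterization from \eqref{wax} plus sigma-finiteness. The one place you deviate is the ``if'' direction: the paper goes through the Radon--Nikodym theorem (the derivative of a sigma-finite measure is a.s.\ finite), whereas you use \eqref{waxx} directly together with the integrability of $W1_{A_m}$; your route is slightly more elementary and equally valid.
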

\begin{proof} Most of these facts are stated in section II.39 on page 33 of \cite{dm}.
We nevertheless give a quick proof for the sake of completeness. Let $A\in\mathcal{H}$ arbitrary. Then
\begin{eqnarray*}
E(1_A E(W\vert\mathcal{H}))& = & \lim_{n\to\infty} E(1_A E(W\wedge n\vert\mathcal{H})) \\
& = & \lim_{n\to\infty} E((W\wedge n)1_A) = E(W1_A)
\end{eqnarray*}
by monotone convergence and by the properties of ordinary conditional expectations.
Similarly, \eqref{waxx} is satisfied by monotone convergence and
by the properties of ordinary conditional expectations.

Now, if $A_m$ is a sequence as in the statement of Lemma \ref{ohh}, then $\mu$ is indeed sigma-finite and \eqref{wax} implies that $E(W\vert\mathcal{H})$
is the Radon-Nykodim derivative of $\mu$ with respect to $P$ on $(\Omega,\mathcal{H})$ and as
such, it is a.s. finite.

Conversely, if $E(W\vert\mathcal{H})<+\infty$ a.s. then define
$A_m:=\{E(W\vert\mathcal{H})\leq m\}$. We have, by \eqref{wax},
$$
E(W1_{A_m})=E(1_{A_m}E(W\vert\mathcal{H}))\leq m<\infty,
$$
showing the existence of a suitable sequence $A_m$.
\end{proof}

For a real-valued random variable $Z$ we may define,
if either $E(Z^+\vert\mathcal{H})<\infty$ a.s. or $E(Z^-\vert\mathcal{H})<\infty$ a.s.,
\[
E(Z\vert\mathcal{H}):=E(Z^+\vert\mathcal{H})-E(Z^-\vert\mathcal{H}).
\]
In particular, $E(Z)$ is defined if either $E(Z^+)<+\infty$ or $E(Z^-)<+\infty$.

\begin{lemma}\label{ujdonatuj}
If $E(Z)$ is defined then so is $E(Z\vert\mathcal{H})$ a.s. and $E(Z)=E(E(Z\vert\mathcal{H}))$.
\end{lemma}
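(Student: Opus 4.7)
The plan is to reduce to the case $E(Z^+) < \infty$ (the case $E(Z^-) < \infty$ follows by applying the result to $-Z$). First I would apply the last part of Lemma \ref{ohh} to $W = Z^+$ with the constant sequence $A_m := \Omega$ (so $E(W 1_{A_m}) = E(Z^+) < \infty$): this yields $E(Z^+ |\mathcal{H}) < \infty$ a.s., and consequently $E(Z|\mathcal{H}) = E(Z^+|\mathcal{H}) - E(Z^-|\mathcal{H})$ is well-defined in the generalized sense (possibly $-\infty$ on the set where $E(Z^-|\mathcal{H}) = \infty$).

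Next I would verify that $E(E(Z|\mathcal{H}))$ itself is defined. The trivial bound $E(Z|\mathcal{H}) \leq E(Z^+|\mathcal{H})$ gives $(E(Z|\mathcal{H}))^+ \leq E(Z^+|\mathcal{H})$, and by \eqref{wax} applied with $A = \Omega$,
\[
E((E(Z|\mathcal{H}))^+) \leq E(E(Z^+|\mathcal{H})) = E(Z^+) < \infty,
\]
so $E(E(Z|\mathcal{H}))$ exists in the generalized sense. To prove the equality $E(E(Z|\mathcal{H})) = E(Z)$ I would then split into two subcases. If $E(Z^-) < \infty$ as well, then by the same argument $E(Z^-|\mathcal{H})$ is also a.s.\ finite and integrable, both conditional expectations coincide with the ordinary ones, and linearity combined with \eqref{wax} yields directly $E(E(Z|\mathcal{H})) = E(Z^+) - E(Z^-) = E(Z)$.

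The delicate subcase is $E(Z^-) = \infty$, where $E(Z) = -\infty$ and I must show $E(E(Z|\mathcal{H})) = -\infty$. Here I would exploit the pointwise inequality $E(Z^-|\mathcal{H}) \leq (E(Z|\mathcal{H}))^- + E(Z^+|\mathcal{H})$, which holds a.s.\ (checked by considering separately the set where $E(Z^-|\mathcal{H}) < \infty$, where $-E(Z|\mathcal{H}) = E(Z^-|\mathcal{H}) - E(Z^+|\mathcal{H})$ and taking positive parts gives the bound, and its complement, where both sides equal $+\infty$); integrating and using \eqref{wax} gives
\[
\infty = E(Z^-) = E(E(Z^-|\mathcal{H})) \leq E((E(Z|\mathcal{H}))^-) + E(Z^+),
\]
so $E((E(Z|\mathcal{H}))^-) = \infty$ and hence $E(E(Z|\mathcal{H})) = -\infty$ as required. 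The main obstacle is the careful bookkeeping of the generalized conventions (sets of positive measure where a conditional expectation may equal $+\infty$, the convention $\infty - \text{finite} = \infty$) so that each invocation of Lemma \ref{ohh} and of linearity remains legitimate under the extended definitions; the key analytic step is the pointwise inequality bounding $E(Z^-|\mathcal{H})$.
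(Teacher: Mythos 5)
Your proof is correct and follows the same route as the paper: reduce to $E(Z^+)<\infty$, use \eqref{wax} to show $E(Z^{\pm})=E(E(Z^{\pm}\vert\mathcal{H}))$, then conclude. The paper stops after establishing these two equalities and leaves the subtraction step — passing from $E(E(Z^{+}\vert\mathcal{H}))$ finite and $E(E(Z^{-}\vert\mathcal{H}))=\infty$ to $E(E(Z^{+}\vert\mathcal{H})-E(Z^{-}\vert\mathcal{H}))=-\infty$ — implicit; your explicit handling of the subcase $E(Z^-)=\infty$ via the pointwise bound $E(Z^-\vert\mathcal{H})\leq (E(Z\vert\mathcal{H}))^- + E(Z^+\vert\mathcal{H})$ is exactly the missing bookkeeping, so your version is slightly more complete than the paper's.
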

\begin{proof}
We may suppose that e.g. $E(Z^+)<\infty$. Then $E(Z^+\vert\mathcal{H})$
exists (in the ordinary sense as well) and is finite, so $E(Z\vert\mathcal{H})$
exists a.s. Then, by \eqref{wax}, we have $E(Z^{\pm})=E(E(Z^{\pm}\vert\mathcal{H}))$.
\end{proof}

\begin{corollary}\label{nagyy}
Let $Z$ be a random variable and let $W$ be an  $\mathcal{H}$-measurable random variable.
Assume that there is a sequence $A_m\in\mathcal{H}$,
$m\in\mathbb{N}$ such that $\cup_m A_m=\Omega$ and $E(Z1_{A_m} |\mathcal{H})$ exists
and it is finite a.s. for all $m$. Then

(i) $E(Z|\mathcal{H})$ exists and it is finite a.s.

(ii) If $W1_{A_m}\leq E(Z1_{A_m} |\mathcal{H})$ a.s. for all $m$ then
$W\leq E(Z|\mathcal{H})$ a.s.

(iii) If $W1_{A_m}=E(Z1_{A_m} |\mathcal{H})$ a.s. for all $m$
then $W= E(Z|\mathcal{H})$ a.s.

\noindent This corollary applies, in particular, when $E(Z\vert\mathcal{H})$ is known to exist and to be finite a.s.
\end{corollary}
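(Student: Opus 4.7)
The plan is to deduce everything from identity \eqref{waxx} of Lemma \ref{ohh}, applied separately to $Z^+$ and $Z^-$, together with the covering property $\cup_m A_m = \Omega$.

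For part (i), I would apply \eqref{waxx} to the non-negative variables $Z^+$ and $Z^-$ to obtain, for every $m$,
\[
E(Z^{\pm}\,1_{A_m}\vert\mathcal{H}) \;=\; 1_{A_m}\,E(Z^{\pm}\vert\mathcal{H})\quad\text{a.s.}
\]
The hypothesis that $E(Z\,1_{A_m}\vert\mathcal{H})$ exists and is finite forces both $E(Z^{\pm}\,1_{A_m}\vert\mathcal{H})$ to be finite a.s., hence $E(Z^{\pm}\vert\mathcal{H})<+\infty$ on $A_m$ a.s. Since $\cup_m A_m=\Omega$, taking the union in $m$ yields $E(Z^{\pm}\vert\mathcal{H})<+\infty$ a.s., so $E(Z\vert\mathcal{H}):=E(Z^+\vert\mathcal{H})-E(Z^-\vert\mathcal{H})$ is well-defined and finite a.s.

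For parts (ii) and (iii), once (i) is in hand the two displayed identities can be subtracted a.s.\ without indeterminacy, yielding
\[
E(Z\,1_{A_m}\vert\mathcal{H}) \;=\; 1_{A_m}\bigl(E(Z^+\vert\mathcal{H})-E(Z^-\vert\mathcal{H})\bigr) \;=\; 1_{A_m}\,E(Z\vert\mathcal{H})\quad\text{a.s.}
\]
Therefore the hypothesis $W\,1_{A_m}\leq E(Z\,1_{A_m}\vert\mathcal{H})$ becomes $W\leq E(Z\vert\mathcal{H})$ on $A_m$ a.s.; since $\cup_m A_m=\Omega$, this proves (ii). Replacing the inequality by equality throughout gives (iii).

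There is no genuinely hard step here; the only point that demands care is the bookkeeping with the generalized conditional expectation convention, namely securing the finiteness of each $E(Z^{\pm}\vert\mathcal{H})$ before forming their difference. Identity \eqref{waxx} delivers precisely this, after which the covering property of $(A_m)_m$ reduces (ii) and (iii) to tautologies.
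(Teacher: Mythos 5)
Your proof is correct, and it takes a cleaner, more direct route than the paper's. The key observation you exploit is that \eqref{waxx} holds for \emph{any} non-negative random variable regardless of integrability, so $E(Z^{\pm}1_{A_m}\vert\mathcal{H})=1_{A_m}E(Z^{\pm}\vert\mathcal{H})$ a.s.\ delivers the finiteness of $E(Z^{\pm}\vert\mathcal{H})$ on each $A_m$ immediately, and then the covering property finishes (i). The paper instead proves (i) by passing through \emph{unconditional} integrability: it uses the second part of Lemma~\ref{ohh} to produce, for each $m$, a cover $(B^m_j)_j$ with $E(|Z|1_{A_m}1_{B^m_j})<\infty$, then re-indexes the double family $C(m,j)=A_m\cap B^m_j$ into a single sequence $(C_n)_n$ and applies Lemma~\ref{ohh} once more. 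Your argument avoids this double-cover construction entirely. For (ii)--(iii) the contrast is sharper: the paper argues by contradiction, localizing a positive-measure violation to some $C_n$ where $|Z|$ is unconditionally integrable and then integrating both sides; you instead observe that, once (i) is secured, the identity $E(Z1_{A_m}\vert\mathcal{H})=1_{A_m}E(Z\vert\mathcal{H})$ follows a.s.\ by subtracting the two instances of \eqref{waxx}, and the hypothesis becomes the conclusion pointwise on each $A_m$, with no integration or contradiction required. Both proofs are sound; yours is shorter and stays entirely at the level of conditional expectations, while the paper's localization to sets of finite unconditional $|Z|$-mass is a detour that its own identity \eqref{waxx} makes unnecessary.
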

\begin{remark}
As in $(ii)$ or $(iii)$, $W1_{A_m}\leq E(Z1_{A_m} |\mathcal{H})$ a.s. and  $E(Z1_{A_m} |\mathcal{H}) < \infty$ a.s. for all $m$, $W$ is necessarily finite on $\cap_{m\in\mathbb{N}}\{W1_{A_m}< \infty\}$ which is of full measure.
\end{remark}
\begin{proof}[of Corollary \ref{nagyy}.]
Fix some $m$ such that $E(Z1_{A_m} |\mathcal{H})$ exists
and it is finite a.s., then $E(|Z|1_{A_m} |\mathcal{H})$ is also finite a.s. and by Lemma \ref{ohh}
there exists  a sequence $(B^{m}_j)_j$ such that $\cup_j B^{m}_j=\Omega$ and
$E(|Z|1_{A_m}1_{B_j^{m}})<\infty$ for all $j$.

Then the sets $C(m,j):=A_m\cap B^{m}_j$ are such that
$\cup_{m,j}C(m,j)=\Omega$. Let $C_n,n\in\mathbb{N}$ be the enumeration of all
the sets $C(m,j)$.  We clearly have
$E(|Z| 1_{C_n})<\infty$ for all $n$. Hence, by Lemma \ref{ohh}, $E(|Z||\mathcal{H})<\infty$ and
thus $E(Z|\mathcal{H})$ exists and is finite a.s.

Suppose that, e.g., $\{W>E(Z\vert\mathcal{H})\}$ on a set of positive measure.
Then there is $n$ such that $G:=C_n\cap \{W>E(Z\vert\mathcal{H})\}$ has positive
measure. There is also $m$ such that $C_n\subset A_m$. Then
$E(|Z|1_G)\leq E(|Z|1_{C_n})<\infty$ and
\begin{eqnarray*}
E(E(Z\vert\mathcal{H})1_G)  &=& E(E(Z1_{A_m}\vert\mathcal{H})1_G)\geq
E(W1_{A_m}1_G)=E(W1_G),
\end{eqnarray*}
but this contradicts the choice of $G$, showing $W\leq E(Z|\mathcal{H})$ a.s.
Arguing similarly for $\{W<E(Z\vert\mathcal{H})\}$ we can get (iii)
as well.
\end{proof}

\begin{lemma}\label{lebbencs}
Let $Z_n$ be a sequence of random variables with $|Z_n|\leq W$ a.s., $n\in\mathbb{N}$ converging to $Z$ a.s. If $E(W\vert\mathcal{H})<\infty$
a.s. then $E(Z_n\vert\mathcal{H})\to E(Z\vert\mathcal{H})$ a.s.
\end{lemma}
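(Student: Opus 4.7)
The plan is to reduce the statement to the classical conditional dominated convergence theorem by localizing onto sets where $W$ (and hence every $Z_n$) is genuinely integrable, and then patching the resulting a.s.\ convergences together.

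First I would check that both sides are meaningful. Since $|Z_n|\leq W$ and $Z_n\to Z$ a.s., we also have $|Z|\leq W$ a.s., hence $E(Z^{\pm}|\mathcal{H})\leq E(W|\mathcal{H})<\infty$ a.s., and similarly for $Z_n$. Thus $E(Z|\mathcal{H})$ and each $E(Z_n|\mathcal{H})$ are well defined and finite a.s.

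Next, invoking Lemma \ref{ohh}, the assumption $E(W|\mathcal{H})<\infty$ a.s.\ supplies a sequence $A_m\in\mathcal{H}$ with $\cup_m A_m=\Omega$ and $E(W 1_{A_m})<\infty$ for every $m$. Fix $m$. Then $|Z_n 1_{A_m}|\leq W 1_{A_m}$ with $W 1_{A_m}$ genuinely integrable, and $Z_n 1_{A_m}\to Z 1_{A_m}$ a.s., so the ordinary conditional dominated convergence theorem yields
\[
E(Z_n 1_{A_m}\vert\mathcal{H})\;\longrightarrow\;E(Z 1_{A_m}\vert\mathcal{H})\quad\text{a.s.}
\]
Applying \eqref{waxx} to the non-negative parts $Z_n^{\pm}$ and $Z^{\pm}$ (each of which has finite ordinary expectation on $A_m$ since they are bounded above by $W$) and subtracting, we identify these conditional expectations as $1_{A_m} E(Z_n|\mathcal{H})$ and $1_{A_m} E(Z|\mathcal{H})$ respectively. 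Hence $E(Z_n|\mathcal{H})\to E(Z|\mathcal{H})$ a.s.\ on $A_m$.

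Since $\cup_m A_m=\Omega$, the pointwise convergence holds a.s.\ on all of $\Omega$, which is the desired conclusion. I do not anticipate any real obstacle: the only delicate point is extending \eqref{waxx} from non-negative to signed integrable random variables, which is immediate by splitting into positive and negative parts, both of which lie below $W$ and thus have finite (unconditional) expectation on $A_m$.
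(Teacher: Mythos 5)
Your proof is correct and follows essentially the same route as the paper: localize via Lemma \ref{ohh} to sets $A_m$ where $W$ is integrable, apply the ordinary conditional dominated convergence theorem on each $A_m$, identify via \eqref{waxx}, and patch together. The paper's version is just terser, replacing the cover by a partition and omitting the routine verification details you spelled out.
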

\begin{proof}
Let $A_m\in\mathcal{H}$ be a partition of $\Omega$ such that $E(W1_{A_m})<\infty$ for all $m$.
Fixing $m$, the statement follows on $A_m$ by the ordinary conditional Lebesgue theorem. Since the $A_m$ form a partition, it holds
a.s. on $\Omega$.
\end{proof}

\begin{corollary}\label{jensenjensen}
Let $g:\mathbb{R}\to\mathbb{R}$ be convex and bounded from below. Let $E(Z\vert\mathcal{H})$
exist and be finite a.s. Then
$$
E(g(Z)\vert\mathcal{H})\geq g(E(Z\vert\mathcal{H}))\mbox{ a.s.}
$$
\end{corollary}
\begin{proof} We may and will assume $g(0)=0$. Define
$B:=\{E(g(Z)\vert\mathcal{H})<\infty\}$. The inequality is trivial on the complement of $B$.

As $E(|Z|\vert\mathcal{H})<\infty$ a.s. and $E( |g(Z)| 1_B \vert \mathcal{H})<\infty$ a.s. (recall that $g$ is bounded from below),
from Lemma \ref{ohh}, one can find a sequence $A_m$ such that $\cup_m A_m=\Omega$ and both
$E(|Z|1_{A_m})<\infty$ and $E( |g(Z)| 1_{A_m}1_B)<\infty$ hold true for all $m$. From the ordinary (conditional) Jensen inequality we clearly have
$$
1_B E(g(Z)1_{A_m}|\mathcal{H})=E(g(Z1_{A_m}1_B)\vert\mathcal{H})
\geq g(E(Z1_{A_m}1_B\vert\mathcal{H}))=g(E(Z\vert\mathcal{H}))1_{A_m}1_B, \mbox{ a.s.}
$$
for all $m$,
and the statement follows if we can apply Corollary \ref{nagyy}, i.e. if $E(g(Z) 1_{A_m}\vert\mathcal{H})$
exists and it is finite a.s. This holds true by the choice of $A_m$.
\end{proof}

\subsection{Further useful results}

We start with a simple but useful Lemma.
\begin{lemma}\label{indis}
Let $(\Omega, {\cal H},P)$ a probability space.
Let $U$ and $V$ from $\Omega \times \mathbb{R} $ to $\mathbb{R}$ such that for all $x \in \mathbb{R}$,
$U(\cdot,x), V(\cdot, x)$ are ${\cal H}$-measurable.
Assume that for a.e. $\omega$, $U(\omega,\cdot)$ and  $V(\omega, \cdot)$ are either both right-continuous or
both left-continuous. \\
(i) If for all $q \in \mathbb{Q}$, $U(\cdot,q) \leq V(\cdot,q)$ a.s.
then a.s., $U(\cdot,x) \leq V(\cdot,x), \mbox{ for all } x \in \mathbb{R}.$\\
(ii) If for all $q \in \mathbb{Q}$, $U(\cdot,q) = V(\cdot,q)$ a.s.
then a.s., $U(\cdot,x) = V(\cdot,x), \mbox{ for all } x \in \mathbb{R}.$
\end{lemma}
\begin{proof}
Assume that $U$ and $V$ are a.e. left-continuous and let us prove (i) (the proof of (ii) is similar).
We denote by $$\bar{\Omega}=\{\omega\,|\, U(\cdot,\omega) \mbox{ is left-continuous}\}\cap \{\omega\,|\, V(\omega,\cdot) \mbox{ is left-continuous}\}
\cap \left(\cap_{q \in \mathbb{Q}}\{U(\cdot,q) \leq V(\cdot,q)\}\right).$$
Clearly $P(\bar{\Omega})=1$. Let $\omega \in \bar{\Omega}$. Let $x \in \mathbb{R}$. There exists $(q_p)_p \subset \mathbb{Q}$ such that $q_p\nearrow x$. Then, by definition of
$\bar{\Omega}$, $U(\omega,q_p) \to U(\omega,x)$ and $V(\omega,q_p) \to V(\omega,x)$. As $U(\omega,q_p) \leq V(\omega,q_p)$ again by definition of
$\bar{\Omega}$, we get that $U(\omega,x) \leq V(\omega,x)$ and the result is proved.
\end{proof}

\begin{lemma}
\label{gast}
Let $(\Omega, {\cal H},P)$ be a complete probability space.
Let $F:\Omega\times\mathbb{R}^d\to\mathbb{R}$ be a function such that for almost all $\omega\in\Omega$,
$F(\omega,\cdot)$ is continuous and for each $y\in\mathbb{R}^d$, $F(\cdot,y)$ is ${\cal H}$-measurable. Let $K>0$ be
an ${\cal H} $-measurable random variable.

Set $f(\omega)=\mathrm{ess.}\sup_{\xi\in \Xi, |\xi|\leq  K} F(\omega,\xi(\omega))$.  Then,
for almost all $\omega$,
\begin{eqnarray}
\label{lavieestbienfaite2}
f(\omega) & = & \sup_{y \in \mathbb{R}^d, |y| \leq K(\omega) } F(\omega,y).
\end{eqnarray}
\end{lemma}
\begin{proof}
By p. 70 of \cite{CV77}, $F$ is $\mathcal{H}\otimes\mathcal{B}(\mathbb{R}^d)$-measurable and so is
$$
\sup_{y \in \mathbb{R}^d, |y| \leq K(\omega) } F(\omega,y)=\sup_{y \in \mathbb{Q}^d, |y| \leq K(\omega) } F(\omega,y).
$$
Hence $\sup_{y \in \mathbb{R}^d, |y| \leq K(\omega) } F(\omega,y) \geq  f(\omega)$ a.s.
by the definition of essential supremum.
Assume that the inequality is strict with positive probability. Then for some $\varepsilon>0$
the set
$$A=\{(\omega,y)\in \Omega \times  \mathbb{R}^d \ : \ |y| \leq K(\omega); \ F(\omega,y) -f(\omega)
\geq \varepsilon \}$$
has a projection $A'$ on $\Omega$ with $P(A')>0$.
Recall that $\omega \rightarrow F(\omega,\xi(\omega))$ is ${\cal H} $-measurable for $\xi\in \Xi$.
By definition of the essential supremum, $f$ is ${\cal H} $-measurable and
hence $A \in {\cal H} \otimes {\cal B}(\mathbb{R}^d)$. The measurable selection theorem (see for example
Proposition III.44 in \cite{dm}) applies and
there exists some  ${\cal H} $-measurable random variable $\eta$ such that $(\omega, \eta(\omega))\in A$
for $\omega\in A'$ (and  $\eta(\omega)=0$ on the complement of $A'$). This leads to a contradiction since for all $\omega \in A'$,
$f(\omega)<F(\omega,\eta(\omega))$ by the construction of $\eta$ and
$f(\omega)\geq
F(\omega,\eta(\omega))$ a.s. by the definition of $f$.
\end{proof}

\section*{Acknowledgments.} Part of this work was carried out while
the first author was affiliated to the University of Paris 7 and the second one to the University of Edinburgh. The first author thanks LPMA (UMR 7599) for support.
The authors thank an anonymous referee for his/her detailed and helpful comments.
The second author thanks Teemu Pennanen for drawing his attention to several important references.

\end{document}